\newcommand{\defn}[1]{{\textit{\textbf{\boldmath #1}}}\xspace}
\renewcommand{\paragraph}[1]{\vspace{0.09in}\noindent{\bf \boldmath #1.}}
\renewcommand{\epsilon}{\varepsilon}
\DeclareMathOperator{\polylog}{\textnormal{polylog}}
\DeclareMathOperator{\poly}{\textnormal{poly}}
\newcommand{\floor}[1]{\left\lfloor #1 \right\rfloor}
\newcommand{\ceil}[1]{\left\lceil #1 \right\rceil}
\newcommand{\paren}[1]{\left( #1 \right)}
\DeclareMathOperator{\fil}{\text{\textup{fill}}}
\DeclareMathOperator{\rep}{\text{\textup{rep}}}
\DeclareMathOperator{\alg}{\text{\textup{alg}}}
\DeclareMathOperator{\randalg}{\text{\textup{randalg}}}
\DeclareMathOperator{\flatalg}{\text{\textup{flatalg}}}
\DeclareMathOperator{\trivalg}{\text{\textup{trivalg}}}
\DeclareMathOperator{\skips}{\text{\textup{skips}}}
\crefname{equation}{}{} 
\crefname{enumi}{Step}{} 
\newtheorem{definition}{Definition}[section]
\newtheorem{proposition}{Proposition}[section]
\newtheorem{clm}{Claim}[section]
\newtheorem{lemma}{Lemma}[section]
\newtheorem{corollary}{Corollary}[section]
\newtheorem{theorem}{Theorem}[section]
\title{The Variable-Processor Cup Game}
\date{\vspace{-5ex}}
\author{William Kuszmaul\thanks{Supported by a Hertz fellowship
and a NSF GRFP fellowship. Research was sponsored in part by the United States Air Force Research Laboratory and the United States Air Force Artificial Intelligence Accelerator and was accomplished under Cooperative Agreement Number FA8750-19-2-1000. The views and conclusions contained in this document are those of the authors and should not be interpreted as representing the official policies, either expressed or implied, of the United States Air Force or the U.S. Government. The U.S. Government is authorized to reproduce and distribute reprints for Government purposes notwithstanding any copyright notation herein.}, Alek Westover\thanks{Partially
supported by MIT PRIMES.}}
\affil{Massachusetts Institute of Technology}
\affil{\small \texttt{kuszmaul@mit.edu, alek.westover@gmail.com}}
\begin{document}
\maketitle
\begin{abstract}
  
The problem of scheduling tasks on $p$ processors so that no
task ever gets too far behind is often described as a game
with cups and water. In the $p$-processor cup game on $n$ cups,
there are two players, a filler and an emptier, that take turns
adding and removing water from a set of $n$ cups. In each turn,
the filler adds $p$ units of water to the cups, placing at most
$1$ unit of water in each cup, and then the emptier selects $p$
cups to remove up to $1$ unit of water from. The emptier's goal
is to minimize the backlog, which is the height of the fullest
cup.

The $p$-processor cup game has been studied in many different
settings, dating back to the late 1960's. All of the past work
shares one common assumption: that $p$ is fixed. This paper
initiates the study of what happens when the number of available
processors $p$ varies over time, resulting in what we call the
\emph{variable-processor cup game}.

Remarkably, the optimal bounds for the variable-processor cup
game differ dramatically from its classical counterpart. Whereas
the $p$-processor cup has optimal backlog $\Theta(\log n)$, the
variable-processor game has optimal backlog $\Theta(n)$.
Moreover, there is an efficient filling strategy that yields
backlog $\Omega(n^{1 - \epsilon})$ in quasi-polynomial time against any
deterministic emptying strategy.

We additionally show that straightforward uses of randomization
cannot be used to help the emptier. In particular, for any
positive constant $\Delta$, and any $\Delta$-greedy-like randomized
emptying algorithm $\mathcal{A}$, there is a filling strategy
that achieves backlog $\Omega(n^{1 - \epsilon})$ against $\mathcal{A}$ in
quasi-polynomial time.
\end{abstract}

\thispagestyle{fancy} 

\section{Introduction}\label{sec:intro}

A fundamental challenge in processor scheduling is how to perform
load balancing, that is, how to share processors among tasks in
order to keep any one task from getting too far behind. Consider
$n$ tasks executing in time slices on $p < n$ processors. During
each time slice, a scheduler must select $p$ (distinct) tasks
that will be executed during the slice; up to one unit of work is
completed on each executed task.
During the same time slice, however, up to $p$ units of \emph{new
work} may be allocated to the tasks, with different tasks
receiving different amounts of work. The goal of a load-balancing
scheduler is to bound the backlog of the system, which is defined
to be the maximum amount of uncompleted work for any task.

As a convention, the load balancing problem is often described as
a game involving water and cups. The \defn{$p$-processor cup
game} is a multi-round game with two players, an \defn{emptier}
and a \defn{filler}, that takes place on $n$ initially empty
cups. At the beginning of each round, the filler adds up to $p$
units of water to the cups, subject to the constraint that each
cup receives at most $1$ unit of water. The emptier then selects
up to $p$ distinct cups and removes up to $1$ unit of water from
each of them. The emptier's goal is to minimize the amount of
water in the fullest cup, also known as the \defn{backlog}. In
terms of processor scheduling, the cups represent tasks, the
water represents work assigned to each task, and the emptier
represents a scheduling algorithm.

Starting with the seminal paper of Liu \cite{Liu69}, work on the $p$ processor cup game has spanned more than five decades \cite{BaruahCoPl96,GkasieniecKl17,BaruahGe95,LitmanMo11,LitmanMo05,MoirRa99,BarNi02,GuanYi12,Liu69, LiuLa73,DietzRa91, BenderFaKu19, Kuszmaul20, AdlerBeFr03, DietzSl87, LitmanMo09}. In addition to processor scheduling \cite{BaruahCoPl96,GkasieniecKl17,BaruahGe95,LitmanMo11,LitmanMo05,MoirRa99,BarNi02,GuanYi12,Liu69, LiuLa73, AdlerBeFr03, LitmanMo09, DietzRa91}, applications include network-switch buffer management~\cite{Goldwasser10,AzarLi06,RosenblumGoTa04,Gail93}, quality of service guarantees~\cite{BaruahCoPl96,AdlerBeFr03,LitmanMo09}, and data structure deamortization \cite{AmirFaId95,DietzRa91,DietzSl87,AmirFr14,Mortensen03,GoodrichPa13,FischerGa15,Kopelowitz12,BenderDaFa20}.

The game has also been studied in many different forms. Researchers have studied the game with a fixed-filling-rate constraint \cite{BaruahCoPl96,GkasieniecKl17,BaruahGe95,LitmanMo11,LitmanMo05,MoirRa99,BarNi02,GuanYi12,Liu69, LiuLa73}, with various forms of resource augmentation \cite{BenderFaKu19, Kuszmaul20, LitmanMo09, DietzRa91}, with both oblivious and adaptive adversaries \cite{AdlerBeFr03,BaruahCoPl96,Liu69,BenderFaKu19, Kuszmaul20}, with smoothed analysis \cite{Kuszmaul20, BenderFaKu19}, with a semi-clairvoyant emptier \cite{LitmanMo09}, with competitive analysis \cite{Bar-NoyFrLa02, FleischerKo04, DamaschkeZh05}, etc.

For the plain form of the $p$-processor cup game, the greedy
emptying algorithm (i.e., always empty from the fullest cups) is
known to be asymptotically optimal \cite{AdlerBeFr03,
BenderFaKu19, Kuszmaul20}, achieving backlog $O(\log n)$. The
optimal backlog for randomized emptying algorithms remains an
open question \cite{DietzRa91, BenderFaKu19, Kuszmaul20} and is
known to be between $\Omega(\log \log n)$ and $O(\log \log n +
\log p)$ \cite{Kuszmaul20}.

\paragraph{This paper: varying resources}
Although cup games have been studied in many forms, all of the
prior work on cup games shares one common assumption: the number
$p$ of processors is fixed.

In modern computing, however, computers are often shared among
multiple applications, users, and even virtual OS's. The result
is that the amount of resources (e.g., memory, processors, cache,
etc.) available to a given application may fluctuate over time.
The problem of handling cache fluctuations has received extensive
attention in recent years (see work on cache-adaptive analysis
\cite{CA1, CA2, CA3, CA4, CA5}), but the problem of handling a
varying number of processors remains largely unstudied.

This paper introduces the \defn{variable-processor cup game}, in
which the filler is allowed to \emph{change} $p$ (the total
amount of water that the filler adds, and the emptier removes,
from the cups per round) at the beginning of each round. Note
that we do not allow the resources of the filler and emptier to
vary separately. That is, as in the standard game, we take the
value of $p$ for the filler and emptier to be identical in each
round. This restriction is crucial since, if the filler is
allowed more resources than the emptier, then the filler could
trivially achieve unbounded backlog.

A priori having variable resources offers neither player a clear
advantage. When the number $p$ of processors is fixed, the greedy
emptying algorithm (i.e., always empty from the fullest cups), is
known to achieve backlog $O(\log n)$ \cite{AdlerBeFr03,
BenderFaKu19, Kuszmaul20} regardless of the value of $p$. This
seems to suggest that, when $p$ varies, the correct backlog
should remain $O(\log n)$. In fact, when we began this project,
we hoped for a straightforward reduction between the two versions
of the game.

\paragraph{Results}
We show that the variable-processor cup game is \emph{not}
equivalent to the standard $p$-processor game. By strategically
controlling the number $p$ of processors, the filler can achieve
substantially larger backlog than would otherwise be possible.

We begin by constructing filling strategies against deterministic
emptying algorithms. We show that for any positive constant
$\epsilon$, there is a filling strategy that achieves backlog
$\Omega(n^{1 - \epsilon})$ within $2^{\polylog(n)}$ rounds.
Moreover, if we allow for $n!$ rounds, then there is a filling
strategy that achieves backlog $\Omega(n)$. In contrast, for the
$p$-processor cup game with any fixed $p$, the backlog never
exceeds $O(\log n)$.

Our lower-bound construction is asymptotically optimal. By
introducing a novel set of invariants, we deduce that the greedy
emptying algorithm never lets backlog exceed $O(n)$.

A natural question is whether \emph{randomized} emptying
algorithms can do better. In particular, when the emptier is
randomized, the filler is taken to be \defn{oblivious}, meaning
that the filler cannot see what the emptier does at each step.
Thus the emptier can potentially use randomization to obfuscate
their behavior from the filler, preventing the filler from being
able to predict the heights of cups.

When studying randomized emptying strategies, we restrict
ourselves to the class of \defn{greedy-like emptying strategies}.
This means that the emptier never chooses to empty from a cup $c$
over another cup $c'$ whose fill is more than $\omega(1)$ greater
than the fill of $c$. All of the known randomized emptying
strategies for the classic $p$-processor cup game are greedy-like
\cite{BenderFaKu19, Kuszmaul20}.

Remarkably, the power of randomization does not help the emptier
very much in the variable-processor cup game. For any constant
$\epsilon > 0$, we give an oblivious filling strategy that
achieves backlog $\Omega(n^{1 - \epsilon})$ in quasi-polynomial
time (with probability $1 - 2^{-\polylog n}$), no matter what
(possibly randomized) greedy-like strategy the emptier follows. 

Our results combine to tell a surprising story. They suggest that
the problem of varying resources poses a serious theoretical
challenge for the design and analysis of load-balancing
scheduling algorithms. There are many possible avenues for future
work. Can techniques from beyond worst-case analysis (e.g.,
smoothing, resource augmentation, etc.) be used to achieve better
bounds on backlog? What about placing restrictions on the filler
(e.g., bounds on how fast $p$ can change), allowing the emptier
to be able to be semi-clairvoyant, or making stochastic
assumptions on the filler? We believe that all of these questions
warrant further attention.

\section{Preliminaries}\label{sec:prelims}
The cup game consists of a sequence of rounds. Let $S_t$ denote the state of of the cups at the start of round $t$. At the beginning of the round, the filler chooses the number
of processors $p_t$ for the round. Next, the filler distributes
$p_t$ units of water among the cups (with at most $1$ unit of
water to any particular cup). Now the game is at the
\defn{intermediate state} for round $t$, which we call state $I_t$.
Finally the emptier chooses $p_t$ cups to empty at most $1$ unit
of water from, which marks the conclusion of round $t$. The state
is then $S_{t+1}$.

If the emptier empties from a cup $c$ on round $t$ such that the fill
of $c$ is less than $1$ in state $I_t$, then $c$ now has $0$ fill (not
negative fill); we say that the emptier \defn{zeroes out} $c$ on round
$t$. Note that on any round where the emptier zeroes out a cup the
emptier has removed less total fill from the cups than the filler has
added to the cups; hence the average fill of the cups has increased.

We denote the fill of a cup $c$ at state $S$ by $\fil_S(c)$. Let
the \defn{mass} of a set of cups $X$ at state $S$ be $m_S(X) =
\sum_{c\in X} \fil_S(c)$. Denote the average fill of a set of
cups $X$ at state $S$ by $\mu_S(X)$. Note that $\mu_S(X) |X| =
m_S(X)$. Let the \defn{backlog} at state $S$ be $\max_c
\fil_S(c)$, let the \defn{anti-backlog} at state $S$ be $\min_c
\fil_S(c).$
Let the \defn{rank} of a cup at a given state be its position in
the list of the cups sorted by fill at the given state, breaking
ties arbitrarily but consistently. For example, the fullest cup
at a state has rank $1$, and the least full cup has rank $n$. Let
$[n] = \{1,2,\ldots, n\}$, let $i+[n] = \{i+1, i+2, \ldots,
i+n\}$. For a state $S$, let $S(r)$ denote the rank $r$ cup at
state $S$, and let $S(\{r_1,r_2,\ldots, r_m\})$ denote the set of
cups of ranks $r_1, r_2,\ldots r_m$.

As a tool in the analysis we define a new variant of the cup
game: the \defn{negative-fill} cup game. In the negative-fill cup
game, when the emptier empties from a cup, the cup's fill always
decreases by exactly $1$, i.e. there is no zeroing out. We refer
to the standard version of the cup game where cups can zero out
as the \defn{standard-fill} cup game when necessary for clarity.

The notion of negative fill will be useful in our lower-bound
constructions, in which we want to construct a strategy for the filler
that achieves large backlog. By analyzing a filling strategy on the
negative-fill game, we can then reason about what happens if we apply
the same filling strategy recursively to a set of cups $S$ whose
average fill $\mu$ is larger than $0$; in the recursive application,
the average fill $\mu$ acts as the ``new $0$'', and fills less than
$\mu$ act as negative fills.

Note that it is strictly easier for the filler to achieve high backlog
in the standard-fill cup game than in the negative-fill cup game;
hence a lower bound on backlog in the negative-fill cup game also
serves as a lower bound on backlog in the standard-fill cup game.  On
the other hand, during the upper bound proof we use the standard-fill
cup game: this makes it harder for the emptier to guarantee its upper
bound.

\paragraph{Other Conventions} When discussing the state of the cups
\defn{at a round \boldmath $t$}, we will take it as given that we are
referring to the starting state $S_t$ of the round. Also, when
discussing sets, we will use $XY$ as a shorthand for $X \cup
Y$. Finally, when discussing the average fill $\mu_{S_t}(X)$ of a set
of cups, we will sometimes ommit the subscript $S_t$ when the round
number is clear.

\section{Technical Overview}
\label{sec:technical_overview}

In this section we present proof sketches and discuss the main technical ideas for our results. 

\subsection{Adaptive Lower Bound}\label{sec:technicaladaptive}
In \cref{sec:adaptive} we provide an adaptive filling strategy that achieves
backlog $\Omega(n^{1 - \epsilon})$; in this subsection we sketch the
proof of the result.

First we note that there is a trivial algorithm, that we call
\defn{$\trivalg$}, for achieving backlog at least $1/2$ on at
least $2$ cups in time $O(1)$.

The essential ingredient in our lower-bound construction is what we
call the \defn{Amplification Lemma}. The lemma takes as input a
filling strategy that achieves some backlog curve $f$ (i.e., on $n$
cups, the strategy achieves backlog $f(n)$), and outputs a new filling
strategy that achieves a new amplified backlog curve $f'$.

\begin{lemma}[\cref{lem:adaptiveAmplification}]
  Let $\alg(f)$ be a filling strategy that achieves backlog
  $f(n)$ on $n$ cups (in the negative-fill cup game). There
  exists a filling strategy $\alg(f')$,
  the \defn{amplification} of $\alg(f)$, that achieves backlog at
  least $$f'(n) \ge (1-\delta) f(\floor{(1-\delta)n}) +
  f(\ceil{\delta n}).$$
\end{lemma}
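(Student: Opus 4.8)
The plan is to exploit the variable number of processors to use $\alg(f)$ as a black-box subroutine on two different subsets of cups and to \emph{stack} the two resulting backlogs. Partition the $n$ cups into a large ``reservoir'' $A$ with $n_1 = \floor{(1-\delta)n}$ cups and a small ``target set'' $B$ with $n_2 = \ceil{\delta n}$ cups. The heart of the argument is to reach a state in which the average fill $\mu(B)$ is at least $(1-\delta)f(n_1)$; once there, the filler simply runs $\alg(f)$ on $B$ alone, using $p_t \le n_2$ processors each round. Since the negative-fill game is translation-invariant, running $\alg(f)$ from a configuration of average $\mu(B)$ pushes some cup of $B$ to fill at least $\mu(B)+f(n_2) \ge (1-\delta)f(n_1)+f(n_2)$; moreover $\mu(B)$ cannot decrease while this runs, since the filler pours all $p_t$ units into $B$ and the emptier removes at most $p_t$ units from $B$. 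So the whole lemma reduces to the ``pumping'' step: raising $\mu(B)$ to $(1-\delta)f(n_1)$.

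For the pumping step I would iterate two moves. \textbf{Move 1 (build):} flatten $A$ (via \flatalg) and then run $\alg(f)$ on $A$ with $p_t \le n_1$ processors, producing a cup $a^\star \in A$ of fill $\mu(A)+f(n_1)$. \textbf{Move 2 (transfer):} repeatedly play rounds with $p_t = n_2$ in which the filler pours one unit into each cup of $B$ and nothing into $A$. On such a round the filler adds $n_2$ units to $B$; every empty the emptier spends on $A$ instead of $B$ raises $m(B)$ by one and lowers $m(A)$ by one, and since $a^\star$ is the fullest cup an emptier controlling backlog keeps emptying it, so fill drains out of $a^\star$ into $m(B)$ at one unit per round until $a^\star$ reaches the current level of $B$. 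Using $m(A)=-m(B)$ (mass is conserved in the negative-fill game) and $\mu(A)=-m(B)/n_1$, one iteration of build-then-transfer sends $\mu(B) \mapsto \mu(B) + \tfrac1{n_2}\bigl(f(n_1)-\tfrac{n}{n_1}\mu(B)\bigr)$, whose fixed point is $\tfrac{n_1}{n}f(n_1) \approx (1-\delta)f(n_1)$; this is exactly where the loss factor $(1-\delta)$ comes from (draining $a^\star$ into $B$ unavoidably depresses $m(A)$, which shrinks the next $a^\star$). Iterating $\poly(n)\cdot f(n_1)$ rounds drives $\mu(B)$ above the target, and the total running time is (number of iterations)$\times$(time of $\alg(f)$ on $n_1$ cups $+\, O(f(n_1))$), which stays quasi-polynomial whenever $f$ is polynomially bounded.

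The main obstacle is that the emptier is adaptive and worst-case, not greedy, so it need not cooperate: during Move 1 it can spend empties on $B$ (driving $m(B)$ down), and during Move 2 it can refuse to empty $a^\star$ and instead empty cups of $B$ or negative cups of $A$. The classical cup-game dichotomy rescues us --- either the emptier keeps $a^\star$ in check and the transfer makes the claimed progress on $m(B)$, or it lets $a^\star$ grow, in which case backlog already exceeds the target and we stop --- but converting this into a clean invariant requires a potential function that simultaneously credits $\mu(B)$ and the height of the tallest cup of $A$, and requires choosing $p_t$ in each round so that the emptier's only ``sabotage'' options are precisely the ones the dichotomy covers. Two secondary points must also be handled carefully: the passage from ``$\alg(f)$ achieves $f(m)$ from the all-zero start'' to ``$\alg(f)$ achieves $\mu+f(m)$ from an arbitrary average-$\mu$ configuration,'' which I would get from translation-invariance together with the explicit \flatalg step; and the floor/ceiling arithmetic, since the naive equilibrium $\tfrac{n_1}{n}f(n_1)$ is a hair below $(1-\delta)f(\floor{(1-\delta)n})$, so the transfer move may need a small adjustment (e.g.\ also draining one or two extra cups of $A$ per round by using $p_t$ slightly above $n_2$) to close the gap.
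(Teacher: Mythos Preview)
Your proposal has a genuine gap in Move~2 (transfer). With $p_t=n_2$ and the filler pouring one unit into each cup of $B$ and nothing into $A$, an adversarial emptier can simply empty exactly those $n_2$ cups of $B$. Then $m(B)$ is unchanged, $m(A)$ is unchanged, and crucially $a^\star$ is unchanged too, since the filler adds nothing to it during Move~2. Your dichotomy ``either the emptier drains $a^\star$ into $B$, or $a^\star$ grows past the target'' is therefore false: the emptier can stall the process indefinitely with $a^\star$ sitting at roughly $\mu(A)+f(n_1)$, which after the first build is only about $f(n_1)$---no amplification. The potential-function patch you sketch cannot rescue this, because on a stalling round neither $\mu(B)$ nor the height of $a^\star$ moves. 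Move~1 has a parallel problem you flagged but did not resolve: while $\alg(f)$ runs on $A$, the emptier can spend its $p_t\le n_1$ empties on the unprotected set $B$, driving $m(B)$ down by as much as $n_1\cdot T(n_1)$, swamping any later gains.

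The paper's fix is to reverse the roles of your two sets. The \emph{small} set (size $\lceil\delta n\rceil$) is the anchor $A$; the filler places one unit into \emph{every} cup of $A$ on every round, and simultaneously runs $\alg(f)$ on the \emph{large} non-anchor set $B$. This protects $A$ completely: on each round the emptier either empties every cup of $A$ (leaving exactly the right number of empties for $B$, so $\alg(f)$ on $B$ proceeds as if in isolation and eventually yields a cup of fill $\mu(B)+f(|B|)$, which is then swapped into $A$), or the emptier neglects some cup of $A$ (so $m(A)$ strictly increases). Either branch is progress toward $\mu(A)\ge(1-\delta)f(|B|)$, and there is no stalling option. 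Two smaller points: you do not need $\flatalg$ here---it is only proved for greedy-like emptiers, whereas the hypothesis of the lemma is that $\alg(f)$ already works from an arbitrary average-$0$ (equivalently, average-$\mu$) state in the negative-fill game; and the $(1-\delta)$ loss arises not from a fixed-point iteration but directly from the bound $\mu(B)\ge -\tfrac{\delta}{1-\delta}\,h$ forced by $\mu(A)|A|+\mu(B)|B|\ge 0$.
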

\begin{proof}[Proof Sketch]
The filler designates an \defn{anchor set} $A$ of size
$\ceil{\delta n}$ and a \defn{non-anchor set} $B$ of size
$\floor{(1-\delta)n}$.

The filler's strategy begins with $M$ phases, for some parameter $M$
to be determined later. In each phase,
the filler applies $\alg(f)$ to the non-anchor set $B$, while
simultaneously placing $1$ unit of water into each cup of $A$ on each
step. If there is ever a step during the phase in which the emptier
does not remove water from every cup in $A$, then the phase is said to
be \defn{emptier neglected}. On the other hand, if a phase is not
emptier neglected, then at the end of the phase, the filler swaps the
cup in $B$ with largest fill with the cup in $A$ whose fill is
smallest.

After the $M$ phases are complete, the filler then recursively
applies $\alg(f)$ to the cups $A$. This completes the filling strategy
$\alg(f')$.

The key to analyzing $\alg(f')$ is to show that, at the end of the
$M$-th phase, the average fill of the cups $A$ satisfies
$\mu(A) \ge (1 - \delta) f(|B|)$. This, in turn, means that the
recursive application of $\alg(f)$ to $A$ will achieve backlog
$(1 - \delta) f(|B|) + f(|A|)$, as desired.

Now let us reason about $\mu(A)$. If a phase is emptier neglected,
then the total amount of water placed into $A$ during the phase is at
least $1$ greater than the total amount of water removed. Hence
$\mu(A)$ increases by at least $1/|A|$. On the other hand, if a phase
is not emptier neglected, then $\alg(f)$ will successfully achieve
backlog $\mu(B) + f(|B|)$ on the cups $B$ during the phase. At the end
of the phase, the filler will then swap a cup from $B$ with large fill
with a cup from $A$. Thus, in each phase, we either have that $\mu(A)$
increases by $1 / |A|$, or that a cup with large fill gets swapped
into $A$. After sufficiently many phases, one can show that $\mu(A)$
is guaranteed to become at least $(1 - \delta)f(|B|) + f(|A|)$.


  
\end{proof}

We use the Amplification Lemma to give two lower bounds on
backlog: one with reasonable running time, the other with
slightly better backlog.

\begin{theorem}[\cref{thm:adaptivePoly}]
  There is an adaptive filling strategy for achieving backlog
  $\Omega(n^{1-\varepsilon})$ for constant $\varepsilon \in (0,
  1/2)$ in running time $2^{O(\log^2 n)}$. 
\end{theorem}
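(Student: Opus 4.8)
The plan is to iterate the Amplification Lemma starting from $\trivalg$, which achieves backlog $f_0(n) \ge 1/2$ on any $n \ge 2$ cups. Each application of the lemma, with parameter $\delta$, transforms a backlog curve $f$ into $f'(n) \ge (1-\delta)f(\lfloor(1-\delta)n\rfloor) + f(\lceil\delta n\rceil)$. If I apply the lemma $k$ times with a fixed $\delta$, unrolling the recursion gives a curve $f_k$; the key is to choose $\delta$ (as a function of $\varepsilon$, a small constant) and $k = k(n)$ (growing with $n$) so that $f_k(n) = \Omega(n^{1-\varepsilon})$ while keeping the running time $2^{O(\log^2 n)}$.

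First I would work out the recursion quantitatively. Ignoring floors/ceilings, after one amplification the anchor set has size $\delta n$ and contributes an additive $f(\delta n)$, while the non-anchor set of size $(1-\delta)n$ contributes $(1-\delta)f((1-\delta)n)$. Setting $n$ to be, say, a power of $1/\delta$ and tracking the dominant term, one sees that $f_k(n)$ behaves like a sum over the recursion tree; the cleanest way to get a clean closed form is to observe that the ``non-anchor spine'' shrinks $n$ by a factor $(1-\delta)$ each level and multiplies the accumulated backlog by $(1-\delta)$ each level, while each level also dumps in an anchor term $f_{k-1}(\delta n)$. A convenient route: show by induction that $f_k(n) \ge c \cdot n^{\alpha}$ for an exponent $\alpha = \alpha(\delta) < 1$ with $\alpha \to 1$ as $\delta \to 0$ — concretely, $\alpha$ should satisfy something like $(1-\delta)^{1-\alpha} + \delta^{\alpha} \ge 1$ (the fixed-point condition that makes $n^\alpha$ a ``subsolution'' of the recursion), and one checks this holds for $\alpha = 1 - \varepsilon$ once $\delta$ is a small enough constant depending on $\varepsilon$. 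Given $\varepsilon \in (0,1/2)$, pick $\delta$ small enough that $\alpha = 1-\varepsilon$ works, so $f_k(n) = \Omega(n^{1-\varepsilon})$ provided $k$ is large enough that the recursion has ``bottomed out'' to constant-size sets, i.e. $k = \Theta(\log n / \log\frac{1}{1-\delta}) = \Theta(\log n)$ with $\delta$ constant.

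Then I would bound the running time. Let $T(n)$ be the running time of the amplified strategy on $n$ cups. From the construction in the Amplification Lemma, one phase runs $\alg(f)$ once on $B$ (cost $T((1-\delta)n)$ roughly, i.e. the running time of the previous-level strategy) plus $O(\text{poly})$ overhead, there are $M$ phases, and then one recursive call of $\alg(f)$ on $A$ (cost $T(\delta n)$, again previous level). The parameter $M$ needed to push $\mu(A)$ up to $(1-\delta)f(|B|)$ is polynomial in $n$ and in the backlog achieved, hence $M = \poly(n)$ at every level (backlog is at most $n$). So each of the $k = O(\log n)$ levels multiplies the running time by a $\poly(n)$ factor, giving $T(n) = \poly(n)^{O(\log n)} = 2^{O(\log^2 n)}$, as claimed. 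The running-time bookkeeping — in particular pinning down $M$ precisely and confirming it stays polynomial across all levels (since the recursive sets have size $\le n$ and backlog $\le n$) — is the part that needs care, but conceptually the telescoping $\poly(n)$ per level times $O(\log n)$ levels is what produces the quasi-polynomial bound.

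The main obstacle I anticipate is handling the floors and ceilings in the recursion together with the requirement that recursion only bottoms out at sets of size $\ge 2$ (so that $\trivalg$ still applies): with $\delta$ a constant, $\lceil \delta n\rceil \ge 2$ fails once $n$ is below $2/\delta$, and one must verify the induction still delivers $\Omega(n^{1-\varepsilon})$ given a constant-factor loss at the bottom and given that the non-anchor branch $\lfloor(1-\delta)n\rfloor$ versus the nominal $(1-\delta)n$ only perturbs things by additive constants. This is routine but fiddly; I would absorb it by proving the bound $f_k(n) \ge c\, n^{1-\varepsilon}$ for all $n$ exceeding a constant threshold $n_0(\varepsilon)$, choosing $c$ small enough to handle the base cases, and choosing $k = k(n)$ just large enough that the recursion tree has depth reaching size-$\Theta(1)$ leaves — which is $\Theta(\log n)$, consistent with the running-time bound.
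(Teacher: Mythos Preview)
Your approach is correct and essentially matches the paper's: iterate the Amplification Lemma with a fixed constant $\delta$ depending on $\varepsilon$, prove by induction that $f_i(n) \ge c\,n^{1-\varepsilon}$ (the paper uses $c\,n^{1-\varepsilon} - 1$ to absorb floor/ceiling losses and restricts the induction to $n \le g_i$ for an exponentially growing threshold $g_i$, which is exactly your ``constant threshold $n_0$ plus $k=\Theta(\log n)$ levels'' device), and observe that $O(\log n)$ levels with a $\poly(n)$ blowup per level give running time $2^{O(\log^2 n)}$. One small slip: your fixed-point condition should read $(1-\delta)^{1+\alpha} + \delta^{\alpha} \ge 1$ (not $(1-\delta)^{1-\alpha}$), which with $\alpha = 1-\varepsilon$ becomes $(1-\delta)^{2-\varepsilon} + \delta^{1-\varepsilon} \ge 1$ and indeed holds for sufficiently small constant $\delta$ since $\delta^{1-\varepsilon}$ dominates $(2-\varepsilon)\delta$.
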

\begin{proof}[Proof Sketch]
  We construct a sequence of filling strategies with $\alg(f_{i+1})$
  the amplification of $\alg(f_i)$ using $\delta = \Theta(1)$
  determined as a function of $\varepsilon$, and $\alg(f_0)=\trivalg$.
  Choosing $\delta$ appropriately as a function of $\epsilon$, and
  letting $c$ be some (small) positive constant, we show by induction
  on $i$ that, for all $k \le 2^{c i}$, $\alg(f_{i})$ achieves backlog
  $\Omega(k^{1 - \epsilon})$ on $k$ cups in running time
  $2^{O(\log^2 k)}$. Taking $i = \Theta(\log n)$ completes the proof.
\end{proof}

\begin{theorem}[\cref{thm:factorialTimeAlg}]
  There is an adaptive filling strategy for achieving backlog
  $\Omega(n)$ in running time $O(n!)$.
\end{theorem}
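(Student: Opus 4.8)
The plan is to apply the Amplification Lemma (\cref{lem:adaptiveAmplification}) repeatedly, each time with $\delta$ taken as small as the lemma can tolerate --- namely, small enough that the anchor set consists of exactly two cups. Working in the negative-fill game (and transferring to standard fill at the end, since by \cref{sec:prelims} a negative-fill lower bound implies a standard-fill one), fix $n$, pick $n_0 \in \{2,3\}$ with $n_0 \equiv n \pmod 2$, and set $\alg(f_0) = \trivalg$. For $j \ge 1$, let $n_j = n_{j-1} + 2$ and let $\alg(f_j)$ be the amplification of $\alg(f_{j-1})$ with $\delta_j = 2/n_j$, so that the anchor set $A$ has $\ceil{\delta_j n_j} = 2$ cups and the non-anchor set $B$ has $\floor{(1-\delta_j)n_j} = n_{j-1}$ cups. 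We run this chain for $L = (n-n_0)/2 = \Theta(n)$ steps, ending with a strategy on $n_L = n$ cups, and we fold $\trivalg$ in as a fallback so that $f_j(m) \ge 1/2$ for every $m \ge 2$; in particular the recursion onto the two-cup anchor inside each amplification contributes at least $1/2$.

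By \cref{lem:adaptiveAmplification} the backlog then satisfies $f_j(n_j) \ge (1 - 2/n_j)\, f_{j-1}(n_{j-1}) + f_{j-1}(2) \ge (1 - 2/n_j)\, f_{j-1}(n_{j-1}) + \tfrac{1}{2}$. Multiplying through by $n_j$ and writing $G_j = n_j f_j(n_j)$ linearizes this to the telescoping recurrence $G_j \ge G_{j-1} + n_j/2$, whence $G_L \ge G_0 + \tfrac{1}{2}\sum_{j=1}^{L} n_j = \Omega(n^2)$ (using $n_j = \Theta(j)$ and $L = \Theta(n)$). Dividing back out gives $f_L(n) = G_L/n = \Omega(n)$, the claimed backlog. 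The point of the tiny anchor set is that the multiplicative loss $1-\delta_j = 1 - 2/n_j$ costs only $\Theta(1)$ \emph{additive} backlog per step, so the $+\tfrac12$ contributions accumulate linearly; a constant-size $\delta$ would instead lose a constant \emph{factor} per step, as in \cref{thm:adaptivePoly}.

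For the running time, one must bound the number of phases $M_j$ used inside the amplification at step $j$. Since $\mu(A)$ rises by $1/|A| = 1/2$ in each emptier-neglected phase, and since, once few phases are neglected, the gap between $\mu(A)$ and its target $(1-\delta_j) f_{j-1}(n_{j-1}) = \Theta(n_j)$ contracts geometrically at rate $1 - n_j/(|A|\,|B|) = \Theta(1)$ per non-neglected phase, it suffices to take $M_j = \Theta(n_j)$. Each phase recursively runs $\alg(f_{j-1})$ on the $n_{j-1}$-cup non-anchor set, while the final anchor recursion costs $O(1)$, so $\mathrm{Time}(\alg(f_j),n_j) = M_j \cdot \mathrm{Time}(\alg(f_{j-1}),n_{j-1}) + \poly(n_j)$, which unrolls to $O\paren{\prod_{j=1}^{L} n_j} = O(n!)$.

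The main obstacle is re-examining the Amplification Lemma in this degenerate regime, where $|A| = 2$ and $\delta_j = \Theta(1/n)$: its proof sketch implicitly pictures $A$ as a moderately large set, so we must confirm that the dichotomy ``either many emptier-neglected phases push $\mu(A)$ upward, or enough successful swaps pull $\mu(A)$ toward $(1-\delta_j) f_{j-1}(n_{j-1})$'' still yields $\mu(A) \ge (1-\delta_j) f_{j-1}(n_{j-1})$ after $M_j = \Theta(n_j)$ phases --- in particular that the per-phase increment $1/|A|$, the geometric contraction rate $n_j/(|A|\,|B|)$, and the drop in $\mu(B)$ induced by a swap all behave as in the general argument when $|A| = 2$. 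None of this looks problematic, but it must be checked; any residual additive slack in the lemma contributes only a lower-order term to $G_L$ and does not affect the conclusion. The parity bookkeeping (growing by two cups from a $2$- or $3$-cup base) and the floors and ceilings are routine.
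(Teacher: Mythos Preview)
Your approach is essentially the paper's: iterate the Amplification Lemma with a constant-size anchor set (you take $|A|=2$; the paper takes $|A|=n_0=8$ after first boosting $\trivalg$ to achieve backlog $1$), and show the resulting recurrence yields $\Omega(n)$ backlog---your linearization $G_j=n_j f_j(n_j)$ is a clean stand-in for the paper's closed form $f_i((i+1)n_0)\ge\sum_{j=0}^{i}(1-j/(i+1))$. Two small points: the Amplification Lemma is already stated and proved for every $\delta\in(0,1/2]$, so nothing needs re-examination in the $|A|=2$ regime; and for the running time you can drop the geometric-contraction heuristic and simply invoke the lemma's bound $T'(n)\le n\cdot n_A\cdot T(n_B)+T(n_A)$ with $n_A=2$, which unrolls directly to $O(n!)$.
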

\begin{proof}[Proof Sketch]
We construct a sequence of filling strategies with $\alg(f_{i+1})$ the
amplification of $\alg(f_i)$ using $\delta = 1/(i+1)$, and
$\alg(f_0)$ a filling strategy for achieving backlog $1$ on
$O(1)$ cups in $O(1)$ time (this is a slight modification of
$\trivalg$). We show by induction that
$\alg(f_{\Theta(n)})$ achieves backlog $\Omega(n)$ in running
time $O(n!)$.
\end{proof}

\subsection{Upper Bound}
In \cref{sec:upperBound} we prove that a greedy emptier, i.e. an
emptier that always empties from the $p$ fullest cups, never lets
backlog exceed $O(n)$; in this subsection we sketch the proof of
this result.

\cref{thm:TO_invariant} gives a system of invariants on the state of the
cups after each step. By considering the invariant at $k = 1$, we
achieve a bound of $O(n)$ on the backlog.

\begin{theorem}[\cref{thm:invariant}]
  \label{thm:TO_invariant}
 For all $k \le n$, the average fill of the $k$ fullest cups never exceeds $2n-k$ at the beginning of any round.
\end{theorem}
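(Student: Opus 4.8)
The plan is to prove the system of invariants by induction on the rounds, maintaining simultaneously for every $k \le n$ the bound $\mu_{S_t}(S_t([k])) \le 2n - k$, where $S_t([k])$ denotes the $k$ fullest cups at the start of round $t$. At $t = 1$ all cups are empty, so every average is $0 \le 2n - k$, and the base case holds. For the inductive step I would fix a round $t$, assume all the invariants hold at $S_t$, and show they hold at $S_{t+1}$ after the filler adds $p_t$ units (reaching the intermediate state $I_t$) and the greedy emptier removes from the $p_t$ fullest cups.

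The key structural observations I expect to use are: (i) the filler adds at most $1$ unit to each cup, so for any fixed \emph{set} of $k$ cups the mass increases by at most $k$ in going from $S_t$ to $I_t$, hence by at most $k$ for the top-$k$ \emph{ranked} cups as well (the top-$k$ ranked mass is the maximum over $k$-subsets, and is $1$-Lipschitz in this sense); (ii) the greedy emptier removes a full unit from each of the $p_t$ fullest cups unless it zeroes one out, so it removes exactly $p_t$ units unless some emptied cup had fill $< 1$ at $I_t$ — and in that zeroing-out case \emph{all} emptied cups that remain had small fill, which caps the backlog directly. I would split into cases according to whether $k \le p_t$ or $k > p_t$, and whether any cup is zeroed out. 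When $k \le p_t$ and no zeroing occurs, the top $k$ cups each lose a full unit, so the top-$k$ mass changes by at most $+k - k = 0$ relative to $S_t$, except that the \emph{identity} of the top-$k$ set can change; here I would argue that the new top-$k$ cups at $S_{t+1}$ were, at $I_t$, not among the very fullest, so their fill at $S_t$ was already controlled, and use the invariant at a nearby index. When $k > p_t$, only $p_t$ of the top-$k$ cups get emptied, so the top-$k$ mass can rise by as much as $k - p_t$; the point is that this is paid for by the slack $2n - k$ versus the weaker invariants at indices between $p_t$ and $k$, and summing telescopically keeps everything in line.

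The main obstacle, I expect, is handling the change in the \emph{ranking}: the set of $k$ fullest cups at $S_{t+1}$ need not be the set of $k$ fullest at $I_t$ minus the emptied ones, because emptying reorders cups. The clean way around this is to bound $m_{S_{t+1}}(S_{t+1}([k]))$ by $m_{I_t}(\text{some well-chosen } k\text{-set}) - (\text{amount removed from that set})$ and to choose the $k$-set to be the top-$k$ at $I_t$; since greedy removes from the top $p_t$ at $I_t$, the removed amount from the top-$k$-at-$I_t$ set is exactly $\min(k, p_t)$ units (absent zeroing), giving $m_{S_{t+1}}(S_{t+1}([k])) \le m_{I_t}(I_t([k])) - \min(k,p_t) \le m_{S_t}(S_t([k])) + k - \min(k, p_t)$. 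For $k \le p_t$ this is $\le m_{S_t}(S_t([k]))$, immediately preserving the invariant; for $k > p_t$ one has to absorb the extra $k - p_t$, and the right move is to instead bound against $m_{I_t}(I_t([p_t]))$ plus the fills of cups ranked $p_t+1,\dots,k$ at $I_t$, each of which is at most the fill of the rank-$(p_t+1)$ cup, which after one step of greedy emptying is small enough — combined with the invariants at index $p_t$ this closes the induction. I would double-check the zeroing-out case separately: if greedy zeroes out a cup at round $t$, then every cup it emptied had fill $\le 1$ at $I_t$, so every cup outside the top $p_t$ at $I_t$ also had fill $\le 1$, which pins the entire configuration to backlog $O(1)$ and makes all invariants trivial for that round.
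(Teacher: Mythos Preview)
Your central inequality is false, and it goes wrong in exactly the spot you flagged as the obstacle. You claim
\[
m_{S_{t+1}}\bigl(S_{t+1}([k])\bigr)\ \le\ m_{I_t}\bigl(I_t([k])\bigr) - \min(k,p_t),
\]
arguing that the right-hand side is the mass at $S_{t+1}$ of the set $I_t([k])$. But $S_{t+1}([k])$ is by definition the $k$-set of \emph{maximum} mass at $S_{t+1}$, so the inequality you need is the reverse: $m_{S_{t+1}}(S_{t+1}([k])) \ge m_{S_{t+1}}(I_t([k]))$. Concretely, take $n=3$, $p_t=2$, $k=1$, and $I_t$-fills $(10,\,9.5,\,9.4)$. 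Greedy emptying yields $(9,\,8.5,\,9.4)$, so the new top cup has fill $9.4 > 10 - 1 = 9$. The cup that jumps into the top $k$ was \emph{not} emptied, and that is precisely what makes the ranking change dangerous.

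This is the case the paper calls $b>0$: some of the old top-$k$ cups fall out and are replaced by unemptied cups $C$ from ranks $k+1,\ldots,k+b$. The paper's fix is to work with the actual new top-$k$ set $S_{t+1}([k]) = A \cup C$ and bound $m_{S_t}(AC)$ by invoking \emph{two} of the inductive invariants simultaneously, at $|A|$ and at $|A|+|B|+|C|$, via the inequality
\[
m_{S_t}(AC)\ \le\ \frac{c}{b+c}\,m_{S_t}(ABC) + \frac{b}{b+c}\,m_{S_t}(A).
\]
Substituting the two invariants and doing the algebra yields $m_{S_t}(AC) \le k(2n-k) - bc$, and the spare $bc \ge b$ absorbs the $+b$ that the filler can contribute. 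The single-index invariant (at $k$ or at $p_t$) is not enough; the whole point is that the \emph{system} of invariants couples: high mass in $A$ forces low mass in $C$.

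Your zeroing-out case is also off. If the rank-$r$ emptied cup zeroes out, only cups of rank $\ge r$ are forced to have fill $\le 1$ at $I_t$; the cups of rank $<r$ can be arbitrarily large, so backlog is certainly not $O(1)$. The paper's argument there uses the invariant at index $a-1$ for the surviving $A$-cups together with the $\le 1$ bound on each cup in $C$.
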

\begin{proof}[Proof Sketch]
 The proof is by induction on the round. Fix some
round $t$ and assume that the result holds for all $k$ at the beginning of round $t$. Fix
some $k$; we aim to prove that the average fill of the $k$
fullest cups is at most $2n-k$ at the start of round $t+1$. 

Let $A$ be the cups that satisfy the following three properties: they
are among the $k$ fullest cups in $I_t$, they are emptied from in step
$t$, and they are among the $k$ fullest cups in $S_{t+1}$. Let $B$ be
the cups that satisfy the following three properties: they are among
the $k$ fullest cups in state $I_t$, they are emptied from in step
$t$, and are \emph{not} among the $k$ fullest cups in
$S_{t+1}$. Finally, let $C$ be the cups with ranks
$|A| + |B| + 1, \ldots, k + |B|$ in state $I_t$. The set $C$ is
defined so that the $k$ fullest cups in state $S_{t+1}$ are given by $AC$.

For simplicity, throughout the rest of the proof we will make the
following assumption: the rank $r$ cup at state $S_t$ is also the rank
$r$ cup at state $I_t$ for all ranks $r \in [n]$. In the full version
of the proof, we show that this assumption is
actually without loss of generality.

We break the rest of the proof into three cases. Let $a = |A|$,
$b = |B|$, and $c = |C|$. 

\vspace{.3 cm}

\noindent\textbf{Case 1}:
Some cup in $A$ zeroes out in round $t$.\\
\textbf{Analysis}: The fill of all cups in $C$ must be at most $1$ at
state $I_t$ to be less than the fill of the cup in $A$ that zeroed
out. Furthermore, the average fill of $A$ at the beginning of step
$t + 1$ must be at most the average fill of the $a - 1$ fullest cups
in $S_t$ (due to one of the cups being zeroed out), and thus is at
most $2n-(a - 1)$. Combined with some algebra, these facts imply
that the average fill in $AC$ in  not too large, in particular not
larger than $2n-k$.

\vspace{.3 cm}

\noindent\textbf{Case 2}:
We have $b = 0$ and no cups in $A$ zero out in round $t$.\\
\textbf{Analysis}: In this case the set of cups of ranks in $[k]$ at
state $S_t$ is the same as the set of cups of ranks in $[k]$ at state
$S_{t+1}$, and these sets are both given by $AC$. During round $t$ the
emptier removes $a$ units of fill from the cups $A$. The filler cannot
have added more than $k$ fill to the cups $AC$, because it can add at
most $1$ fill to any given cup. Also, the filler cannot have added
more than $p_t$ fill to the cups because this is the total amount of
fill that the filler is allowed to add. Hence the filler adds at most
$\min(p_t, k) = a+b=a+0=a$ fill to the cups $AC$. It follows that the
emptier removes at least as many units of water from the cups $AC$ as
the filler adds, so the average fill of $AC$ has not increased and is
still at most $2n-k$.

\vspace{.3 cm}

\noindent\textbf{Case 3}:
We have $b > 0$ and no cups in $A$ zero out in round $t$. \\
\textbf{Analysis}:
This is the most interesting of the three cases. Consider
$m_{S_{t+1}}(AC)$, which is the mass of the $k$ fullest
cups after step $t$. Each cup in $A$ was emptied from in step $t$, and the
filler adds at most $\min(k, p_t) = a+b$ fill to cups $AC$ during the step. 
Hence, 
\begin{equation}
  \label{eq:TO_bplusmAC}
m_{S_{t+1}}(AC) \le m_{S_t}(AC) + b.
\end{equation}

Using the fact that $\mu_{S_t}(B) \ge \mu_{S_t}(C)$, we have that, 
$$m_{S_t}(C) \le \frac{c}{b+c} m_{S_t}(BC) = \frac{c}{b+c}(m_{S_t}(ABC) - m_{S_t}(A)).$$
Thus
\begin{equation}
  m_{S_{t}}(AC) = m_{S_t}(A) + m_{S_t}(C) \le \frac{c}{b+c}m_{S_t}(ABC) + \frac{b}{b+c}m_{S_t}(A).
  \label{eq:ACbound}
\end{equation} 
The system of invariants at at the beginning of step $t$ lets us
bound $m_{S_t}(A)$ by $|A|(2n - |A|)$ and $m_{S_t}(ABC)$ by
$|ABC|(2n - |ABC|)$, allowing for us to obtain a bound on
$m_{S_{t}}(AC)$ in terms of $a, b, c$, namely, 
\begin{equation}
  m_{S_{t}}(AC) \le \frac{c(a+b+c) (2n - a-b-c)}{b+c} + \frac{b
  a(2n-a)}{b+c}.
  \label{eq:abccomplicated}
\end{equation}
By algebraic manipulation, \eqref{eq:abccomplicated} reduces to
\begin{equation}
  \label{eq:TO_goodsstsuffminuscb}
  m_{S_t}(AC) \le  k(2n-k)-cb.
\end{equation}
The transformation from \eqref{eq:ACbound} to
\eqref{eq:TO_goodsstsuffminuscb} might at first seems somewhat
mysterious; in the full version of the proof we also give an
alternative version of the transformation that elucidates some of
the underlying combinatorial structure.

Combined with \eqref{eq:TO_bplusmAC}, and the fact $c>0$ (which
follows from $b > 0$), \eqref{eq:TO_goodsstsuffminuscb} implies
that $m_{S_{t + 1}}(AC) \le k(2n - k)$ and thus that the average
fill of the $k$ fullest cups in state $S_{t+1}$ is at most
$2n-k$, as desired.
\end{proof}
\subsection{Oblivious Lower Bound}

In \cref{sec:oblivious} we consider what happens if the filler is an
\defn{oblivious} adversary, meaning that the filler cannot see what
the emptier does at each step. The emptier, in turn, is permitted to
use randomization in order to make its behavior unpredictable to the
filler. We focus on randomized emptying algorithms that satisfy the
so-called \defn{$\Delta$-greedy-like} property: the emptier never
empties from a cup $c$ over another cup $c'$ whose fill is more than
$\Delta$ greater than the fill of $c$.

The main theorem in \cref{sec:oblivious} gives an oblivious filling
strategy that achieves backlog $\Omega(n^{1 - \epsilon})$ against any
$\Delta$-greedy-like emptier for any $\Delta \in \Omega(1)$ (or, more
precisely, any $\Delta \le \frac{1}{128} \log \log \log n$).

\begin{theorem}[\cref{thm:obliviousPoly}]
  There is an oblivious filling strategy for the
  variable-processor cup game on $N$ cups that achieves backlog
  at least $\Omega(N^{1-\varepsilon})$ for any constant $\varepsilon
  >0$ in running time $2^{\polylog(N)}$ with probability at least
  $1-2^{-\polylog(N)}$ against a $\Delta$-greedy-like emptier
  with $\Delta \le \frac{1}{128} \log\log\log N$.
  \label{thm:oblivious-main}
\end{theorem}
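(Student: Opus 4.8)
## Proof Proposal for Theorem 3.6 (Oblivious Lower Bound)

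The plan is to mimic the adaptive lower bound's recursive amplification structure, but replace each deterministic primitive with a randomized one that succeeds with high probability, and carefully union-bound over all recursive calls. First I would establish an \emph{oblivious amplification lemma}: given an oblivious filling strategy $\alg(f)$ that achieves backlog $f(n)$ on $n$ cups with failure probability $q(n)$ against any $\Delta$-greedy-like emptier, produce $\alg(f')$ achieving backlog roughly $(1-\delta)f(\lfloor(1-\delta)n\rfloor) + f(\lceil\delta n\rceil)$ with a modestly larger failure probability. The key difficulty relative to the adaptive setting is that the oblivious filler cannot \emph{see} whether a phase is ``emptier neglected'' or whether a high-fill cup actually exists in $B$ to swap into the anchor set $A$. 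The filler must therefore commit to the swap blindly at the end of each phase. The fix is to run many more phases than in the adaptive case: since the emptier is $\Delta$-greedy-like, if $\alg(f)$ has pushed some cup in $B$ to fill $\mu(B)+f(|B|)$, then \emph{every} cup the emptier declines to empty in $B$ has fill within $\Delta$ of the max, so blindly swapping the filler's believed-max cup either transfers nearly $f(|B|)$ mass into $A$ or else reveals (via the potential/mass accounting) that the emptier neglected $A$ and so $\mu(A)$ rose by $\Omega(1/|A|)$. Over $M = \poly(n)$ phases a drift/martingale argument (Azuma–Hoeffding on the mass $m(A)$, which changes by a bounded amount each phase and has positive expected increment until it reaches the target) shows $\mu(A) \ge (1-\delta)f(|B|)$ with probability $1 - 2^{-\poly(n)}$.

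Second, I would iterate the oblivious amplification lemma exactly as in Theorem 3.2's proof sketch: set $\alg(f_0)$ to be the randomized analogue of $\trivalg$ (achieve backlog $1/2$ on $2$ cups in $O(1)$ rounds, which is trivial and deterministic against any greedy-like emptier since with only $2$ cups and $p=1$ the structure is forced), take $\delta = \Theta(1)$ as a function of $\epsilon$, and define $\alg(f_{i+1})$ as the oblivious amplification of $\alg(f_i)$. By induction on $i$, for all $k \le 2^{ci}$ the strategy $\alg(f_i)$ achieves backlog $\Omega(k^{1-\epsilon})$ on $k$ cups in time $2^{O(\log^2 k)}$ with failure probability $2^{-\polylog k}$. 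Taking $i = \Theta(\log N)$ gives backlog $\Omega(N^{1-\epsilon})$ on $N$ cups. The running time recursion is $T(n) = M \cdot T((1-\delta)n) + T(\delta n) + O(1)$ with $M = \poly(n)$, which solves to $2^{O(\log^2 n)} = 2^{\polylog(N)}$; the total failure probability is a sum of at most $2^{O(\log^2 N)}$ terms each of size $2^{-\polylog N}$, still $2^{-\polylog N}$ after choosing the polylog in the per-step bound large enough.

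The one genuinely new technical point — and the step I expect to be the main obstacle — is controlling the interaction between $\Delta$ and the recursion depth. Each level of recursion can ``blur'' the filler's knowledge of cup heights by an additive $\Delta$ (because the emptier may shuffle among cups whose fills are within $\Delta$), and there are $\Theta(\log N)$ levels, so naively the blur accumulates to $\Theta(\Delta \log N)$. Since the backlog target at the bottom levels is only $\Theta(1)$, we cannot afford $\Delta \log N$ error there; this is precisely why the theorem requires $\Delta \le \frac{1}{128}\log\log\log N$ rather than an arbitrary constant. I would handle this by arranging the induction so that the ``slack'' available at level $i$ (the gap between the backlog $\alg(f_i)$ actually achieves and the bound it must beat) grows geometrically as we ascend, and checking that $\Delta$ times the number of levels below the current one is always a lower-order term compared to this slack — the constraint $\Delta = O(\log\log\log N)$ falls out of requiring $2^{\Omega(\Delta)} \le \polylog N$ or similar at the relevant level. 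The remaining steps — the Azuma concentration in the amplification lemma, the arithmetic of the backlog recurrence, and verifying the base case — are routine given the adaptive proof.
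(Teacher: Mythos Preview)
Your proposal has the right high-level recursive shape but misses several load-bearing ingredients, and the gaps are genuine.

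First, the base case $\alg(f_0)=\trivalg$ on two cups does not work obliviously: $\trivalg$ relies on the filler seeing which of the two candidates the emptier skipped, information an oblivious filler does not have, so there is no ``believed-max cup'' to swap. The paper instead builds an elaborate base case on $n_b=\Theta(\log^5 N)$ cups that achieves backlog $H=\Theta(\log\log\log N)$ in a \emph{known} cup with failure probability $2^{-\Omega(\log^4 N)}$. It first uses a randomized single-processor routine $\randalg$ to get high fill in an \emph{unknown} anchor cup, then drops to $p=1$ and repeatedly fills a fixed target cup $c_0$; $\Delta$-greediness forces the emptier to drain the unknown high cup rather than $c_0$, transferring the high fill to a known location. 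The polylogarithmic base-case size is also what makes the per-call failure probability small enough to survive a union bound over $2^{\polylog N}$ calls; a constant-size base case cannot.

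Second, you omit the \emph{flattening} step entirely. Before each application of $\alg(f)$ to the non-anchor set $B$, the paper applies a routine $\flatalg$ that exploits $\Delta$-greediness to drive $B$ into an $R_\Delta$-flat configuration; without this the cups in $B$ may have arbitrary relative fills and the recursive call has no usable baseline (your martingale on $m(A)$ does not address the \emph{internal} state of $B$). Relatedly, your explanation of the constraint $\Delta\le\frac{1}{128}\log\log\log N$ is off: blur does not accumulate across the $\Theta(\log N)$ levels, because flattening resets it each time. The constraint lives entirely in the base case, where the backlog $H$ achieved must dominate the flatness radius $R_\Delta=2(2+\Delta)$; since $\randalg(k)$ gains only $\sum_{i=2}^k 1/i\approx\ln k$ and one then needs $k!$ small relative to $n_b/\polylog N$ for the success probability, the chain of constraints forces $\Delta=O(\log\log\log N)$. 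Finally, the paper uses $M=2^{\polylog N}$ applications per donation-process with a uniformly \emph{random choice} of which application's output to donate, so that the emptier's at most $N^2$ interferences hit the chosen one with probability $2^{-\polylog N}$; your Azuma argument with $M=\poly(n)$ does not handle an emptier that adaptively concentrates its interference on the applications that appear to be succeeding.
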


Note that Theorem \ref{thm:oblivious-main} uses $N$ for the number of
cups rather than using $n$. When describing the recursive strategy
that the filler uses, we will use $N$ to denote the true total number
of cups, and $n$ to denote the number of cups within the recursive
subproblem currently being discussed.

The filling strategy used in Theorem \ref{thm:oblivious-main} is
closely related the adaptive filling strategy described in Section
\ref{sec:technicaladaptive}. The fact that the filling strategy must
now be \emph{oblivious}, however, introduces several new technical
obstacles.

\paragraph{Problem 1: Distinguishing between neglected and non-neglected phases}
Recall that the Amplification Lemma in Section
\ref{sec:technicaladaptive} proceeds in phases, where the filler
behaves differently at the end of each phase depending on whether or
not the emptier ever neglected the anchor set $A$ during that
phase. If the filler is oblivious, however, then it cannot detect
which phases are neglected.

To solve this problem, the first thing to notice is that the
\emph{total} number of times that the emptier can neglect the anchor
set (within a given recursive subproblem of the Amplification Lemma)
is, without loss of generality, at most $N^2$. Indeed, if the emptier
neglects the anchor set more than $N^2$ times, then the total amount
of water in cups $A$ will be at least $N^2$. Since the amount of water
in the system as a whole is non-decreasing, there will subsequently
always be at least one cup in the system with fill $N$ or larger, and
thus the filler's strategy trivially achieves backlog $N$.

Assume that there are at most $N^2$ phases that the emptier
neglects. The filler does not know which phases these are, and
the filler does not wish to ever move a cup from the non-anchor
set to the anchor set during a phase that the emptier neglected
(since, during such a phase, there is no guarantee on the amount
of water in the cup from $B$). To solve this problem, we increase
the total number of phases in the Amplification Lemma to be some
very large number $M = 2^{\polylog N}$, and we have the filler
select $|A|$ random phases at the end of which to move a cup from
the non-anchor set to the anchor set. With high probability, none
of the $|A|$ phases that the filler selects are neglected by the
emptier.

\paragraph{Problem 2: Handling the probability of failure}
Because the filler is now oblivious (and the emptier is randomized)
the guarantee offered by the filling strategy is necessarily
probabilistic. This makes the Amplification Lemma somewhat more
difficult, since each application of $\alg(f)$ now has some
probability of failure.

We ensure that the applications of $\alg(f)$ succeed with such
high probability that we can ignore the possibility of any of
them failing on phases when we need them to succeed. This
necessitates making sure that the base-case construction
$\alg(f_0)$ succeeds with very high probability.

Fortunately, we can take a base-case construction $\alg_0$ that
succeeds with only constant (or even sub-constant) probability,
and perform an Amplification-Lemma-like construction in order to
obtain a new filling strategy $\alg_1$ that achieves slightly
\emph{smaller} backlog, but that has a very high probability of
succeeding.

To construct $\alg_1$, we begin by performing the
Amplification-Lemma construction on $\alg_0$, but without
recursing after the final phase. Even though many of the
applications of $\alg_0$ may fail, with high probability at least
one application succeeds. This results in some cup $c_*$ in $A$
having high fill. Unfortunately, the filler does not know which
cup has high fill, so it cannot simply take $c_*$. What the
filler can do, however, is select some cup $c$, decrease the
number of processors to $1$, and then spend a large number of
steps simply placing $1$ unit of water into cup $c$ in each step.
By the $\Delta$-greediness of the emptier, the emptier is
guaranteed to focus on emptying from cup $c_*$ (rather than cup
$c$) until $c$ attains large fill. This allows for the filler to
obtain a cup $c$ that the filler \emph{knows} contains a large
amount of water (with high probability). We use this approach to
construct a base-case filling strategy $\alg(f_0')$ that succeeds
with high probability.

\paragraph{Problem 3: Non-flat starting states}
The next problem that we encounter is that, at the beginning of
any given phase, the cups in the non-anchor set may not start off
with equal heights. Instead, some cups may contain very large
amounts of water while others contain very small (and even
negative) amounts of water\footnote{Recall that, in order for our
lower-bound construction to be able to call itself recursively,
we must analyze the construction in the negative-fill version of
the variable-processor cup game.}. This is not a problem for an
adaptive filler, since the filler knows which cups contain
small/large amounts of water, but it is a problem for an
oblivious filler.

To avoid the scenario in which the cups in $B$ are highly unequal, we
begin each phase by first performing a \defn{flattening construction}
on the cups $B$, which causes the cups in $B$ to all have roughly
equal fills (up to $\pm O(\Delta)$). The flattening construction uses
the fact that the emptier is $\Delta$-greedy-like to ensure that cups
which are overly full get flattened out by the emptier.

\paragraph{Putting the pieces together}
By combining the ideas above, as well as handling other 
issues that arise (e.g., one must be careful to ensure that the
average fills of $A$ and $B$ do not drift apart in unpredictable
ways), one can prove Theorem \ref{thm:oblivious-main}. In Section
\ref{sec:oblivious}, we give the full proof, which is by far the most
technically involved result in the paper.

\section{Adaptive Filler Lower Bound}\label{sec:adaptive}

In this section we give a $2^{\polylog n}$-time filling strategy
that achieves backlog $n^{1 - \varepsilon}$ for any positive
constant $\varepsilon$. 
We also give a $O(n!)$-time filling strategy that achieves
backlog $\Omega(n)$.

We begin with a trivial filling strategy that we call
\defn{$\trivalg$} that gives backlog at least $1/2$ when applied
to at least $2$ cups.
\begin{proposition}
  \label{prop:adaptiveBase}
  Consider an instance of the negative-fill $1$-processor cup
  game on $n$ cups, and let the cups start in any state with
  average fill is $0$. If $n\ge 2$, there is an $O(1)$-step
  adaptive filling strategy $\trivalg$ that achieves backlog at
  least $1/2$. If $n=1$, $\trivalg$ achieves backlog $0$ in
  running time $0$.
\end{proposition}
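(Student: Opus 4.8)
The plan is to split on $n$: the case $n = 1$ is immediate, and for $n \ge 2$ I would use a single round of play (preceded by a ``do nothing'' case when the start state is already bad enough for the filler). For $n = 1$, the lone cup has fill equal to the average fill, namely $0$, so backlog $0$ is attained trivially in $0$ steps and there is nothing to do.

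For $n \ge 2$, let $c_1, c_2$ be the two fullest cups in the start state $S_1$, and set $a := \fil_{S_1}(c_1) \ge b := \fil_{S_1}(c_2)$. The first thing I would record are two elementary consequences of the average fill being $0$: (i) $a \ge 0$, since the maximum fill is at least the average; and (ii) $a + b \ge 0$, since the $n-1$ cups other than $c_1$ have total fill $-a$ and each has fill at most $b$, so $-a \le (n-1)b$, whence $b \ge -a/(n-1) \ge -a$. If $a \ge 1/2$, then the backlog is already at least $1/2$ and the filler does nothing. Otherwise $a < 1/2$, and together with $b \ge -a > -1/2$ this yields $0 \le a - b < 1$.

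The one-round strategy is then: on the single round (with $p = 1$), the filler puts $\alpha := \tfrac{1}{2}\bigl(1 - (a - b)\bigr)$ units of water into $c_1$ and $\beta := \tfrac{1}{2}\bigl(1 + (a - b)\bigr)$ units into $c_2$. Since $a - b \in [0,1)$ we have $\alpha \in (0,\tfrac12]$ and $\beta \in [\tfrac12,1)$, so both amounts are legal, and $\alpha + \beta = 1 = p$. After the filler's move, both $c_1$ and $c_2$ have fill exactly $\tfrac{1}{2}(a + b + 1)$, which is at least $\tfrac12$ by fact (ii). The emptier then removes water from at most one cup, so at least one of $c_1, c_2$ still has fill $\ge 1/2$ at the end of the round, and the backlog is at least $1/2$ after a single step. (If the rules are read so that the filler must add exactly $p$ units even in the ``do nothing'' branch, it may place the unit arbitrarily, which only raises fills and leaves the argument intact.)

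The only genuine subtlety here — and the step I would flag as the main obstacle — is that in the negative-fill game the cups can start with negative fill, so it is not a priori obvious that a single unit of water suffices to raise two cups to height $1/2$. The fact that makes it work is (ii): when the average is $0$ the second-fullest cup cannot be too negative relative to the fullest one, and this is exactly what simultaneously forces the ``flattening'' amounts $\alpha,\beta$ to lie in $[0,1]$ and forces the common post-fill height $\tfrac12(a+b+1)$ to be at least $\tfrac12$. Everything else is routine.
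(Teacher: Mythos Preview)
Your proof is correct and follows essentially the same approach as the paper: handle $n=1$ trivially, check whether backlog is already $\ge 1/2$, and otherwise use a single round to push the two fullest cups both to height at least $1/2$, relying on the key observation that their fills sum to at least $0$. The only cosmetic difference is the distribution of the unit of water---the paper raises the fullest cup to exactly $1/2$ and pours the remainder into the second cup, whereas you equalize the two---but this does not constitute a genuinely different argument.
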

\begin{proof}
  If $n=1$, $\trivalg$ does nothing and achieves backlog $0$; for
  the rest of the proof we consider the case $n\ge 2$.

  Let $a$ and $b$ be the fullest and second fullest cups in the in
  the starting configuration, and let their initial fills be
  $\fil(a) = \alpha, \fil(b) = \beta$. 
  If $\alpha\ge 1/2$ the filler need not do anything, the desired
  backlog is already achieved.
  Otherwise, if $\alpha \in [0, 1/2]$, the filler places
  $1/2-\alpha$ fill into $a$ and $1/2 + \alpha$ fill into $b$
  (which is possible as both fills are in $[0,1]$, and they sum
  to $1$). Since $\alpha + \beta \ge 0$ we have $\beta \ge -\alpha$.
  Clearly $a$ and $b$ now both have fill at least $1/2$.
  The emptier cannot empty from both $a$ and $b$ as $p=1$, so
  even after the emptier empties from a cup we still have backlog
  $1/2$, as desired.
\end{proof}

Next we prove the \defn{Amplification Lemma}, which takes as
input a filling strategy $\alg(f)$ and outputs a new filling
strategy $\alg(f')$ that we call the \defn{amplification} of
$\alg(f)$. $\alg(f')$ is able to achieve higher fill than
$\alg(f)$; in particular, we will show that by starting with a
filling strategy $\alg(f_0)$ for achieving constant backlog and
then forming a sufficiently long sequence of filling strategies
$\alg(f_0), \alg(f_1), \ldots, \alg(f_{i_*})$ with
$\alg(f_{i+1})$ the amplification of $\alg(f_i)$, we 
get a filling strategy for achieving $\poly(n)$ backlog.

\begin{lemma}[Adaptive Amplification Lemma]
  \label{lem:adaptiveAmplification}
  Let $\delta\in(0,1/2]$ be a parameter.
  Let $\alg(f)$ be an adaptive filling strategy that 
  achieves backlog $f(n) < n$ in the negative-fill variable-processor cup game
  on $n$ cups in running time $T(n)$ starting from any initial
  cup state where the average fill is $0$.

  Then there exists an adaptive filling strategy $\alg(f')$ that
  achieves backlog $f'(n)$ satisfying 
  $$f'(n) \ge (1-\delta)f(\floor{(1-\delta)n}) + f(\ceil{\delta n}) $$
  and $f'(n) \ge f(n)$
  in the negative-fill variable-processor cup game on $n$ cups in running time 
  $$T'(n) \le n\ceil{\delta n} \cdot T(\floor{(1-\delta)n}) + T(\ceil{\delta n})$$
  starting from any initial cup state where the average fill is $0$.
\end{lemma}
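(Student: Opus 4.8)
The plan is to make precise the proof-sketch already given and to track the running time carefully. I would fix $\delta \in (0,1/2]$, set $A$ to be an anchor set of size $\lceil \delta n \rceil$ and $B$ a non-anchor set of size $\lfloor (1-\delta) n \rfloor$, and run the strategy $\alg(f')$ in three stages: (1) an initial stage of $M$ phases where in each phase the filler simultaneously pours $1$ unit into each cup of $A$ and runs $\alg(f)$ on $B$, then (for non-emptier-neglected phases only) swaps the fullest cup of $B$ with the least-full cup of $A$; (2) a final recursive call of $\alg(f)$ on $A$. I would choose $M = n \lceil \delta n \rceil$, which also yields the claimed running-time bound $T'(n) \le n\lceil \delta n\rceil \cdot T(\lfloor (1-\delta)n\rfloor) + T(\lceil \delta n \rceil)$ since each of the $M$ phases costs at most $T(\lfloor(1-\delta)n\rfloor)$ steps and the final recursion costs $T(\lceil \delta n\rceil)$. (For the easy bound $f'(n) \ge f(n)$, the filler can just ignore $A$ entirely and run $\alg(f)$ on all $n \ge \lfloor(1-\delta)n\rfloor$ cups — I should double-check monotonicity of $f$ is available or argue around it; if $f$ is not assumed monotone, instead run $\alg(f)$ on the whole cup set directly, which achieves $f(n)$ by hypothesis.)

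The analytic heart is the invariant that after the $M$-phase stage, $\mu(A) \ge (1-\delta) f(|B|)$. I would argue this by a potential/counting argument on the quantity $m(A)$ (equivalently $\mu(A)$, since $|A|$ is fixed). Two things can happen in a phase. If the phase is emptier-neglected, then on some step of that phase the emptier removed water from at most $p-1$ of the cups of $A$ while the filler added $1$ to each cup of $A$; combined with the negative-fill convention (every emptying decrements by exactly $1$, no zeroing) this means $m(A)$ strictly increased over the phase, and in fact increased by at least $1$ (net: $+|A|$ added, at most $|A|-1$ removed, in that step; other steps are net-zero for $A$). If the phase is not emptier-neglected, then every cup of $A$ was emptied on every step, so $\alg(f)$ run on $B$ really does see a valid negative-fill instance and achieves backlog $\mu(B) + f(|B|)$ within $B$; the swap then removes from $B$ a cup of fill $\ge \mu(B)+f(|B|)$ and inserts into $A$ (in place of a cup of fill $\le \mu(A)$, the minimum of $A$). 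Crucially $\mu(B)$ stays pinned near $0$: on a non-neglected phase $A$ exactly breaks even (filler adds $|A|$, emptier removes $|A|$) before the swap, so since the total added equals total removed each step and $A$ breaks even, $B$ also breaks even, hence $\mu(B) = 0$ throughout non-neglected phases up to the swap; I need to verify the swap's bookkeeping keeps $\mu(B)$ controlled across phases (the cup leaving $B$ has fill $\ge f(|B|) \ge 0 = \mu(B)$ and the cup entering $B$ has fill $\le \mu(A)$, so $\mu(B)$ can only decrease — which is fine, it only helps, but I should make sure it does not drift so negative that $f(|B|)$ is measured against the wrong baseline; actually since $\alg(f)$ achieves backlog $\mu(B)+f(|B|)$ relative to whatever the current $\mu(B)$ is, and the swap compares the departing cup to the current $\mu(A)$, the clean way is to track $m(A) - m(A)_{\text{init}}$ directly).

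The counting then goes: we assumed (WLOG, as in the overview via the $N^2$-trick, though here in the purely adaptive setting the filler can simply detect neglect and not swap) that there are at most $N^2$ neglected phases — but in the adaptive version the filler sees everything, so really I would phrase it as: the filler performs swaps only in non-neglected phases, and does so until it has accumulated $|A|$ successful swaps, i.e. every original cup of $A$ has been replaced by a cup that was the fullest of $B$ at swap time with fill $\ge f(|B|)$. Since each non-neglected phase with a pending swap contributes one swap, and there are at most finitely many neglected phases interspersed, $M = n\lceil\delta n\rceil$ phases suffice (neglected phases are harmless — they only raise $m(A)$ — and we need only $|A| \le \lceil \delta n\rceil \le n\lceil\delta n\rceil$ non-neglected ones). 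After all $|A|$ cups have been swapped in, each carrying fill $\ge f(|B|)$ at insertion time, a monotonicity/no-decrease argument on $m(A)$ (emptying a cup of $A$ only when some other cup of $A$ is at least as full, by the swap choosing the minimum, plus negative-fill decrements) gives $\mu(A) \ge (1-\delta) f(|B|)$; the $(1-\delta)$ slack absorbs the worst case where the last swap happens and subsequent phases slightly redistribute. Finally, the recursive $\alg(f)$ on $A$ achieves backlog $\mu(A) + f(|A|) \ge (1-\delta)f(|B|) + f(|A|) = (1-\delta) f(\lfloor(1-\delta)n\rfloor) + f(\lceil \delta n\rceil)$, as claimed.

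The main obstacle I anticipate is the bookkeeping that pins $\mu(B)$ (and the baseline against which $f(|B|)$ is measured) down across many phases and swaps — in particular showing that the swaps do not let $m(A)$ and $m(B)$ drift in a way that degrades the guarantee, and nailing exactly why $(1-\delta)$ (rather than $1$) is the right fudge factor. Concretely: after a cup of value $v \ge f(|B|)$ is swapped into $A$ replacing a cup of value $\le \mu(A)_{\text{current}}$, later phases may empty from this newly-inserted cup, but by the greedy-like-free structure here (the swap always ejects the current minimum of $A$) one shows $m(A)$ is non-decreasing outside of the controlled emptying, and the $|A| - 1$ versus $|A|$ discrepancy over a phase of many steps is what gets amortized into the $(1-\delta)$ factor. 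I would isolate this as a standalone sub-claim ("after $M$ phases, $\mu(A) \ge (1-\delta) f(|B|)$") and prove it by induction on the number of completed swaps, which I expect to be the longest part of the full write-up.
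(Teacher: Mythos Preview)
Your overall strategy is the same as the paper's, but the analysis has a genuine gap: you never bound how far $\mu(B)$ can drift, and your explanation of the $(1-\delta)$ factor is wrong. You write that $\mu(B)=0$ on non-neglected phases, then immediately notice that swaps push $\mu(B)$ down (a high cup leaves $B$, a low cup enters), and then hand-wave that ``the $(1-\delta)$ slack absorbs the worst case'' or that it comes from ``the $|A|-1$ versus $|A|$ discrepancy.'' Neither is correct. On a non-neglected phase each anchor cup receives $+1$ and is emptied exactly once, so fills in $A$ do not decrease; there is no discrepancy to amortize. The real issue is that after many swaps $\mu(B)$ can be quite negative, so the cup you swap in has fill only $\mu(B)+f(|B|)$, not $f(|B|)$, and you have no lower bound on $\mu(B)$.

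The missing idea is the conservation-of-mass early-halt. Since total mass is nonnegative, $\mu(A)\,|A|+\mu(B)\,|B|\ge 0$, so $\mu(A)\ge -\mu(B)\cdot\frac{|B|}{|A|}\ge -\mu(B)\cdot\frac{1-\delta}{\delta}$. Hence if ever $\mu(B)< -\delta f(|B|)$, then already $\mu(A)\ge (1-\delta)f(|B|)=:h$ and Step~1 is done. So the filler monitors $\mu(A)$ and halts as soon as $\mu(A)\ge h$; otherwise $\mu(B)\ge -\delta f(|B|)$ is maintained automatically, and every swapped-in cup has fill at least $\mu(B)+f(|B|)\ge (1-\delta)f(|B|)=h$. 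That is the exact source of the $(1-\delta)$. Once you know each swap inserts a cup of fill $\ge h$ (and fills in $A$ never decrease), after at most $|A|$ swaps every cup in $A$ has fill $\ge h$, or you halted earlier.

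This same halting argument also closes your running-time hole. You assert $M=n\lceil\delta n\rceil$ phases suffice but only say ``there are at most finitely many neglected phases,'' which is no bound. The correct count is: each neglected application increases $m(A)$ by at least $1$, so after $h\,|A|$ neglects you have $\mu(A)\ge h$ and halt; since $h=f(|B|)(1-\delta)<n$ by the hypothesis $f(n)<n$, the total number of applications of $\alg(f)$ to $B$ is at most $n\cdot\lceil\delta n\rceil$, which is exactly the running-time bound claimed.
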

\begin{proof}
  Let $n_A = \ceil{\delta n}, n_B = n-n_A = \floor{(1-\delta) n}$.

  The filler defaults to using $\alg(f)$ if 
  $$f(n) \ge (1-\delta)f(n_B) + f(n_A).$$ 
  In this case using $\alg(f)$ achieves the desired backlog in
  the desired running time. In the rest of the proof, we describe
  our strategy for the case where we cannot simply use $\alg(f)$ to
  achieve the desired backlog. 

  Let $A$, the \defn{anchor set}, be initialized to consist of
  the $n_A$ fullest cups, and let $B$ the
  \defn{non-anchor set} be initialized to consist of the rest of
  the cups (so $|B| = n_B$). Let $h = (1-\delta)f(n_B).$

  The filler's strategy can be summarized as follows: \\
  \textbf{Step 1:} Make $\mu(A) \ge h$ by using $\alg(f)$
  repeatedly on $B$ to achieve cups with fill at least $\mu(B) +
  f(n_B)$ in $B$ and then swapping these into $A$. While doing
  this the filler always places $1$ unit of fill in each anchor
  cup.\\
  \textbf{Step 2:} Use $\alg(f)$ once on $A$ to obtain some cup
  with fill $\mu(A)+f(n_A)$.\\
  Note that in order to use $\alg(f)$ on subsets of the cups the
  filler will need to vary $p$.

  We now describe how to achieve Step 1, which is
  complicated by the fact that the emptier may attempt to
  prevent the filler from achieving high fill in a cup
  in $B$.

  The filling strategy always places $1$ unit of water in each
  anchor cup. This ensures that no cups in the anchor set ever
  have their fill decrease. If the emptier wishes to keep the
  average fill of the anchor cups from increasing, then emptier
  must empty from every anchor cup on each step. If the emptier
  fails to do this on a given round, then we say that the emptier
  has \defn{neglected} the anchor cups. 

  We say that the filler \defn{applies} $\alg(f)$ to $B$ if it
  follows the filling strategy $\alg(f)$ on $B$ while placing $1$
  unit of water in each anchor cup. An application of $\alg(f)$
  to $B$ is said to be \defn{successful} if $A$ is never
  neglected during the application of $\alg(f)$ to $B$. The
  filler uses a series of phases that we call
  \defn{swapping-processes} to achieve the desired average fill
  in $A$. In a swapping-process, the filler repeatedly applies
  $\alg(f)$ to $B$ until a successful application occurs, and
  then takes the cup generated by $\alg(f)$ within $B$ on this
  successful application with fill at least $\mu(B) + f(|B|)$ and
  swaps it with the least full cup in $A$ so long as the swap
  increases $\mu(A)$. If the average fill in $A$ ever
  reaches $h$, then the algorithm immediately halts (even if it
  is in the middle of a swapping-process) and is complete.

  We give pseudocode for the filling strategy in
  \cref{alg:adaptiveAmplification}.

\begin{algorithm}
  \caption{Adaptive Amplification (Step 1)}
  \label{alg:adaptiveAmplification}
  \begin{algorithmic}
    \State \textbf{Input:} $\alg(f), \delta, $ set of $n$ cups
    \State \textbf{Output:} Guarantees that $\mu(A) \ge h$
    \State
    \State $A \gets n_A$ fullest cups, $B \gets $ rest of the cups
    \State Always place $1$ fill in each cup in $A$
    \While{$\mu(A) < h$} \Comment Swapping-Processes
    \State Immediately \textbf{exit} this loop if ever $\mu(A) \ge h$ 
      \State successful $\gets $ false
      \While{not successful} 
      \State Apply $\alg(f)$ to $B$, $\alg(f)$ gives cup $c$
        \If{$\fil(c) \ge h$}
          \State successful $\gets$ true
        \EndIf
      \EndWhile
      \State Swap $c$ with least full cup in $A$
    \EndWhile
  \end{algorithmic}
\end{algorithm}
  
  Note that $$\mu(A) \cdot |A| + \mu(B)\cdot |B| = \mu(AB) \ge 0,$$
  as $\mu(AB)$ starts as $0$, but could become positive if the
  emptier skips emptyings.
  Thus we have 
  $$\mu(A) \ge - \mu(B) \cdot
  \frac{\floor{(1-\delta)n}}{\ceil{\delta n}} \ge -
  \frac{1-\delta}{\delta} \mu(B).$$ Thus, if at any
  point $B$ has average fill lower than $-h \cdot
  \delta/(1-\delta)$, then $A$ has average fill at least $h$, so
  the algorithm is finished. Thus we can assume in our analysis that
  \begin{equation}
    \mu(B) \ge -h\cdot\delta/(1-\delta).
  \label{eq:Batleast}
  \end{equation}
  We will now show that the filler
  applies $\alg(f)$ to $B$ at most $h n_A$ total times. 
  Each time the emptier neglects the anchor set, the mass of the
  anchor set increases by $1$. If the emptier neglects the anchor set $h
  n_A$ times, then the average fill in the anchor set increases by
  $h$. Since $\mu(A)$ started as at least $0$, and
  since $\mu(A)$ never decreases (note in particular that cups are only
  swapped into $A$ if doing so will increase $\mu(A)$), an
  increase of $h$ in $\mu(A)$ implies that $\mu(A) \ge h$, as
  desired.  

  Consider the fill of a cup $c$ swapped into $A$ at the end of a
  swapping-process. Cup $c$'s fill is at least $\mu(B) + f(n_B)$,
  which by \eqref{eq:Batleast} is at least
  $$-h \cdot \frac{\delta}{1-\delta} + f(n_B) = (1-\delta)f(n_B) = h.$$ 
  Thus the algorithm for Step 1 succeeds within $|A|$
  swapping-processes, since at the end of the $|A|$-th swapping
  process either every cup in $A$ has fill at least $h$, or the
  algorithm halted before $|A|$ swapping-processes because it
  already achieved $\mu(A) \ge h$. 
  
  After achieving $\mu(A) \ge h$, the filler performs Step 2,
  i.e. the filler applies $\alg(f)$ to $A$, and hence achieves a
  cup with fill at least $$\mu(A) + f(|A|) \ge (1-\delta)f(n_B) +
  f(n_A),$$ as desired.

  Now we analyze the running time of the filling strategy
  $\alg(f')$. First, recall that in Step 1 $\alg(f')$ calls
  $\alg(f)$ on $B$, which has size $n_B$, as many as $h n_A$ times.
  Because we mandate that $h < n$, Step 1 contributes no more
  than $(n\cdot n_A) \cdot T(n_B)$ to the running time.
  Step 2 requires applying $\alg(f)$ to $A$, which has size
  $n_A$, once, and hence
  contributes $T(n_A)$ to the running time. Summing these we have
  $$T'(n) \le n \cdot n_A \cdot T(n_B) + T(n_A).$$

\end{proof}

We next show that by recursively using the Amplification Lemma we
can achieve backlog $n^{1 - \varepsilon}$.
\begin{theorem}
  \label{thm:adaptivePoly}
  There is an adaptive filling strategy for the variable-processor cup game on
  $n$ cups that achieves backlog $\Omega(n^{1-\varepsilon})$ for any constant
  $\varepsilon > 0$ of our choice in running time $2^{O(\log^2 n)}$.
\end{theorem}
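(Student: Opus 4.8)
The plan is to bootstrap the Amplification Lemma (\cref{lem:adaptiveAmplification}) off of $\trivalg$ (\cref{prop:adaptiveBase}). Fix a constant $\delta = \delta(\varepsilon) \in (0,1/2]$, small enough (to be pinned down below), and a constant $c = c(\varepsilon) > 0$, also small enough, and define filling strategies $\alg(f_0), \alg(f_1), \alg(f_2), \dots$ by letting $\alg(f_0) = \trivalg$ and letting $\alg(f_{i+1})$ be the $\delta$-amplification of $\alg(f_i)$. Every strategy in this chain is adaptive, since $\trivalg$ is and amplification preserves adaptivity. The strategy claimed by the theorem, when run on $n$ cups, is just $\alg(f_{i_*})$ for $i_* = \ceil{(\log_2 n)/c} = \Theta(\log n)$.

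The core of the argument is an induction on $i$ proving that there is a constant $c_0 = c_0(\varepsilon) > 0$ such that for every $i \ge 0$ and every $k$ with $2 \le k \le 2^{ci}$: (a) $\alg(f_i)$ achieves backlog at least $c_0 k^{1-\varepsilon}$ on $k$ cups (in the negative-fill game from any average-$0$ start); (b) $\alg(f_i)$ uses at most $2^{O(i^2)}$ steps; and (c) $f_i(k) < k$, which is the hypothesis needed to feed $f_i$ into \cref{lem:adaptiveAmplification}. For $i$ with $2^{ci} = O(1)$ this is the base case: $\trivalg$ gives backlog $1/2 = \Omega(k^{1-\varepsilon})$ for constant $k$, in $O(1)$ time, with $f_0(k) < k$; very small $k$ (below a constant threshold depending on $\delta,\varepsilon$) are absorbed into the base case throughout.

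For the inductive step, \cref{lem:adaptiveAmplification} gives, for $k \le 2^{c(i+1)}$,
$$f_{i+1}(k) \ge (1-\delta)\, f_i\!\left(\floor{(1-\delta)k}\right) + f_i\!\left(\ceil{\delta k}\right).$$
Picking $c$ small enough that $(1-\delta)2^{c} \le 1$ forces $\floor{(1-\delta)k} \le 2^{ci}$ and $\ceil{\delta k} \le 2^{ci}$, so the inductive hypothesis (a) applies to both terms; absorbing the $\floor{\cdot},\ceil{\cdot}$ rounding into an additive $O(1)$ error (the multiplicative loss is $1 - O(1/k)$), this yields
$$f_{i+1}(k) \ge c_0 k^{1-\varepsilon}\left[(1-\delta)^{2-\varepsilon} + \delta^{1-\varepsilon}\right] - O(1).$$
Because $\delta^{1-\varepsilon}$ dominates $(2-\varepsilon)\delta$ as $\delta \to 0^{+}$, the bracket exceeds $1+\eta$ for some constant $\eta = \eta(\varepsilon)>0$ once $\delta$ is a small enough constant depending on $\varepsilon$; the slack $\eta c_0 k^{1-\varepsilon}$ swallows the $O(1)$ error for $k$ past a constant, giving $f_{i+1}(k) \ge c_0 k^{1-\varepsilon}$, which is (a). Property (c) holds because the amplified strategy halts as soon as its target backlog is attained, so $f_{i+1}(k) = O(k^{1-\varepsilon}) < k$ for large $k$. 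Property (b): the running-time bound of \cref{lem:adaptiveAmplification} gives $T_{i+1}(k) \le k\ceil{\delta k}\,T_i(\floor{(1-\delta)k}) + T_i(\ceil{\delta k}) \le 2k^2\, T_i(k)$, and since $k \le 2^{c(i+1)}$ this unrolls over the $i+1$ levels to $T_{i+1}(k) \le 2^{O(i^2)}$.

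Taking $i = i_* = \Theta(\log n)$ and $k = n$ then gives backlog $\Omega(n^{1-\varepsilon})$ in time $2^{O(i_*^2)} = 2^{O(\log^2 n)}$, as claimed. I expect the main obstacle to be the joint calibration of the two constants: $\delta$ must be small enough in terms of $\varepsilon$ to make $(1-\delta)^{2-\varepsilon} + \delta^{1-\varepsilon} \ge 1+\eta$, while $c$ must be small enough in terms of $\delta$ to keep the recursive arguments $\floor{(1-\delta)k}$ and $\ceil{\delta k}$ inside the range $[\,2,\,2^{ci}\,]$ where the inductive hypothesis is available. The secondary annoyances are the floor/ceiling rounding (which is why strict slack $\eta>0$ and a constant lower bound on $k$ are needed) and checking the technical hypothesis $f_i(k) < k$ at each level so that \cref{lem:adaptiveAmplification} may legally be invoked.
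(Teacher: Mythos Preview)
Your proposal is correct and follows essentially the same approach as the paper: iterated amplification of $\trivalg$ with a fixed constant $\delta=\delta(\varepsilon)$, an induction showing $f_i(k)\ge\Omega(k^{1-\varepsilon})$ for $k$ up to an exponentially-growing threshold, and then taking $i_*=\Theta(\log n)$. The only cosmetic differences are that the paper tracks the validity range via a recursively defined sequence $g_i=\floor{g_{i-1}/(1-\delta)}$ rather than your $2^{ci}$, and proves $f_i(k)\ge ck^{1-\varepsilon}-1$ (the $-1$ absorbing rounding) rather than using strict slack $\eta>0$; the key inequality $(1-\delta)^{2-\varepsilon}+\delta^{1-\varepsilon}>1$ and the $2^{O(\log^2 n)}$ running-time unrolling are identical. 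One small note: your justification of~(c) via ``halts as soon as its target backlog is attained'' is not quite right, but the conclusion is fine since you may simply \emph{declare} $f_i(k)=c_0k^{1-\varepsilon}<k$ as the claimed backlog fed into \cref{lem:adaptiveAmplification}; the paper is equally informal on this point.
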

\begin{proof}
  Take constant $\varepsilon \in (0,1/2)$. Let $c, \delta$ be
  constants that will be chosen (later) as functions of
  $\varepsilon$ satisfying $c\in (0,1), 0 < \delta \ll 1/2$.
  We show how to achieve backlog at least $cn^{1-\varepsilon}-1$.

 Let $\alg(f_0)=\trivalg$, the algorithm given by
 \cref{prop:adaptiveBase}; recall $\trivalg$ achieves backlog $f_0(k)
 \ge 1/2$ for all $k \ge 2$, and $f_0(1) = 0$. 
  Next, using the Amplification Lemma we recursively construct
  $\alg(f_{i+1})$ as the amplification of $\alg(f_{i})$ for $i\ge 0$. 
  Define a sequence $g_i$ with 
  $$ g_i = \begin{cases}
    \ceil{16/\delta},  & i = 0,\\
    \floor{g_{i-1}/(1-\delta)} & i \ge 1.
  \end{cases}$$
  We claim the following regarding this construction:
  \begin{clm}
    \label{clm:fikinduction}
    For all $i\ge0$,
    \begin{equation}
      f_i(k) \ge ck^{1-\varepsilon}-1\,\, \text{ for all }\,\, k\in [g_{i}].
    \label{eqn:fikinduction}
    \end{equation}
  \end{clm}
  \begin{proof}
  We prove \cref{clm:fikinduction} by induction on $i$.
  For $i=0$, the base case, \eqref{eqn:fikinduction} can
  be made true by taking $c$ sufficiently small; in particular,
  taking $c<1$ makes \eqref{eqn:fikinduction} hold for $k = 1$,
  and, as $g_0 > 2$, taking $c$ small enough to make $c
  g_0^{1-\varepsilon} -1 \le f_0(g_0) = 1/2$ makes
  \eqref{eqn:fikinduction} hold for $k\in [2, g_0]$ by
  monotonicity of $k \mapsto ck^{1-\varepsilon}-1$\footnote{Note
  that it is important here that $\varepsilon$ and
  $\delta$ are constants, that way $c$ is also a constant.}.

  As our inductive hypothesis we assume
  \eqref{eqn:fikinduction} for $f_i$; we aim to show that 
  \eqref{eqn:fikinduction} holds for $f_{i+1}$. Note that, by
  design of $g_i$, if $k \le g_{i+1}$ then $\floor{k\cdot (1-\delta)} \le g_i$.
  Consider any $k\in [g_{i+1}]$. First we deal with the trivial
  case where $k \le g_0$. In this case 
  $$f_{i+1}(k) \ge f_i(k) \ge \cdots \ge f_0(k) \ge ck^{1-\varepsilon} -1.$$
  Now we consider the case where $k \ge g_0$.
  Since $f_{i+1}$ is the amplification of $f_i$ we have
  $$f_{i+1}(k) \ge (1-\delta) f_i(\floor{(1-\delta)k}) + f_i(\ceil{\delta k}).$$
  By our inductive hypothesis, which applies as $\ceil{\delta k}\le g_i, \floor{k\cdot (1-\delta)} \le g_i$, we have
  $$f_{i+1}(k) \ge (1-\delta) (c \cdot\floor{(1-\delta)k}^{1-\varepsilon}-1) + c\ceil{\delta k}^{1-\varepsilon} - 1. $$
  Dropping the floor and ceiling, incurring a $-1$ for dropping the floor, we have
  $$f_{i+1}(k) \ge (1-\delta) (c \cdot ((1-\delta)k-1)^{1-\varepsilon}-1) + c (\delta k)^{1-\varepsilon} - 1.$$
  Because $(x-1)^{1-\varepsilon} \ge x^{1-\varepsilon} -1$, as $x\mapsto x^{1-\varepsilon}$ is a sub-linear
  sub-additive function, we have 
  $$f_{i+1}(k) \ge (1-\delta) c \cdot (((1-\delta)k)^{1-\varepsilon}-2) + c(\delta k)^{1-\varepsilon}-1.$$
  Moving the $ck^{1-\varepsilon}$ to the front we have
  $$ f_{i+1}(k) \ge ck^{1-\varepsilon} \cdot\left((1-\delta)^{2-\varepsilon} + \delta^{1-\varepsilon} - \frac{2(1-\delta)}{k^{1-\varepsilon}} \right) -1.$$
  Because $(1-\delta)^{2-\varepsilon} \ge 1-(2-\varepsilon)\delta$, a fact called Bernoulli's Identity, we have
  $$f_{i+1}(k) \ge ck^{1-\varepsilon} \cdot\left(1-(2-\varepsilon)\delta + \delta^{1-\varepsilon} - \frac{2(1-\delta)}{k^{1-\varepsilon}} \right)-1.$$
  Of course $-2(1-\delta) \ge -2$, so 
  $$f_{i+1}(k) \ge ck^{1-\varepsilon} \cdot\left(1-(2-\varepsilon)\delta + \delta^{1-\varepsilon} - \frac{2}{k^{1-\varepsilon}} \right) -1.$$
  Because 
  $$\frac{-2}{k^{1-\varepsilon}} \ge \frac{-2}{g_0^{1-\varepsilon}} \ge -2(\delta/16)^{1-\varepsilon} \ge -\delta^{1-\varepsilon}/2,$$ 
  which follows from our choice of $g_0 = \ceil{16/\delta}$ and the restriction
  $\varepsilon<1/2$, we have
  $$f_{i+1}(k) \ge ck^{1-\varepsilon}
  \cdot\left(1-(2-\varepsilon)\delta + \delta^{1-\varepsilon} - \delta^{1-\varepsilon}/2 \right)-1.$$
  Finally, combining terms we have
  $$f_{i+1}(k) \ge  ck^{1-\varepsilon} \cdot\left(1-(2-\varepsilon)\delta + \delta^{1-\varepsilon}/2\right)-1. $$
  Because $\delta^{1-\varepsilon}$ dominates $\delta$ for
  sufficiently small $\delta$, there is a choice of
  $\delta=\Theta(1)$ such that 
  $$1-(2-\varepsilon)\delta + \delta^{1-\varepsilon}/2 \ge 1.$$ 
  Taking $\delta$ to be this
  small we have,
  $$f_{i+1}(k) \ge ck^{1-\varepsilon}-1,$$
  completing the proof. We remark that the choices of $c, \delta$
  are the same for every $i$ in the inductive proof, and depend
  only on $\varepsilon$. 
  \end{proof}

  To complete the proof, we will show that $g_i$ grows
  exponentially in $i$. This implies that there exists $i_* \le
  O(\log n)$ such that $g_{i_*} \ge n$, and hence we have an
  algorithm $\alg(f_{i_*})$ that achieves backlog
  $cn^{1-\varepsilon}-1$ on $n$ cups, as desired.
  
  We lower bound the sequence $g_i$ with another sequence $g_i'$
  defined as 
  $$g_i'=\begin{cases}
    4/\delta, & i=0\\
    g_{i-1}' / (1-\delta) -1, & i> 0.
  \end{cases}$$
  Solving this recurrence, we find 
  $$g_i' = \frac{4-(1-\delta)^2}{\delta} \frac{1}{(1-\delta)^i}
  \ge \frac{1}{(1-\delta)^i},$$
  which clearly exhibits exponential growth. 
  In particular, let $i_* = \ceil{\log_{1/(1-\delta)} n}$. Then,
  $g_{i_*} \ge g_{i_*}' \ge n,$ as desired.

  Let the running time of $f_i(n)$ be $T_i(n)$. From the
  Amplification Lemma we have following recurrence bounding
  $T_i(n)$:
  \begin{align*}
    T_i(n) &\le n\ceil{\delta n} \cdot T_{i-1}(\floor{(1-\delta)n}) +
  T_{i-1}(\ceil{\delta n}) \\
  &\le 2n^2T_{i-1}(\floor{(1-\delta)n}).
  \end{align*}
  It follows that $\alg(f_{i_*})$, recalling that $i_* \le O(\log n)$, has running time
  $$T_{i_*}(n) \le (2n^2)^{O(\log n)} \le 2^{O(\log^2 n)},$$
  as desired.

\end{proof}

Now we provide a construction that can achieve backlog $\Omega(n)$
in very long games. The construction can be interpreted as the same
argument as in \cref{thm:adaptivePoly} but with an extremal setting of
$\delta$ to $\Theta(1/n)$\footnote{Or more precisely, setting
$\delta$ in each level of recursion to be $\Theta(1 / n)$, where
$n$ is the subproblem size; note in particular that $\delta$
changes between levels of recursion, which was not the case in
the proof of \cref{thm:adaptivePoly}.}.

\begin{theorem}
  \label{thm:factorialTimeAlg}
  There is an adaptive filling strategy that
  achieves backlog $\Omega(n)$ in time $O(n!)$.
\end{theorem}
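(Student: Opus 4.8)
The plan is to reuse the iterated-amplification machinery behind \cref{thm:adaptivePoly} essentially verbatim, but to push the amplification parameter $\delta$ into its extremal regime $\delta = \Theta(1/m)$, where $m$ is the current subproblem size. With a \emph{constant}-size anchor set at every level of the recursion, each amplification then enlarges the number of cups that can be handled by only $O(1)$ while adding a positive constant to the achievable backlog, so that after $\Theta(n)$ amplifications one obtains backlog $\Omega(n)$. Concretely, fix the constant $n_0 = 2$; let $\alg(f_0) = \trivalg$, which by \cref{prop:adaptiveBase} achieves backlog $f_0(k) \ge 1/2$ for every $k \ge 2$ in $O(1)$ time (a slight modification of $\trivalg$ would give $f_0(n_0) = 1$ instead, affecting only constants); set $g_i = (i+1)n_0$; and for $i \ge 0$ let $\alg(f_{i+1})$ be the amplification of $\alg(f_i)$ (via the Amplification Lemma, \cref{lem:adaptiveAmplification}) with $\delta_{i+1} = n_0/g_{i+1} = 1/(i+2)$. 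When $\alg(f_{i+1})$ is run on $g_{i+1}$ cups, its anchor set has size $\ceil{\delta_{i+1}g_{i+1}} = n_0$ and its non-anchor set has size $\floor{(1-\delta_{i+1})g_{i+1}} = g_i$, so the two recursive invocations of $\alg(f_i)$ inside the Amplification Lemma are on $g_i$ cups (its design size) and on $n_0$ cups; on $n_0 = 2$ cups $\alg(f_i)$ simply defaults to $\trivalg$ (the defaulting condition of the Amplification Lemma holds there since $f_i(1) = 0$), so it achieves backlog $\ge 1/2$ in $O(1)$ time.

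The backlog bound is the analog of \cref{clm:fikinduction}: one proves by induction on $i$ that $\alg(f_i)$ achieves backlog $\Omega(k)$ for every $k \le g_i$, the only change from \cref{clm:fikinduction} being that the shrinking schedule $\delta_{i+1} = \Theta(1/i)$ makes the threshold $g_i$ grow \emph{linearly} in $i$ (indeed $g_{i+1} = g_i + n_0$) rather than exponentially. The heart of the inductive step is the Amplification Lemma guarantee evaluated at $g_{i+1}$, which (noting $\ceil{\delta_{i+1}g_{i+1}} = n_0$ and that $(1-\delta_{i+1})g_{i+1} = g_i$ is an integer, so there is no floor loss) gives
\[ f_{i+1}(g_{i+1}) \ge (1-\delta_{i+1})f_i(g_i) + f_i(n_0) \ge \tfrac{i+1}{i+2}\,b_i + \tfrac12 \eqdef b_{i+1}, \]
with $b_0 \defeq 1/2 \le f_0(n_0)$. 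Setting $B_i = (i+1)b_i$ turns this into $B_{i+1} \ge B_i + \tfrac{i+2}{2}$, whence $B_i = \Omega(i^2)$ and therefore $b_i = \Omega(i) = \Omega(g_i)$; in particular $b_i < g_i$, which is precisely the hypothesis ``$f(n) < n$'' that the Amplification Lemma demands of each $\alg(f_i)$. The $O(1)$ floor/ceiling losses for sizes $k$ strictly between consecutive thresholds, together with the treatment of small $k$, are handled exactly as in \cref{clm:fikinduction} by carrying a linear lower bound $\Omega(k)$ that vanishes at $k = 1$; they cause no trouble because the per-level gain is a positive constant rather than a multiplicative factor. Finally, given $n$ cups we take $i_* = \floor{n/n_0} - 1 = \Theta(n)$, so $g_{i_*} \le n$, run $\alg(f_{i_*})$ on $g_{i_*}$ of the cups (the filler ignores the remaining $n - g_{i_*}$ cups, which the emptier cannot exploit), and obtain backlog $b_{i_*} = \Omega(n)$.

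For the running time, feeding $\delta_{i+1} = n_0/g_{i+1}$ into the Amplification Lemma's time recurrence yields
\[ T_{i+1}(g_{i+1}) \le g_{i+1}\cdot n_0\cdot T_i(g_i) + T_i(n_0) = O(g_{i+1})\cdot T_i(g_i) + O(1), \]
since $\ceil{\delta_{i+1}g_{i+1}} = n_0$ and $T_i(n_0) = O(1)$ (as $\alg(f_i)$ defaults to $\trivalg$ on $n_0$ cups). Unrolling over the $i_* = \Theta(n)$ levels and using $g_j = \Theta(n_0 j)$, one gets $T_{i_*}(g_{i_*}) \le \prod_{j=1}^{i_*} O(g_j)\cdot O(1) = 2^{O(i_*)}\cdot i_*!$. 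The decisive point is that the recursion has depth only $i_* = n/n_0 < n$, so the factorial appearing is $i_*! = (n/n_0)!$, \emph{not} $n!$; and the $2^{O(i_*)} = 2^{O(n)}$ overhead is dwarfed by the super-exponential gap $n!/(n/n_0)! \ge (n/n_0)^{n(1-1/n_0)} = 2^{\Omega(n\log n)}$. Hence $T_{i_*}(g_{i_*}) = 2^{O(n)}\cdot(n/n_0)! = O(n!)$.

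The step I expect to need the most care is exactly this running-time accounting: a careless unrolling of the Amplification Lemma's time recurrence appears to risk a blow-up to $(n!)^{\Omega(1)}$ or to $2^{\Omega(n)}\cdot n!$, and the clean bound $O(n!)$ survives only because (i) using a constant-size anchor caps the recursion depth at $n/n_0 < n$, keeping the relevant factorial at $(n/n_0)!$, and (ii) the polynomial and base-case factors accumulate to merely $2^{O(n)}$, which the super-exponential gap $n!/(n/n_0)!$ absorbs. The backlog analysis, by contrast, is a routine rerun of the induction behind \cref{thm:adaptivePoly}, the one substantive change being the extremal schedule $\delta_{i+1} = \Theta(1/i)$ — which is precisely the trade that converts the $2^{O(\log^2 n)}$ running time of that theorem into $O(n!)$ while improving the backlog guarantee from $n^{1-\epsilon}$ to $\Omega(n)$.
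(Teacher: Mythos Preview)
Your proposal is correct and follows essentially the same approach as the paper: iterate the Amplification Lemma with $\delta = \Theta(1/i)$ at level $i$ so that the anchor set has constant size $n_0$, derive a linear recurrence showing $f_i((i+1)n_0) = \Omega(i)$, and unroll the time recurrence to get a product of the form $n_0^{\Theta(n)} \cdot (n/n_0)! = O(n!)$. The only cosmetic differences are that the paper first builds a $\trivalg_2$ achieving backlog $1$ on $n_0 = 8$ cups (to avoid fractional constants) and then proves the closed-form bound $f_i((i+1)n_0) \ge \sum_{j=0}^{i}(1 - j/(i+1)) = i/2 + 1$ directly, whereas you take $n_0 = 2$ and reach the same $\Omega(i)$ conclusion via the substitution $B_i = (i+1)b_i$; also, the paper (like you in the end) only tracks $f_i$ at the specific thresholds $(i+1)n_0$ rather than at all $k \le g_i$, so your aside about intermediate $k$ is unnecessary.
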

\begin{proof}
  First we construct a slightly stronger version of $\trivalg$
  that achieves backlog $1$ on $n \ge n_0=8$ cups, instead of
  just backlog $1/2$; this simplifies the analysis.
  \begin{clm}
    There is a filling algorithm $\trivalg_2$ that achieves
    backlog at least $1$ on $n_0 = 8$ cups.
  \end{clm}
  \begin{proof}
    Let $\trivalg_1$ be the amplification of $\trivalg$ using
    $\delta = 1/2$. On $4$ cups $\trivalg_1$ achieves backlog at
    least $(1/2)(1/2)+1/2 = 3/4$.
    Let $\trivalg_2$ be the amplification of $\trivalg_1$ using
    $\delta = 1/2$. On $8$ cups $\trivalg_2$ achieves backlog at
    least $(1/2)(3/4) + 3/4 \ge 1$.
  \end{proof}

  Let $\alg(f_0)=\trivalg_2$; we have $f_0(k) \ge 1$ for all $k
  \ge n_0$. For $i > 0$ we construct $\alg(f_{i})$ as the
  amplification of $\alg(f_{i-1})$ using the Amplification Lemma
  with parameter $\delta = 1/(i+1)$. 

  We claim the following regarding this construction:
  \begin{clm}
    \label{clm:yayactuallygetn}
  For all $i\ge 0$,
  \begin{equation}
    \label{eq:omegaNpfinduction}
    f_i((i+1)n_0) \ge \sum_{j=0}^i \left(1-\frac{j}{i+1}\right).
  \end{equation}
  \end{clm}
  \begin{proof}
  We prove \cref{clm:yayactuallygetn} by induction on $i$. When
  $i=0$, the base case, \eqref{eq:omegaNpfinduction} becomes
  $f_{0}(n_0) \ge 1$ which is true. Assuming
  \eqref{eq:omegaNpfinduction} for $f_{i-1}$, we now show
  \eqref{eq:omegaNpfinduction} holds for $f_{i}$.
  Because $f_{i}$ is the amplification of $f_{i-1}$ with $\delta = 1/(i+1)$, we have by the Amplification Lemma
  $$f_{i}((i+1)\cdot n_0) \ge \left(1 - \frac{1}{i+1}\right) f_{i-1}(i\cdot n_0) + f_{i-1}(n_0).$$
  Since $f_{i-1}(n_0) \ge f_0(n_0) \ge 1$ we have
  $$f_{i}((i+1)\cdot n_0) \ge \left(1 - \frac{1}{i+1}\right) f_{i-1}(i\cdot n_0) + 1.$$
  Using the inductive hypothesis we have
  $$f_{i}((i+1)\cdot n_0) \ge \left(1 - \frac{1}{i+1}\right)\sum_{j=0}^{i-1} \left(1-\frac{j}{i}\right) + 1.$$
  Note that 
  $$\left(1 - \frac{1}{i+1}\right)\cdot
  \left(1-\frac{j}{i}\right) = \frac{i}{i+1} \cdot \frac{i-j}{i}
  = \frac{i-j}{i+1} = 1 - \frac{j+1}{i+1}.$$
  Thus we have the desired bound:
  $$f_{i}((i+1)\cdot n_0) \ge \sum_{j=1}^{i}
  \left(1-\frac{j}{i+1}\right) + 1 = \sum_{j=0}^{i}
  \left(1-\frac{j}{i+1}\right).$$
  \end{proof}

  Let $i_* = \floor{n/n_0}-1$, which by design satisfies
  $(i_*+1)n_0 \le n$. By \cref{clm:yayactuallygetn} we have
  $$f_{i_*}((i_*+1)\cdot n_0) \ge \sum_{j=0}^{i_*} \left(1 -
  \frac{j}{i_*+1} \right) = i_*/2 + 1.$$ As $i_* = \Theta(n)$, we
  have thus shown that $\alg(f_{i_*})$ can achieve backlog
  $\Omega(n)$ on $n$ cups.

  Let $T_i$ be the running time of $\alg(f_i)$.
  The recurrence for the running running time of $f_{i_*}$ is 
  $$T_i(n) \le n \cdot n_0T_{i-1}(n-n_0)+O(1).$$
  Clearly $T_{i_*}(n) \le O(n!)$.

\end{proof}

\section{Upper Bound}\label{sec:upperBound}

In this section we analyze the \defn{greedy emptier}, which
always empties from the $p$ fullest cups. We prove in
\cref{cor:upperbound} that the greedy emptier prevents backlog
from exceeding $O(n)$. In order to analyze the greedy emptier, we
establish a system of invariants that hold at every step of the
game. 

The main result of the section is the following theorem\footnote{
  Recall that we use $\mu_S(X)$ and $m_S(X)$ to denote the average
  fill and the mass, respectively, of a set of cups $X$ at state
  $S$. In some parts of the paper we can omit the subscript indicating
  the state at which the properties are measured, since the state is
  clear. However, in \cref{sec:upperBound} it is necessary to make the
  state $S$ explicit in the notation.}.
\begin{theorem}
  \label{thm:invariant}
  In the variable-processor cup game on $n$ cups, the greedy
  emptier maintains, at every step $t$,
  the invariants
  \begin{equation}
    \label{eq:invariants}
      \mu_{S_t}(S_t([k])) \le 2n-k
  \end{equation}
  for all  $k \in [n]$.
\end{theorem}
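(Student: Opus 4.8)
The plan is to prove the system \eqref{eq:invariants} by induction on the round $t$, with the empty starting configuration $S_1$ as the base case (there every average fill is $0 \le 2n-k$). For the inductive step I would assume that \eqref{eq:invariants} holds for every $k$ at the start of round $t$ and prove it for every $k$ at the start of round $t+1$; I fix one value of $k$ and argue about the $k$ fullest cups at $S_{t+1}$. The whole point of the strengthened form $2n-k$ (rather than, say, a flat $O(n)$) is that the extra ``$-k$'' slack is exactly what the induction needs to close.

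The first reduction I would make is to assume, without loss of generality, that the filler's additions during round $t$ do not reorder the cups, i.e.\ the rank-$r$ cup of $S_t$ coincides with the rank-$r$ cup of $I_t$ for every $r \in [n]$. Intuitively this is the worst case for the emptier: any set of additions can be ``re-sorted'' into an order-preserving set of additions of the same total mass without decreasing any prefix mass at $I_t$ or at $S_{t+1}$, so handling this case suffices. Making this exchange argument airtight is one of the two places where I expect genuine work.

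Under that reduction I would set up the partition from the technical overview: let $A$ be the cups that are among the $k$ fullest at $I_t$, are emptied during round $t$, and are still among the $k$ fullest at $S_{t+1}$; let $B$ be the cups among the $k$ fullest at $I_t$ that are emptied but drop out of the top $k$ at $S_{t+1}$; and let $C$ be the cups of ranks $|A|+|B|+1,\dots,k+|B|$ at $I_t$. Greediness forces the emptied cups to be exactly the $p_t$ fullest, which gives $|A|+|B|=\min(p_t,k)$ and, after a short check, that the $k$ fullest cups at $S_{t+1}$ are precisely $A\cup C$; thus it suffices to show $m_{S_{t+1}}(AC)\le k(2n-k)$. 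Writing $a=|A|,\ b=|B|,\ c=|C|$, I would then split into three cases exactly as in the overview. In Case~1 (some cup of $A$ zeroes out), every cup of $C$ has fill $<1$ at $I_t$ since it lies below the zeroed cup, and $m_{S_{t+1}}(A)$ is at most the mass of the $a-1$ fullest cups of $S_t$, hence at most $(a-1)(2n-(a-1))$ by the inductive hypothesis; plugging these in together with $c=k-a$ and $2n\ge 2k$ finishes it by elementary algebra. In Case~2 ($b=0$ and no zeroing in $A$), the set $AC$ is literally the same $k$ cups at $S_t$, at $I_t$, and at $S_{t+1}$; the filler adds at most $\min(p_t,k)=a$ to it while the emptier removes at least $a$, so its mass cannot increase and the hypothesis at round $t$ applies. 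Case~3 ($b>0$ and no zeroing in $A$) is the crux: from $m_{S_{t+1}}(AC)\le m_{S_t}(AC)+b$ I would use $\mu_{S_t}(B)\ge\mu_{S_t}(C)$ to bound $m_{S_t}(AC)$ by a convex combination of $m_{S_t}(A)$ and $m_{S_t}(ABC)$, substitute the inductive bounds $m_{S_t}(A)\le a(2n-a)$ and $m_{S_t}(ABC)\le(a+b+c)(2n-a-b-c)$, and push the algebra through to reach $m_{S_t}(AC)\le k(2n-k)-cb$; since $b>0$ forces $c>0$, the term $-cb$ absorbs the $+b$ and yields $m_{S_{t+1}}(AC)\le k(2n-k)$.

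The two steps I expect to be the main obstacles are (i) the order-preserving reduction, which is conceptually clear but needs a careful argument that no relevant prefix mass is hurt, and (ii) the Case~3 algebraic collapse from the convex-combination bound down to $k(2n-k)-cb$; the latter is ``just'' a polynomial inequality in $a,b,c,n$ but is delicate, and I would look for the cleaner combinatorial reformulation hinted at in the overview to keep it transparent. Finally, specializing the established invariant to $k=1$ gives backlog at most $2n-1=O(n)$, the intended corollary.
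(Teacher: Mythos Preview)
Your proposal is correct and follows essentially the same route as the paper: induction on $t$, the WLOG reduction that $S_t(r)=I_t(r)$ for all $r$, the partition into $A,B,C$, and the same three cases with the Case~3 convex-combination bound collapsing to $k(2n-k)-cb$. The paper carries out the order-preserving reduction (your obstacle (i)) by \emph{relabeling cups} in $S_t$ to a state $S_t'$ satisfying the invariants from which the filler can still legally reach $I_t$, rather than by re-sorting the additions as you phrase it; this dual formulation may be cleaner to make airtight, since the invariants at $S_t'$ are immediate from the relabeling.
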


By applying \cref{thm:invariant} to the case of $k = 1$, we arrive at a bound on backlog:
\begin{corollary}
  In the variable-processor cup game on $n$ cups, the greedy
  emptying strategy never lets backlog exceed $O(n)$.
  \label{cor:upperbound}
\end{corollary}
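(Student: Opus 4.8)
The passage from Theorem~\ref{thm:invariant} to Corollary~\ref{cor:upperbound} is immediate: the set $S_t(\{1\})$ of the single fullest cup has $\mu_{S_t}(S_t(\{1\}))$ equal to the backlog at $S_t$, so instantiating \eqref{eq:invariants} at $k=1$ gives backlog at most $2n-1$ at the start of every round; since the filler adds at most $1$ unit to any cup during a round, the backlog is at most $2n$ at every state, which is $O(n)$. So all of the work is in proving Theorem~\ref{thm:invariant}, and that is what I sketch.

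I would prove Theorem~\ref{thm:invariant} by induction on the round. For round $1$ all cups are empty, so $\mu_{S_1}(S_1([k]))=0\le 2n-k$ for every $k\le n$. For the inductive step, fix a round $t$, assume \eqref{eq:invariants} at $S_t$ for all $k$, fix the target $k$, and aim to show $m_{S_{t+1}}(S_{t+1}([k]))\le k(2n-k)$ (equivalently $\mu_{S_{t+1}}(S_{t+1}([k]))\le 2n-k$). The structural heart of the step is to describe $S_{t+1}([k])$ in terms of what greedy does at $I_t$. Let $A$ be the cups among the $k$ fullest at $I_t$ that are emptied at step $t$ and remain among the $k$ fullest at $S_{t+1}$; let $B$ be the cups among the $k$ fullest at $I_t$ that are emptied but fall out of the top $k$ at $S_{t+1}$; and let $C$ consist of the cups of ranks $|A|+|B|+1,\dots,k+|B|$ at $I_t$. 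Writing $a=|A|$, $b=|B|$, $c=|C|=k-a$, greediness forces $A\cup B$ to be exactly the top $\min(p_t,k)$ cups at $I_t$, so $a+b=\min(p_t,k)$, and one checks (this is where both greediness and a preliminary reduction — that we may assume the rank-$r$ cup at $S_t$ is the rank-$r$ cup at $I_t$ for every $r$ — are used) that $S_{t+1}([k])=A\cup C$.

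The argument then splits into three cases. \textbf{Case 1:} some cup of $A$ zeroes out at step $t$. Then every cup of $C$ sits below a zeroed cup at $I_t$, so each has fill at most $1$ (at $I_t$, hence at $S_{t+1}$), giving $m_{S_{t+1}}(C)\le c$; moreover each non-zeroed cup of $A$ has $S_{t+1}$-fill at most its $S_t$-fill and the zeroed one has fill $0$, so $m_{S_{t+1}}(A)$ is at most the mass of the $a-1$ fullest cups at $S_t$, which by the inductive hypothesis is at most $(a-1)(2n-a+1)$; a short calculation with $k=a+c$ and $a\le n$ then gives $m_{S_{t+1}}(AC)\le k(2n-k)$. \textbf{Case 2:} $b=0$ and no cup of $A$ zeroes out. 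Then $AC$ is the same set of $k$ cups at $S_t$ and at $S_{t+1}$; the emptier removes $a$ units from it while the filler adds at most $\min(p_t,k)=a+b=a$ units to it, so its mass does not increase and the bound follows from the inductive hypothesis. \textbf{Case 3:} $b>0$ and no cup of $A$ zeroes out. First, emptying $a$ units from $A$ and adding at most $\min(p_t,k)=a+b$ units overall gives $m_{S_{t+1}}(AC)\le m_{S_t}(AC)+b$. Next, since $B$ outranks $C$ at $I_t=S_t$ we have $\mu_{S_t}(B)\ge\mu_{S_t}(C)$, which rearranges to $m_{S_t}(AC)\le \tfrac{c}{b+c}m_{S_t}(ABC)+\tfrac{b}{b+c}m_{S_t}(A)$; bounding $m_{S_t}(A)\le a(2n-a)$ and $m_{S_t}(ABC)\le(a+b+c)(2n-a-b-c)$ by the inductive hypothesis and simplifying the resulting rational expression in $a,b,c$ collapses it to $m_{S_t}(AC)\le k(2n-k)-cb$. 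Finally, $b>0$ forces $a\le k-b<k$, hence $c=k-a\ge 1$, so $m_{S_{t+1}}(AC)\le k(2n-k)-cb+b\le k(2n-k)$, which closes the induction.

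The main obstacle is Case 3, and specifically the algebraic collapse: one must verify that plugging the inductive bounds into $\tfrac{c}{b+c}m_{S_t}(ABC)+\tfrac{b}{b+c}m_{S_t}(A)$ yields exactly $k(2n-k)-cb$, and the fact that the leftover slack is the conveniently negative term $-cb$ is precisely what forces the ``$2n-k$'' shape of the invariant (a tighter target like $n-k$ would not close the induction). I expect this to take some trial and error, and it would be worth seeking a more transparent combinatorial derivation. The second, more bookkeeping-heavy obstacle is justifying the reduction that lets us assume the rank-$r$ cup at $S_t$ equals the rank-$r$ cup at $I_t$: since the filler can reshuffle which cups occupy which ranks within a round, one must argue (using that greedy empties the $p_t$ fullest at $I_t$) that re-indexing cups by their $I_t$-ranks preserves all the inductive hypotheses as well as the conclusion. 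The base case, Cases 1 and 2, and the reduction to the Corollary are all routine.
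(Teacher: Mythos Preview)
Your proposal is correct and follows the paper's proof essentially line for line: the same reduction of the corollary to the $k=1$ case of the invariant, the same inductive structure with the three cases on $(b,\text{zero-out})$, the same rank-interchangeability reduction to assume $S_t(r)=I_t(r)$, and the same Case-3 identity $m_{S_t}(AC)\le k(2n-k)-cb$ obtained by plugging the inductive bounds into $\tfrac{c}{b+c}m_{S_t}(ABC)+\tfrac{b}{b+c}m_{S_t}(A)$. The paper also supplies the ``more transparent combinatorial derivation'' you anticipate wanting for that collapse, via an auxiliary configuration $F$ that makes the $A$-versus-$BC$ tradeoff explicit.
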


\begin{proof}[Proof of \cref{thm:invariant}]
We prove the invariants by induction on $t$.
The invariants hold trivially for $t=1$ (the base case for the inductive proof): 
the cups start empty so $\mu_{S_1}(S_1([k])) = 0 \le 2n-k$ for all $k \in [n]$.

Fix a round $t \ge 1$, and any $k \in [n]$. We assume the invariant for all
values of $k' \in[n]$ for state $S_t$ (we will only explicitly use two of the
invariants for each $k$, but the invariants that we need depend on the
choice of $p_t$ by the filler) and show that
the invariant on the $k$ fullest cups holds on round $t+1$, i.e. that
$$\mu_{S_{t+1}}(S_{t+1}([k])) \le 2n-k.$$

Note that because the emptier is greedy it always empties from the cups
$I_t([p_t])$. Let $A$, with $a=|A|$, be $A = I_t([\min(k, p_t)]) \cap
S_{t+1}([k])$; $A$ consists of the cups that are among the $k$ fullest cups in
$I_t$, are emptied from, and are among the $k$ fullest cups in $S_{t+1}$. Let
$B$, with $b=|B|$, be $I_t([\min(k, p_t)]) \setminus A$; $B$ consists of the
cups that are among the $k$ fullest cups in state $I_t$, are emptied from, and
are not among the $k$ fullest cups in $S_{t+1}$. Let $C = I_t(a+b+[k-a])$, with
$c=k-a = |C|$; $C$ consists of the cups with ranks $a + b + 1, \ldots, k + b$
in state $I_t$. The set $C$ is defined so that $S_{t+1}([k]) = AC$, since once
the cups in $B$ are emptied from, the cups in $B$ are not among the $k$ fullest
cups, so cups in $C$ take their places among the $k$ fullest cups.

Note that $k-a \ge 0$ as $a+b \le k$, and also $|ABC| = k+b \le n$, because by
definition the $b$ cups in $B$ must not be among the $k$ fullest cups in state
$S_{t+1}$ so there are at least $k+b$ cups. 
Note that $a + b = \min(k, p_t)$. We also have that $A = I_t([a])$ and $B =
I_t(a+[b])$, as every cup in $A$ must have higher fill than all cups in $B$ in
order to remain above the cups in $B$ after $1$ unit of water is removed from
all cups in $AB$.

We now establish the following claim, which we call the \defn{interchangeability of cups}:
\begin{clm}
  \label{clm:interchangable}
  There exists a cup state $S_t'$ such that: (a) $S_t'$ satisfies the
  invariants \eqref{eq:invariants}, (b) $S_t'(r) = I_t(r)$ for all ranks
  $r\in[n]$, and (c) the filler can legally place water into cups in order to
  transform $S_t'$ into $I_t$. 
\end{clm}
\begin{proof}
  Fix $r \in [n]$. We will show that $S_t$ can be transformed into a state
  $S_t^r$ by relabelling only cups with ranks in $[r]$ such that (a) $S_t^r$
  satisfies the invariants \eqref{eq:invariants}, (b) $S_t^r([r]) = I_t([r])$
  and (c) the filler can legally place water into cups in order to transform
  $S_t^r$ into $I_t$.

Say there are cups $x, y$ with $x\in S_t([r]) \setminus I_t([r]), y \in
 I_t([r])\setminus S_t([r])$. Let the fills of cups $x,y$ at state $S_t$
 be $f_x, f_y$; note that 
 \begin{equation}
     f_x > f_y.
     \label{eq:fxfy}
 \end{equation} Let the amount of fill that the filler
 adds to these cups be $\Delta_x, \Delta_y \in [0,1]$; note that 
 \begin{equation}
 f_x +\Delta_x <f_y + \Delta_y.
 \label{eq:fxdxfydy}
 \end{equation}
 
Define a new state $S_t'$ where cup $x$ has fill $f_y$ and cup $y$ has fill
$f_x$. Note that the filler can transform state $S_t'$ into state $I_t$ by
placing water into cups as before, except changing the amount of water placed
into cups $x$ and $y$ to be  $f_x-f_y+\Delta_x$ and $f_y-f_x + \Delta_y$,
respectively.

In order to verify that the transformation from $S_t'$ to $I_t$ is a valid step
for the filler, one must check three conditions. First, the amount of water
placed by the filler is unchanged: this is because $(f_x-f_y + \Delta_x) +
(f_y-f_x+\Delta_y) = \Delta_x + \Delta_y$. Second, the fills placed in cups $x$
and $y$ are at most $1$: this is because $f_x-f_y+\Delta_x<\Delta_y \le 1$ (by
\eqref{eq:fxdxfydy}) and $f_y-f_x + \Delta_x < \Delta_x \le 1$ (by
\eqref{eq:fxfy}). Third, the fills placed in cups $x$ and $y$ are non-negative:
this is because $f_x-f_y + \Delta_x > \Delta_x \ge 0$ (by \eqref{eq:fxfy})
and $f_y-f_x+\Delta_y > \Delta_x \ge 0$ (by
\eqref{eq:fxdxfydy}). 

We can repeatedly apply this process to swap each cup in $I_t([r])\setminus
S_t([r])$ into being in $S_t'([r])$. At
the end of this process we will have some state $S_t^r$ for which
$S_t^r([r]) = I_t([r])$. Note that $S_t^r$ is simply a relabeling of $S_t$,
hence it must satisfy the same invariants \eqref{eq:invariants} satisfied by
$S_t$. Further, $S_t^r$ can be transformed into $I_t$ by a valid filling step.

Now we repeatedly apply this process, in descending order of ranks. 
In particular, we have the following process: create a sequence of states by
starting with $S_t^{n-1}$, and to get to state $S_t^{r}$ from state $S_t^{r+1}$
apply the process described above. 
Note that $S_t^{n-1}$ satisfies $S_t^{n-1}([n-1]) = I_t([n-1])$ and thus also
$S_t^{n-1}(n) = I_t(n)$.
If $S_t^{r+1}$ satisfies $S_t^{r+1}(r') = I_t(r')$ for all $r'>r+1$ then
$S_t^r$ satisfies $S_t^r(r') = I_t(r')$ for all $r > r$, because the transition
from $S_t^{r+1}$ to $S_t^r$ has not changed the labels of any cups with ranks
in $(r+1, n]$, but the transition does enforce $S_t^r([r]) = I_t([r])$, and
consequently $S_t^r(r+1) = I_t(r+1)$.
We continue with the sequential process until arriving at state $S_t^1$ in
which we have $S_t^1(r) = I_t(r)$ for all $r$.
Throughout the process each $S_t^r$ has satisfied the invariants
\eqref{eq:invariants}, so $S_t^1$ satisfies the invariants
\eqref{eq:invariants}. Further, throughout the process from each $S_t^r$ it is
possible to legally place water into cups in order to transform $S_t^r$ into
$I_t$.

Hence $S_t^1$ satisfies all the properties desired, and the proof of 
\cref{clm:interchangable} is complete.

\end{proof}

\cref{clm:interchangable} tells us that we may assume without loss of
generality that $S_t(r) = I_t(r)$ for each rank $r \in [n]$. We will make
this assumption for the rest of the proof. 

In order to complete the proof of the theorem, we break it into three cases. 

\begin{clm}
  If some cup in $A$ zeroes out in round $t$, then the invariant
  $\mu_{S_{t+1}}(S_{t+1}([k])) \le 2n-k$ holds.
\end{clm}
\begin{proof}
  Say a cup in $A$ zeroes out in step $t$. 
  Of course
  $$m_{S_{t+1}}(I_t([a-1])) \le (a-1)(2n-(a-1))$$
  because the $a-1$ fullest cups must have satisfied the invariant (with $k = a - 1$) on round
  $t$. Moreover, because $\fil_{S_{t+1}}(I_{t+1}(a)) = 0$
  $$m_{S_{t+1}}(I_t([a])) = m_{S_{t+1}}(I_t([a-1])).$$
  Combining the above equations, we get that
  $$m_{S_{t+1}}(A) \le (a-1)(2n-(a-1)).$$
  Furthermore, the fill of all cups in $C$ must be at most $1$ at state $I_t$ to be
  less than the fill of the cup in $A$ that zeroed out. Thus, 
  \begin{align*}
      m_{S_{t+1}}(S_{t+1}([k])) & = m_{S_{t + 1}}(AC)\\ 
                                & \le (a-1)(2n-(a-1))+k-a\\
                                &= a(2n-a) +a -2n+a-1 + k -a\\
                                &= a(2n-a) + (k-n) + (a-n) -1\\
                                &< a(2n-a)
  \end{align*}
  as desired. As $k$ increases from $1$ to $n$, $k(2n-k)$ strictly increases (it is a
  quadratic in $k$ that achieves its maximum value at $k=n$).
  Thus $a(2n-a) \le k(2n-k)$ because $a\le k$.
  Therefore,
  $$m_{S_{t+1}}(S_{t+1}([k])) \le k(2n-k).$$
\end{proof}

\begin{clm}
  If no cups in $A$ zero out in round $t$ and $b=0$, then the invariant
  $\mu_{S_{t+1}}(S_{t+1}([k])) \le 2n-k$ holds.
\end{clm}
\begin{proof}
If $b=0$, then $S_{t+1}([k]) = S_t([k])$. 
During round $t$ the emptier removes $a$ units of fill from the cups in $S_t([k])$,
specifically the cups in $A$. The filler cannot have added more than $k$ fill
to these cups, because it can add at most $1$ fill to any given cup. Also, the
filler cannot have added more than $p_t$ fill to the cups because this is the
total amount of fill that the filler is allowed to add. Hence the filler adds
at most $\min(p_t, k) = a+b=a+0=a$ fill to these cups.
Thus the invariant holds:
$$m_{S_{t+1}}(S_{t+1}([k])) \le m_{S_t}(S_t([k]))+a-a \le k(2n-k).$$
\end{proof}

The remaining case, in which no cups in $A$ zero out and $b > 0$ is the most technically interesting.
\begin{clm}
  If no cups in $A$ zero out on round $t$ and $b > 0$, then the invariant
  $\mu_{S_{t+1}}(S_{t+1}([k])) \le 2n-k$ holds.
\end{clm}
\begin{proof}
Because $b>0$ and $a+b \le k$ we have that $a
< k$, and $c = k-a > 0$. Recall that $S_{t+1}([k]) = AC$, so the mass of the
$k$ fullest cups at $S_{t+1}$ is the mass of $AC$ at $S_t$ plus any water added
to cups in $AC$ by the filler, minus any water removed from cups in $AC$ by the
emptier. The emptier removes exactly $a$ units of water from $AC$.
The filler adds no more than $p_t$ units of water to $AC$ (because the filler
adds at most $p_t$ total units of water per round) and the filler also
adds no more than $k = |AC|$ units of water to $AC$ (because the filler adds
at most $1$ unit of water to each of the $k$ cups in $AC$).
Thus, the filler adds no more than $a+b = \min(p_t, k)$ units of water to $AC$.
Combining these observations we have:
\begin{equation}
m_{S_{t+1}}(S_{t+1}([k])) \le m_{S_t}(AC) + b.
\label{eq:emptiereptiessomestufffillerfillssomestuff}
\end{equation}

The key insight necessary to bound this is to notice that larger values for
$m_{S_t}(A)$ correspond to smaller values for $m_{S_t}(C)$ because of the
invariants; the higher fill in $A$ \defn{pushes down} the fill that $C$ can
have. By capturing the pushing-down relationship combinatorially we will achieve the desired inequality.

We can upper bound $m_{S_t}(C)$ by 
\begin{align*}
m_{S_t}(C) & \le \frac{c}{b+c}m_{S_t}(BC) \\
&= \frac{c}{b+c}(m_{S_t}(ABC) - m_{S_t}(A))
\end{align*}
 because
$\mu_{S_t}(C) \le \mu_{S_t}(B)$ without loss of generality by the
interchangeability of cups.
Thus we have 
\begin{align}
  m_{S_t}(AC) &\le m_{S_t}(A) + \frac{c}{b+c}m_{S_t}(BC)\label{eqn:BCdiscounted}\\
  &= \frac{c}{b+c}m_{S_t}(ABC) + \frac{b}{b+c}m_{S_t}(A)\label{eqn:redistributeA}.
\end{align}

Note that the expression in \eqref{eqn:redistributeA} is monotonically
increasing in both $\mu_{S_t}(ABC)$ and $\mu_{S_t}(A)$. Thus, by numerically
replacing both average fills with their extremal values, $2n-|ABC|$ and
$2n-|A|$. At this point the claim can be verified by straightforward (but quite
messy) algebra (and by combining
\eqref{eq:emptiereptiessomestufffillerfillssomestuff} with
\eqref{eqn:redistributeA}). We instead give a more intuitive argument, in which
we examine the right side of \eqref{eqn:BCdiscounted} combinatorially.

 Consider a new configuration of fills $F$ achieved by starting with state
 $S_t$, and moving water from $BC$ into $A$ until $\mu_{F}(A) = 2n-|A|$.
 \footnote{Note that whether or not $F$ satisfies the invariants is
 irrelevant.} This transformation increases (strictly increases if and only if
 we move a non-zero amount of water) the right side of
 \eqref{eqn:BCdiscounted}. In particular, if mass $\Delta \ge 0$ fill is moved
 from $BC$ to $A$, then the right side of \eqref{eqn:BCdiscounted} increases by
 $\frac{b}{b+c} \Delta \ge 0$. Note that the fact that moving water from $BC$
 into $A$ increases the right side of \eqref{eqn:BCdiscounted} formally
 captures the way the system of invariants being proven forces a tradeoff
 between the fill in $A$ and the fill in $BC$---that is, higher fill in $A$
 pushes down the fill that $BC$ (and consequently $C$) can have.

  Since $\mu_F(A)$ is above $\mu_{F}(ABC)$, the greater than average fill of
  $A$ must be counter-balanced by the lower than average fill of $BC$. In
  particular we must have
  $$(\mu_F(A) - \mu_F(ABC))|A| = (\mu_F(ABC) -\mu_F(BC))|BC|.$$
  Note that 
  \begin{align*}
  & \mu_F(A) -\mu_F(ABC) \\
  &= (2n-|A|) - \mu_F(ABC) \\
  &\ge (2n-|A|) - (2n-|ABC|) \\
  &= |BC|.    
  \end{align*}
  Hence we must have 
  $$\mu_F(ABC) - \mu_F(BC) \ge |A|.$$
  Thus 
  \begin{equation}
      \mu_F(BC) \le \mu_F(ABC) - |A| \le 2n-|ABC| -|A|.
      \label{eq:BCispusheddown}
  \end{equation}
  Combing \eqref{eqn:BCdiscounted} with the fact that the transformation from
  $S_t$ to $F$ only increases the right side of \eqref{eqn:BCdiscounted}, along
  with \eqref{eq:BCispusheddown}, we have the following bound:
  \begin{align}
    m_{S_t}(AC)
  &\le m_{F}(A) + c\mu_{F}(BC) \nonumber \\
  &\le a(2n-a) + c(2n-|ABC|-a) \nonumber \\
  &\le (a+c)(2n-a) - c(a+c+b) \nonumber \\
  &\le (a+c)(2n-a-c) - cb. \label{eq:eqnwithcb}
  \end{align}
  
By \eqref{eq:emptiereptiessomestufffillerfillssomestuff} and \eqref{eq:eqnwithcb}, we have that
\begin{align*}
    m_{S_{t+1}}(S_{t + 1}([k])) & \le m_{S_t}(AC) + b \\
                                & \le (a+c)(2n-a-c) - cb + b \\
                                & = k(2n-k) - cb + b \\
                                & \le k(2n-k),
\end{align*}
where the final inequality uses the fact that $c \ge 1$. This completes the proof of the claim. 
  
\end{proof}

We have shown the invariant holds for arbitrary $k$, so given that the
invariants all hold at state $S_t$ they also must all hold at state $S_{t+1}$.
Thus, by induction we have the invariant for all rounds $t\in\mathbb{N}$.
\end{proof}

\section{Oblivious Filler Lower Bound}\label{sec:oblivious}
In this section we prove that, surprisingly, an oblivious filler can
achieve backlog $n^{1-\varepsilon}$ against an arbitrary
``greedy-like" emptying algorithm.

We say an emptier is \defn{$\Delta$-greedy-like} if, whenever there
are two cups with fills that differ by at least $\Delta$, the emptier
never empties from the less full cup without also emptying from the
more full cup. That is, if on some round $t$, there are cups
$c_1, c_2$ with $\fil_{I_t}(c_1) > \fil_{I_t}(c_2) + \Delta$, then a
$\Delta$-greedy-like emptier doesn't empty from $c_2$ on round $t$
unless it also empties from $c_1$ on round $t$. Note that a perfectly
greedy emptier is $0$-greedy-like. We call an emptier
\defn{greedy-like} if in the cup game on $n$ cups it is
$\Delta$-greedy-like for $\Delta \le \frac{1}{128} \log\log\log
n$. The main result of this section is an oblivious filling strategy
that achieves large backlog against any (possibly randomized)
greedy-like emptier.


As a tool in our analysis we define a new variant of the cup game: In
the $p$-processor \defn{$E$-extra-emptyings} \defn{$S$-skip-emptyings}
negative-fill cup game on $n$ cups, the filler distributes $p$ units
of water amongst the cups, but the filler may empty from $p'$ cups for
some $p' \neq p$. In particular the emptier is allowed to do $E$ extra
emptyings and is also allowed to skip $S$ emptyings over the course of
the game. Note that the emptier still cannot empty from the same cup
twice on a single round, and also that note that a
$\Delta$-greedy-like emptier must take into account extra emptyings
and skip emptyings to determine valid moves. Further, note that the
emptier is allowed to skip extra emptyings, although skipping extra
emptyings looks the same as if the extra-emptyings had simply not been
performed.  Let the \defn{regular} cup game be the $0$-extra-emptyings
$\infty$-skip-emptyings cup game: this is the cup game we usually
consider.
Allowing for some extra emptyings, and bounding the number of skip
emptyings is sometimes necessary when analyzing an algorithm that is a
subroutine of a larger algorithm however, hence it sometimes makes
sense to consider games with different values of $E,S$. Unless
explicitly stated otherwise however we are considering the regular cup
game.

The \defn{fill-range} of a set of cups at a state $S$ is $\max_c
\fil_S(c) - \min_c \fil_S(c)$. We call a cup configuration
\defn{$R$-flat} if the fill-range of the cups less than or equal to
$R$; note that in an $R$-flat cup configuration with average fill
$0$ all cups have fills in $[-R, R]$. 

For a $\Delta$-greedy-like emptier let $R_\Delta = 2(2+\Delta)$;
we now prove a key property of these emptiers: there is an
oblivious filling strategy, which we term \defn{$\flatalg$}, that
attains an $R_\Delta$-flat cup configuration against a
$\Delta$-greedy-like emptier, given cups of a known starting
fill-range.


\begin{lemma}
  \label{lem:flatalg}
  Consider an $R$-flat cup configuration in the $p$-processor
  $E$-extra-emptyings $S$-skip-emptyings negative-fill cup game on $n =
  2p$ cups. There is an oblivious filling strategy
  \defn{$\flatalg$} that achieves an $R_\Delta$-flat
  configuration of cups against a $\Delta$-greedy-like emptier in
  running time $O(R+E+S)$. Furthermore,
  $\flatalg$ guarantees that the cup configuration is $R$-flat on every round.
\end{lemma}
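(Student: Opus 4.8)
The plan is to take $\flatalg$ to be the following strategy: if the (known) starting fill-range $R$ satisfies $R \le R_\Delta$, then the configuration is already $R_\Delta$-flat and $\flatalg$ does nothing; otherwise, for $T = \Theta(R + E + S)$ rounds the filler adds exactly one unit of water to each of the $p$ least-full cups (ties broken by a fixed rule). Two elementary facts will be used throughout: this is a legal filling move (it places $p$ units total, at most $1$ per cup, using $n = 2p$); and under it every cup's fill changes by at most $1$ in a round, so the fill-range changes by at most $2$ per round.

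Two claims must be proven. First, the \emph{invariant}: the fill-range never exceeds $R$ on any round. For this it suffices to show the fill-range does not increase on any round in which it exceeds $2(1+\Delta)$; combined with the ``changes by at most $2$'' fact this keeps it below $\max\bigl(R,\ 2(1+\Delta)+2\bigr) = \max(R, R_\Delta) = R$. Second, \emph{progress}: after $O(R+E+S)$ rounds the fill-range is at most $R_\Delta$. The engine here is the following dichotomy, obtained by applying the $\Delta$-greedy-like property at the intermediate state $I_t$ (where every fill has moved by at most $+1$ from $S_t$): in any round, \emph{either} the emptier empties the fullest cup, \emph{or} every cup the emptier empties has fill at least $\max_{I_t} - \Delta$. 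In the first case the maximum fill drops by (essentially) $1$; in the second case the emptier touches no cup below $\max_{I_t} - \Delta$, and since the range exceeding $R_\Delta$ forces $\max_{I_t} - \Delta - 1$ to lie above $\min_{S_t}$, the emptier does not touch the minimum cup, so the minimum does not drop. The crucial point is the gap between $R_\Delta = 2(2+\Delta)$ and $\Delta$: when the current range exceeds $R_\Delta$ the emptier cannot afford to both protect the top and attack the bottom (doing so would require $\fil(\text{max cup}) \le \fil(\text{some emptied low cup}) + \Delta$, i.e.\ a range of at most $\Delta$).

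The main obstacle is pinning down the ``essentially'' hedges: in the first branch of the dichotomy the new maximum could in principle be a different, almost-as-full cup (so the drop is less than $1$), and in the second branch a cup \emph{at} the old minimum could be re-emptied (so the minimum dips by the small amount the filler just added). Resolving this requires a case analysis at $S_t$ on the gap $v_p - v_{2p}$ between the $p$-th fullest cup and the least full cup --- i.e.\ on whether the lower half of the cups is ``clustered'' near the minimum or ``spread'' --- showing that while the range exceeds $R_\Delta$ the minimum can decrease by at most $O(1)$ over the \emph{entire} execution (once it has dipped once, on the following round the filler lifts the bottom $p$ cups so that more than $p$ cups sit above $\min + \Delta$, and the $\Delta$-greedy-like constraint then forbids the emptier from reaching the bottom at all), while the maximum decreases by $\Omega(1)$ on every round the range exceeds $R_\Delta$. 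Together with ``range changes by at most $2$'' this gives an amortized $\Omega(1)$ decrease in the range per round, hence the $O(R)$ term. Finally, for the abnormal rounds: a skipped emptying can rob the filler of one round's worth of progress on the maximum, contributing $O(1)$ per skip and hence the $O(S)$ term; an extra emptying is, by $\Delta$-greedy-likeness, spent on a relatively full cup, so it only helps flatness and only lowers the average, and a skip only raises the average, so neither endangers the invariant; charging $O(1)$ rounds to each abnormal emptying yields the $O(E+S)$ additive term and completes the running-time bound, while the invariant gives the ``$R$-flat on every round'' guarantee.
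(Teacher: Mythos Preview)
Your proposed filling strategy is not oblivious. The lemma asks for an \emph{oblivious} filling strategy, meaning the filler cannot observe the emptier's actions and hence cannot know the current fills of the cups. Your $\flatalg$ ``adds exactly one unit of water to each of the $p$ least-full cups'' --- but identifying which cups are the $p$ least-full requires knowing the current configuration, which in turn depends on every past emptying decision. This is an adaptive strategy, and the entire point of this section is to build tools that work for an oblivious filler against a (possibly randomized) emptier. The paper's $\flatalg$ instead puts $1/2$ unit into \emph{every} cup on every round, which is manifestly oblivious.

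Beyond this, even granting adaptivity, the progress argument is not complete. You acknowledge that ``the new maximum could in principle be a different, almost-as-full cup,'' and then assert that a case analysis on the gap $v_p - v_{2p}$ resolves this, but no such analysis is given; the claim that ``the maximum decreases by $\Omega(1)$ on every round the range exceeds $R_\Delta$'' is exactly what needs proving and does not follow from the dichotomy you state. (For instance, if the top $p$ cups are all within $\Delta$ of each other, the emptier can empty all of them every round while being $\Delta$-greedy-like, and the maximum drops by $1$; but if there is a single very high cup and the next $p-1$ are clustered near the bottom, the emptier might empty only the top cup and $p-1$ bottom cups, and the bottom drops.) The paper sidesteps all of this by tracking the \emph{sets} $U_t$ and $L_t$ of cups within $2+\Delta$ of the max and min respectively, proving these sets are monotone in $t$, and showing each must eventually exceed $n/2$ in size and hence intersect --- a cleaner potential argument that avoids per-round case analysis on the shape of the configuration.
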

\begin{proof}
  If $R \le R_\Delta$ the algorithm does nothing, since the
  desired fill-range is already achieved; for
  the rest of the proof we consider $R > R_\Delta$.

  The filler's strategy is to distribute fill equally amongst all
  cups at every round, placing $p/n = 1/2$ fill in each cup. 
  Let $\ell_t = \min_{c\in S_t} \fil_{S_t}(c)$, $u_t = \max_{c\in S_t} \fil_{S_t}(c)$. 

  First we show that the fill-range of the cups can only increase
  if the fill-range is very small.
  \begin{clm}
    \label{clm:fillRangeUpIFFfillRangeSmall}
    If the fill-range on round $t+1$ is larger than the
    fill-range on round $t$, then $u_{t+1} - \ell_{t+1} \le R_\Delta.$
  \end{clm}
  \begin{proof}
    First we remark that the fill of any cup changes by at most
    $1/2$ from round to round, and in particular $|u_{t+1}-u_t|
    \le 1/2$, $|\ell_{t+1} - \ell_t|\le 1/2$.
    In order for the fill-range to increase, the emptier must have
    emptied from some cup with fill in $[\ell_t, \ell_t + 1]$ without
    emptying from some cup with fill in $[u_t-1, u_t]$; if the emptier
    had not emptied from every cup with fill in $[\ell_t, \ell_t+1]$
    then we would have $\ell_{t+1} = \ell_t + 1/2$ so the
    fill-range cannot have increased, and similarly if the emptier
    had emptied from every cup with fill in $[u_t-1, u_t]$ then we
    would have $u_{t+1} = u_t - 1/2$ so again the fill-range 
    cannot have increased. Because the emptier is $\Delta$-greedy-like
    emptying from a cup with fill at most $\ell_t+1$ and not
    emptying from a cup with fill at least $u_t-1$ implies that
    $u_t-1$ and $\ell_t+1$ differ by at most $\Delta$.
    Thus, 
    $$u_{t+1} - \ell_{t+1} \le u_t +1/2 - (\ell_t -1/2)  \le \Delta + 3 \le R_\Delta.$$
  \end{proof}

  \cref{clm:fillRangeUpIFFfillRangeSmall} implies that if the
  fill-range is at most $R_\Delta$ it will stay at most
  $R_\Delta$, because fill-range cannot increase to a value
  larger than $R_\Delta$. \cref{clm:fillRangeUpIFFfillRangeSmall}
  also implies that until the fill-range is less than $R_\Delta$
  the fill-range must not increase. However the claim does not
  preclude fill-range from remaining constant for many rounds, or
  decreasing, but only by very small amounts. For this reason we
  actually do not use \cref{clm:fillRangeUpIFFfillRangeSmall} in
  the remainder of the proof; nonetheless, the fact that the
  fill-range cannot increase relative to initial fill-range
  during $\flatalg$ is an important property of $\flatalg$. In
  the rest of the proof we establish that the fill-range indeed
  must eventually be at most $R_\Delta$.

  Let $L_t$ be the set of cups $c$ with $\fil_{S_t}(c) \le \ell_t+2+\Delta$, and let
  $U_t$ be the set of cups $c$ with $\fil_{S_t}(c) \ge u_t-2-\Delta$.
  We now prove a key property of the sets $U_t$ and $L_t$: if a cup is in
  $U_t$ or $L_t$ it is also in $U_{t'}, L_{t'}$ for all $t' > t$. This
  follows immediately from \cref{clm:dontlosestuff}.
  \begin{clm}
    \label{clm:dontlosestuff}
    $U_{t} \subseteq U_{t+1}, L_t \subseteq L_{t+1}.$
  \end{clm}
  \begin{proof}
    Consider a cup $c\in U_t$.

    If $c$ is not emptied from, i.e. $\fil(c)$ has increased by
    $1/2$ from the previous round, then
    clearly $c \in U_{t+1}$, because backlog has increased by at most $1/2$, so
    $\fil(c)$ must still be within $2+\Delta$ of the backlog on round $t+1$. 

    On the other hand, if $c$ is emptied from, i.e. $\fil(c)$ has decreased by
    $1/2$, we consider two cases.\\
    \textbf{Case 1:} If $\fil_{S_t}(c) \ge u_t-\Delta -1$, then
    $\fil_{S_t}(c)$ is at least $1$ above the bottom of the
    interval defining which cups belong to $U_t$. The backlog
    increases by at most $1/2$ and the fill of $c$ decreases by $1/2$, so
    $\fil_{S_{t+1}}(c)$ is at least $1-1/2-1/2 = 0$ above the bottom of the
    interval, i.e. still in the interval. \\
    \textbf{Case 2:} On the other hand, if $\fil_{S_t}(c) <
    u_t-\Delta-1$, then every cup with fill in $[u_t-1, u_t]$
    must have been emptied from because the emptier is
    $\Delta$-greedy-like. Therefore the fullest cup
    on round $t+1$ is the same as the fullest cup on round $t$,
    because every cup with fill in $[u_t-1, u_t]$
    has had its fill decrease by $1/2$, and no cup with fill less than
    $u_t-1$ had its fill increase by more than $1/2$. Hence $u_{t+1}
    = u_t -1/2$. Because both $\fil(c)$ and the backlog
    have decreased by $1/2$, the distance between them is
    still at most $\Delta+2$, hence $c\in U_{t+1}$.\\
    The argument for why $L_t \subseteq L_{t+1}$ is symmetric.
  \end{proof}

  Now we show that under certain conditions $u_t$ decreases and
  $\ell_t$ increases.
  \begin{clm}
    \label{clm:howDoLandUchange}
    On any round $t$ where the emptier empties from at least
    $n/2$ cups, if $|U_t| \le n/2$ then $u_{t+1} = u_t - 1/2$.
    On any round $t$ where the emptier empties from at most $n/2$
    cups, if $|L_t| \le n/2$ then $\ell_{t+1} = \ell_t + 1/2$.
  \end{clm}
  \begin{proof}
    Consider a round $t$ where the emptier empties from at least
    $n/2$ cups. If there are at least $n/2$ cups outside of
    $U_t$, i.e. cups with fills in $[\ell_t, u_t-2-\Delta]$, then
    all cups with fills in $[u_t - 2, u_t]$ must be emptied from;
    if one such cup was not emptied from then by the pigeon-hole
    principle some cup outside of $U_t$ was emptied from, which
    is impossible as the emptier is $\Delta$-greedy-like. This
    clearly implies that $u_{t+1} = u_t - 1/2$: no cup with fill
    less than $u_t-2$ has gained enough fill to become the
    fullest cup, and the fullest cup from the previous round has
    lost $1/2$ unit of fill.

    By a symmetric argument, $\ell_{t+1} = \ell_{t} + 1/2$ on a
    round $t$ where the emptier empties at most $n/2$ cups and
    $|L_t| \le n/2$. 
  \end{proof}

  Now we show that eventually $L_t \cap U_t \neq \varnothing$.
  \begin{clm}
    There is a round $t_0 \le O(R + E + S)$ such that $U_{t}
    \cap L_{t} \neq \varnothing$ for all $t\ge t_0$.
  \end{clm}
  \begin{proof}
  We call a round where the emptier empties from $p' \neq p$ cups
  an \defn{unbalanced round}; we call a round that is not
  unbalanced a \defn{balanced} round. 

  Note that there are clearly at most $E+S$ unbalanced rounds.
  We now associate some unbalanced rounds with balanced rounds;
  in particular we define what it means for a balanced round to
  \defn{cancel} an unbalanced round. 
  Let $B = 2(R + \ceil{(1+1/n)(E+S)})$. For $i = 1,2,\ldots,B$ 
  (iterating in ascending order of $i$), if round $i$
  is unbalanced then we say that the first balanced round $j > i$
  that hasn't already been assigned (earlier in the sequential
  process) to cancel another unbalanced round $i' < i$, if any
  such round $j$ exists, \defn{cancels} round $i$. Note that
  cancellation is a one-to-one relation: each unbalanced round is
  cancelled by at most one balanced round and each balanced round
  cancels at most one unbalanced round. We say a balanced round
  $j$ is \defn{cancelling} if it cancels some unbalanced round $i
  < j$, and \defn{non-cancelling} if it does not cancel any
  unbalanced round.

  We claim that there is some $i \in [2(E+S)+1]$ such that among the
  rounds $[B + 2(E+S) + i-1]$ every unbalanced round has been
  cancelled by a balanced round in $[B + 2(E+S) + i-1]$, and such
  that there are $B$ non-cancelling balanced rounds. Note that there
  are at least $B + (E+S)$ balanced rounds in the first $B +
  2(E+S)$ rounds, and thus there are at least $B$ non-cancelling
  balanced rounds among the first $B+2(E+S)$ rounds, due to there
  being at most $E+S$ total unbalanced rounds. Next note that
  there must be at least $1$ non-cancelling balanced round among
  any set of $2(E+S) + 1$ rounds, because there cannot be more
  than $E+S$ unbalanced rounds, and hence there also cannot be
  more than $E+S$ cancelling rounds. Thus there is some $i \in
  [2(E+S)+1]$ such that round $B + 2(E+S) + i-1$ is balanced and
  non-cancelling. For this $i$, we have that all unbalanced
  rounds in $[B + 2(E+S) + i-1]$ are cancelled, and since $B +
  2(E+S) + i-1 \ge B + 2(E+S)$, we have that there are at least
  $B$ balanced non-cancelling rounds in $[B + 2(E+S) + i-1]$.
  These are the desired properties.

  Let $t_e$ be the first round by which there are $B = 2(R +
  \ceil{(E+S)/n})$ balanced non-cancelling rounds; we have shown
  that $t_e \le O(R+E+S)$. Note that the average fill of the cups
  cannot have decreased by more than $E/n$ from its starting
  value; similarly the average fill of the cups cannot have
  increased by more than $S/n$. Because the cups start $R$-flat,
  $u_t$ cannot have decreased by more than $R + E/n$ or else
  backlog would be less than average fill, and identically
  $\ell_t$ cannot have increased by more than $R + S/n$ or else
  anti-backlog would be larger than average fill. Now, by
  \cref{clm:howDoLandUchange} we have that for some $t \le t_e$,
  $|L_t| > n/2$: if $|L_t|\le n/2$ were always true for $t\le
  t_e$, then on every balanced round $\ell_t$ would have
  increased by $1/2$, and since $\ell_t$ increases by at most
  $1/2$ on unbalanced rounds, this implies that in total $\ell_t$
  would have increased by at least $(1/2)2(R + \ceil{(E+S)/n})$,
  which is impossible. By a symmetric argument it is impossible
  that $|U_t| \le n/2$ for all rounds. 

  Since $|U_{t+1}|\ge |U_t|$ and $|L_{t+1}| \ge |L_t|$ by
  \cref{clm:dontlosestuff}, we have that there is some round $t_0
  \le t_e$ such that for all $t \ge t_0$ we have $|U_t|> n/2$ and
  $|L_t|> n/2$. But then we have $U_t\cap L_t \neq \varnothing$, as desired.
  \end{proof}

  If there exists a cup $c \in L_t\cap U_t$, then 
  $$\fil(c) \in [u_t-2-\Delta, u_t] \cap [\ell_t, \ell_t + 2 +
  \Delta].$$ Hence we have that $$\ell_t+2+\Delta \ge
  u_t-2-\Delta.$$ Rearranging, $$u_t - \ell_t \le 2(2+\Delta) =
  R_\Delta.$$ Thus the cup configuration is $R_\Delta$-flat by
  the end of $O(R+E+S)$ rounds.

\end{proof}


Next we describe a simple oblivious filling strategy, that we
call \defn{$\randalg$}, that will be used as a subroutine in
\cref{lem:obliviousBase}; variants of this strategy
are well-known, and similar versions of it can be found in \cite{
BenderFaKu19, DietzRa91, Kuszmaul20}.
\begin{proposition}
  \label{prop:randalg}
  Consider an $R$-flat cup configuration in the regular single-processor
  $\infty$-extra-emptyings $\infty$-skip-emptyings negative-fill cup
  game on $n$ cups with initial average fill $\mu_0$.
  Let $k \in [n]$ be a parameter. Let $d = \sum_{i=2}^k 1/i$.

  There is an oblivious filling strategy \defn{$\randalg(k)$}
  with running time $k-1$ that achieves fill at least $\mu_0 -R +
  d$ in a known cup $c$ with probability at least $1/k!$ if
  we condition on the emptier not performing extra emptyings.
  $\randalg(k)$ achieves fill at most $\mu_0 + R + d$ in this cup
  (unconditionally).

  Furthermore, when applied against a $\Delta$-greedy-like emptier
  with $R = R_\Delta$, $\randalg(k)$ guarantees that
  the cup configuration is $(R + d)$-flat on every round
  (unconditionally).
\end{proposition}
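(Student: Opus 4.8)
The plan is to realize $\randalg(k)$ as a shrinking-active-set tournament. Designate an arbitrary set $S$ of $k$ of the $n$ cups as the initial \emph{active set}, and, exploiting the fact that the filler is oblivious, draw in advance a uniformly random ordering $(\pi_1, \dots, \pi_k)$ of the cups of $S$; this ordering is the only randomness the filler uses. In round $j$, for $j = 1, \dots, k-1$, the active set has size $i := k - j + 1 \ge 2$, and the filler places $1/i$ units of water into each cup currently in the active set (a legal single-processor move, since $1/i \le 1$ and the total placed is $1 = p$); after the emptier moves, the filler deletes $\pi_j$ from the active set. After these $k-1$ rounds the active set is $\{\pi_k\}$, and we take $c := \pi_k$. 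The running time is $k-1$ as required, and the three remaining guarantees are the unconditional ceiling, the conditional floor, and flatness.

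The unconditional upper bound is immediate: over the whole game the filler places at most $\sum_{i=2}^{k} 1/i = d$ units of water into any fixed cup, and emptyings never raise fills, so every cup --- in particular $c$ --- ends with fill at most its initial fill plus $d$, which is at most $\mu_0 + R + d$ because the starting configuration is $R$-flat with average $\mu_0$. (The same reasoning bounds every cup's fill by $\mu_0 + R + d$ at \emph{every} round, which we reuse for flatness.)

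For the conditional lower bound, condition on the emptier never performing an extra emptying, so it empties at most one cup per round; let $e_j$ be the cup emptied in round $j$ (if any). Call round $j$ \emph{good} if $e_j \notin \{\pi_{j+1}, \dots, \pi_k\}$; on a good round every cup that survives round $j$'s deletion gains exactly $1/i$ in net fill, so if all of rounds $1, \dots, k-1$ are good then $c = \pi_k$, which lies in every active set, gains $\sum_{i=2}^{k} 1/i = d$ and ends with fill at least $\mu_0 - R + d$. It remains to show all rounds are good with probability at least $1/k!$, and this is the crux: it is a deferred-decisions computation. At the moment the emptier chooses $e_j$, everything it has seen is a function of the water placements of rounds $1, \dots, j$, and those placements reveal only the chain of active sets $S = S_1 \supsetneq S_2 \supsetneq \dots \supsetneq S_j$ --- equivalently, the identities of $\pi_1, \dots, \pi_{j-1}$ together with the set $\{\pi_j, \dots, \pi_k\}$ --- and nothing about the order of $\pi_j, \dots, \pi_k$. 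Hence, conditioned on the emptier's entire view so far (including its own past moves, and its coins if it is randomized) together with rounds $1, \dots, j-1$ being good (a constraint involving only the already-revealed $\pi_1, \dots, \pi_{j-1}$), the tuple $(\pi_j, \dots, \pi_k)$ is still a uniformly random ordering of $S_j$, and $e_j$ is conditionally independent of it. So if $e_j \notin S_j$ then round $j$ is automatically good, and if $e_j \in S_j$ then round $j$ is good exactly when $e_j = \pi_j$, with conditional probability $1/i$; either way the conditional probability that round $j$ is good, given that rounds $1, \dots, j-1$ are, is at least $1/i$, and multiplying over $j$ yields $\prod_{i=2}^{k} 1/i = 1/k!$. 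I expect phrasing this rigorously against a possibly randomized, otherwise fully adaptive emptier --- pinning down the emptier's information set and verifying that conditioning on the earlier good events leaks nothing about the remaining order --- to be the main obstacle.

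Finally, for the flatness guarantee against a $\Delta$-greedy-like emptier with $R = R_\Delta$, the ceiling already controls the top ($\fil_{S_t}(c') \le \mu_0 + R + d$ for every cup $c'$ at every round). To control the bottom I would track $u_t = \max_{c'} \fil_{S_t}(c')$ and $\ell_t = \min_{c'} \fil_{S_t}(c')$ and argue, exactly as in the proof of \cref{lem:flatalg}, that since the filler never lowers a cup and the emptier is $\Delta$-greedy-like, on any round every emptied cup lies within $\Delta$ of the current maximum fill; from this one concludes that either $\ell_{t+1} \ge \ell_t$ or the maximum drops enough that $u_{t+1} - \ell_{t+1} \le \Delta + O(1) \le R + d$, and, combined with the total filler contribution to any cup being at most $d$, that $u_t - \ell_t$ never exceeds $R + d$. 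This flatness step is the second-trickiest point: it must hold unconditionally --- even when the emptier uses its unbounded budget of extra emptyings --- and it leans on the $\Delta$-greedy-like hypothesis in precisely the way the proof of $\flatalg$ does.
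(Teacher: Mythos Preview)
Your construction of $\randalg(k)$, the unconditional upper bound on $\fil(c)$, and the conditional lower bound are all correct and essentially identical to the paper's approach; the paper phrases the randomness as sequential uniform eviction rather than a pre-drawn permutation, but these are equivalent, and your deferred-decisions framing of the $1/k!$ computation is arguably cleaner than the paper's.

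The gap is in the flatness argument. Your key intermediate claim --- ``on any round every emptied cup lies within $\Delta$ of the current maximum fill'' --- is false precisely in the regime you yourself flag as tricky, namely when the emptier uses extra emptyings. The $\Delta$-greedy-like constraint says only that if the emptier empties some cup $c$ it must also empty every cup more than $\Delta$ above $c$; with an unbounded extra-emptying budget the emptier may empty a cup far below the maximum provided it also empties everything near the top. Consequently your dichotomy ``either $\ell_{t+1} \ge \ell_t$ or $u_{t+1} - \ell_{t+1} \le \Delta + O(1)$'' is missing a case: it is possible that the anti-backlog drops \emph{and} the new range remains large.

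The fix, which is what the paper does, is a trichotomy. Suppose $\ell_{t+1} < \ell_t$, so some cup $a$ realizing the new anti-backlog was emptied, and look at a cup $b$ realizing the new backlog $u_{t+1}$. If $b$ was \emph{not} emptied, greediness forces $\fil_{I_t}(b) \le \fil_{I_t}(a) + \Delta$, and unwinding gives $u_{t+1} - \ell_{t+1} \le \Delta + O(1) \le R_\Delta$ --- this is your second branch. But if every cup at the new backlog \emph{was} emptied, then $u_{t+1} + 1 = \fil_{I_t}(b) \le u_t + 1/i$, so $u_{t+1} \le u_t - 1 + 1/i$; since $\ell_{t+1} \ge \ell_t - 1$ always, the range grows by at most $1/i$, the same bound as in your first branch. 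Summing the at-most-$1/i$ increments since the last round at which the range was $\le R_\Delta$ gives $(R_\Delta + d)$-flatness. Note that the $\flatalg$ proof is not directly reusable here: $\flatalg$'s filler adds equally to every cup, whereas $\randalg$ fills only the shrinking active set, and it is exactly the per-round $1/i$ budget that replaces $\flatalg$'s constant-per-round bookkeeping.
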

\begin{proof}
  First we condition on the emptier not using extra emptying
  and show that in this case the filler has probability at least
  $1/(k-1)!$ (which we lower bound by $1/k!$ for sake of
  simplicity) of attaining a cup with fill at least $\mu_0 -R +d$.
  The filler maintains an \defn{active set}, initialized to being
  an arbitrary subset of $k$ of the cups. Every round the filler
  distributes $1$ unit of fill equally among all cups in the
  active set. Next the emptier removes $1$ unit of fill from some
  cup, or skips its emptying. Then the filler removes a random
  cup from the active set (chosen uniformly at random from the
  active set). This continues until a single cup $c$ remains in
  the active set. 

  We now bound the probability that $c$ has never been emptied
  from. Assume that on the $i$-th step of this process, i.e. when
  the size of the active set is $n-i+1$, no cups in the active
  set have ever been emptied from; consider the probability that
  after the filler removes a cup randomly from the active set
  there are still no cups in the active set that the emptier has
  emptied from. If the emptier skips its emptying on this round,
  or empties from a cup not in the active set then it is
  trivially still true that no cups in the active set have been
  emptied from. If the cup that the emptier empties from is in
  the active set then with probability $1/(k-i+1)$ it is evicted
  from the active set, in which case we still have that no
  cup in the active set has ever been emptied from. Hence with
  probability at least $1/(k-1)!$ the final cup in the
  active set, $c$, has never been emptied from. In this case, $c$
  will have gained fill $d=\sum_{i=2}^k 1/i$ as claimed. Because
  $c$ started with fill at least $-R+\mu_0$, $c$ now has fill at
  least $-R+ d+\mu_0$. 

  Now note that regardless of if the emptier uses extra emptyings
  $c$ has fill at most $\mu_0 + R + d$, as $c$ starts with fill
  at most $R$, and $c$ gains at most $1/(k-i+1)$ fill on the
  $i$-th round of this process. 

  Now we analyze this algorithm specifically for a
  $\Delta$-greedy-like emptier. 
  Let $\mathcal{A}_t$ be the event that the
  anti-backlog is smaller in $S_{t+1}$ than in $S_t$, let
  $\mathcal{B}_t$ be the event that some cup with fill equal to
  the backlog in $S_{t+1}$ was emptied from on round $t$. If
  $\mathcal{A}_t$ and $\mathcal{B}_t$ are both true on round $t$,
  then by greediness the cups are quite flat. In particular, 
  let $a$ be a cup with fill equal to the anti-backlog in state
  $S_{t+1}$ that was emptied from on round $t$, and let $b$ be
  a cup with fill equal to the backlog in state $S_{t+1}$ that
  was not emptied from on round $t$. By greediness $\fil_{I_t}(a) + \Delta
  > \fil_{I_t}(b)$. Of course $\fil_{I_t}(b) =
  \fil_{S_{t+1}}(b)$, and for $b$ to have fill equal to the backlog
  on round $t+1$, $b$ must have fill less than $1$ below 
  backlog on round $t$. Of course $\fil_{I_t}(a) \le
  \fil_{S_t}(a) + 1$, and for $a$ to have fill equal to the
  anti-backlog on round $t+1$, $a$ must have fill less than $1$
  above the anti-backlog on round $t$. Thus we have that the
  backlog and anti-backlog differ by at most $\Delta + 3 \le
  R_\Delta$ on round $t$, i.e. the cups are $R_\Delta$-flat.

  Consider a round $t_1$ where the cups are not $R_\Delta$-flat.
  Let $t_0$ be the last round that the cups were $R_\Delta$-flat.
  On all rounds $t \in (t_0, t_1)$ at least one of
  $\mathcal{A}_t$ or $\mathcal{B}_t$ must not hold. On a round
  where $\mathcal{A}_t$ does not hold, anti-backlog does not
  decrease and backlog increases by at most $1/(k-t+1)$, so fill
  range increases by at most $1/(k-t+1).$ On a round where
  $\mathcal{B}_t$ does not hold, anti-backlog decreases by at
  most $1$ and backlog decreases by at least $1-1/(k-t+1)$, as
  all cups with fill equal to the backlog in state $S_{t+1}$ were
  emptied from on round $t$, so fill-range increases by at most
  $1/(k-t+1)$. 
  Hence in total fill-range increases by at most $\sum_{i=2}^k
  1/i$ from $R$, i.e. the cups are $(R+d)$-flat on round $t_1$.

\end{proof}

We now give a general algorithm that specifies a useful
transformation of a filling strategy into a new filling strategy
by repeatedly applying the strategy to subsets of the cups. 
The procedure is similar to the procedure used in the Adaptive
Amplification Lemma, although more complicated.

\begin{definition}
  \label{def:rep}
  {\normalfont
  Let $\alg_0$ be an oblivious filling strategy, that can get
  high fill (for some definition of high) in some cup against
  greedy-like emptiers with some probability. We construct a new
  filling strategy \defn{$\rep_\delta(\alg_0)$} ($\rep$ stands
  for \enquote{repetition}) as follows:

  Say we have some configuration of $n\le N$ cups (recall that
  eventually we aim to get large backlog in $N$ cups).
  Let $n_A = \ceil{\delta n}, n_B = \floor{(1-\delta)n}$. Let
  $M=2^{\polylog(N)}$ be a chosen parameter. 
  Initialize $A$ to $\varnothing$ and $B$ to
  being all of the cups. We call $A$ the \defn{anchor set} and
  $B$ the \defn{non-anchor set}. The filler always places $1$
  unit of fill in each anchor cup on each round. The filling
  strategy consists of $n_A$ \defn{donation-processes}, which are
  procedures that result in a cup being \defn{donated} from $B$
  to $A$ (i.e. removed from $B$ and added to $A$). At the start
  of each donation-processes the filler chooses a value $m_0$
  uniformly at random from $[M]$. We say that the filler
  \defn{applies} a filling strategy $\alg$ to $B$ if the
  filler uses $\alg$ on $B$ while placing $1$ unit of fill
  in each anchor cup. During the donation-process the filler
  applies $\alg_0$ to $B$ $m_0$ times, and flattens $B$ by
  applying $\flatalg$ to $B$ for $\Theta(N^2)$ rounds before each
  application of $\alg_0$. At the end of each donation process
  the filler takes the cup given by the final application of
  $\alg_0$ (i.e. the cup that $\alg_0$ guarantees with some
  probability against a certain class of emptiers to have a
  certain high fill), and donates this cup to $A$. 

  We give pseudocode for this algorithm in \cref{alg:rep}.
 }

\begin{algorithm}
  \caption{$\rep_\delta(\alg_0)$}
  \label{alg:rep}
  \begin{algorithmic}
    \State \textbf{Input:} $\alg_0, \delta, N, M, $ set of $n$ cups
    \State \textbf{Output:} Guarantees on the sets $A, B$ (will vary based on $\alg_0$)
    \State
    \State $n_A \gets \ceil{\delta n}, n_B \gets \floor{(1-\delta) n}$
    \State $A \gets \varnothing, B \gets$ all the cups
    \State Always place $1$ unit of fill into each cup in $A$
    \For{$i \in [n_A]$} \Comment Donation-Processes
    \State $m_0 \gets \text{random}([M])$
      \For{$j \in [m_0]$}
        \State Apply $\flatalg$ to $B$ for $\Theta(N^2)$ rounds
        \State Apply $\alg_0$ to $B$
      \EndFor
      \State Donate the cup given by $\alg_0$ from $B$ to $A$
    \EndFor
  \end{algorithmic}
\end{algorithm}

{\normalfont
We say that the emptier \defn{neglects} the anchor set on a round
if it does not empty from each anchor cup. We say that an
application of $\alg_0$ to $B$ is \defn{non-emptier-wasted} if
the emptier does not neglect the anchor set during any round of
the application of $\alg_0$.
}
\end{definition}

We use the idea of repeating an algorithm in two different
contexts. First in \cref{prop:obliviousBase} we prove a result
analogous to that of \cref{prop:adaptiveBase}. In particular, we
show that we can achieve $\omega(1)$ fill in a known cup (with
good probability) by using $\rep_{\delta}(\randalg(k))$ (for
appropriate choice of $\delta, k$) to get large fill in an
unknown cup, and then and then exploiting the emptier's
greedy-like nature to get (slightly smaller) fill in a known cup.
Second, we use $\rep$ in proving the \defn{Oblivious
Amplification Lemma}, a result analogous to the Adaptive
Amplification Lemma. In particular, we show how to take an
algorithm for achieving some backlog, and then achieve higher
backlog by repeating the algorithm many times. Although these
results have deterministic analogues, their proofs are different
and significantly more complex than the corresponding proofs for
the deterministic results.

In the rest of the section our aim is to achieve backlog
$N^{1-\varepsilon}$ in $N$ cups. We will use this value $N$
within all of the following proofs. We use $N$ to refer to the
true number of cups, and $n$ refer to the size of the current
subproblem that we are considering, which is implicitly part of a
larger cup game. One benefit of making the true number of cups
explicit is that if in a subproblem we ever get mass $N^2$ in a
set of cups, then we are done with the entire construction, as
backlog will always be at least $N$ after this. Note that rather
than using the negative-fill cup game we opt to explicitly make
our results be in terms of the average fill of the cups. Another
thing to note is that we are assuming that $N$ is sufficiently
large for all of the results, as is standard with asymptotic
notation.

Before proving \cref{prop:obliviousBase} we analyze
$\rep_{\delta}(\randalg(k))$ in \cref{lem:obliviousBase}. We
remark that although \cref{prop:obliviousBase} and
\cref{lem:obliviousBase} do implicitly consider a small cup game
that is part of a larger cup game, we do not allow for
extra-emptyings in the small cup game. That is, if there are
extra-emptyings, then we provide no guarantees on the behavior of
the algorithms given in \cref{prop:obliviousBase} and
\cref{lem:obliviousBase}.
\begin{lemma}
  \label{lem:obliviousBase}
  Let $\Delta \le \frac{1}{128}\log\log\log N$, let $h =
  \frac{1}{16}\log\log\log N$, let $k=\ceil{e^{2h+1}}$, let
  $\delta = \frac{1}{2k}$, let $n = \Theta(\log^5 N)$. Consider an
  $R_\Delta$-flat cup configuration in the variable-processor cup
  game on $n$ cups with initial average fill $\mu_0$.

  Against a $\Delta$-greedy-like emptier,
  $\rep_{\delta}(\randalg(k))$ using $M = 2^{\Theta(\log^4 N)}$
  either achieves mass $N^2$ in the cups, or with probability at
  least $1-2^{-\Omega(\log^4 N)}$ makes an unknown cup in $A$
  have fill at least $h+\mu_0$ while also
  guaranteeing that $\mu(B) \ge -h/2 + \mu_0$, where $A,B$ are
  the sets defined in \cref{def:rep}. The running time of
  $\rep_{\delta}(\randalg(k))$ is $2^{O(\log^4 N)}$.
\end{lemma}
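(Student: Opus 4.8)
The plan is to track the anchor/non-anchor sets $A,B$ of \cref{def:rep} through the $n_A=\ceil{\delta n}$ donation-processes of $\rep_\delta(\randalg(k))$, showing that with overwhelming probability some donated cup lands in $A$ with fill at least $h+\mu_0$, while the mass identity $m(AB)=m(A)+m(B)$ keeps $\mu(B)$ from falling below $\mu_0-h/2$. I would first dispose of the easy branch: since the filler pours $1$ unit into every anchor cup each round, every round on which the emptier neglects the anchor set raises $m(A)$ by at least $1$, so if the emptier neglects $A$ more than $N^2$ times then some cup has fill $\ge N$ forever after (the ``achieves mass $N^2$'' outcome). Hence it suffices to analyze executions in which the emptier neglects $A$ at most $N^2$ times, and in particular in which at most $N^2$ of the (length-$(k-1)$) applications of $\randalg(k)$ are \emph{emptier-wasted}.

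The crux is the probabilistic statement that, with probability $1-2^{-\Omega(\log^4 N)}$, in every donation-process the \emph{final} call to $\randalg(k)$ is non-emptier-wasted. This is where obliviousness is essential: within a donation-process the emptier never learns the length $m_0\in[M]$ chosen by the filler, so when it decides whether to neglect $A$ during the $\ell$-th call it knows only the event $\{m_0\ge\ell\}$, under which $m_0$ is still uniform on $\{\ell,\dots,M\}$. Writing $q_\ell$ for the conditional probability that call $\ell$ is emptier-wasted given $m_0\ge\ell$, one gets $\Pr[\text{final call wasted}]=\tfrac1M\sum_\ell q_\ell$, while the total-neglect bound forces $\sum_\ell q_\ell\cdot\tfrac{M-\ell+1}{M}=\E[\#\text{wasted calls in this process}]$; splitting the sum at $\ell=M-T$ and optimizing $T$ yields a per-process bound $O\!\left(\sqrt{\E[\#\text{wasted}]/M}\right)$, and then Cauchy--Schwarz over the $n_A$ processes together with $M=2^{\Theta(\log^4 N)}$ and $\sum_i\E[\#\text{wasted}_i]\le N^2$ gives the claim. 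I expect this adaptivity argument to be the main obstacle: a naive union bound over the $\le N^2$ wasted calls does not suffice, because a call late in a long process can be wasted with conditional probability close to $1$, and one must exploit that reaching such a call is itself rare.

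Next I would cash in a non-wasted final call. Because each donation-process runs $\flatalg$ on $B$ for $\Theta(N^2)$ rounds before each $\randalg(k)$ call, and $\Theta(N^2)$ dominates the $O(R+E+S)$ rounds that \cref{lem:flatalg} needs (here $E,S\le N^2$ and the incoming fill-range $R\le R_\Delta+d$), the configuration of $B$ is $R_\Delta$-flat when the final $\randalg(k)$ call begins. A non-wasted call introduces no extra emptyings in the $B$-subgame, so \cref{prop:randalg} yields, with probability $\ge 1/k!$, a (subsequently donated) cup of fill at least $\mu(B)-R_\Delta+d$, and this cup's fill never decreases once it is an anchor cup. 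Since \cref{prop:randalg}'s $1/k!$ bound is a conditional lower bound valid for every history, I would chain it: conditioned on all final calls being non-wasted, the number of successful donations stochastically dominates $\mathrm{Binom}(n_A,1/k!)$. A routine computation with the iterated logarithms, using $h=\tfrac{1}{16}\log\log\log N$, $k=\ceil{e^{2h+1}}\approx e(\log\log N)^{1/8}$, $\delta=1/(2k)$, and $n=\Theta(\log^5 N)$, gives $n_A/k!=\Omega(\log^4 N)$, so a Chernoff bound makes at least one donation succeed except with probability $2^{-\Omega(\log^4 N)}$.

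Finally I would close the loop. Suppose no anchor cup ever reaches fill $h+\mu_0$; then at every moment $m(A)<n_A(h+\mu_0)$, while $m(AB)\ge n\mu_0$ (total mass is non-decreasing), so $\mu(B)>\mu_0-\tfrac{n_A}{n-n_A}\,h\ge\mu_0-h/2$ throughout, using $n_A/(n-n_A)\le 1/2$ for $\delta=1/(2k)$ and $N$ large. But then, since $d=\sum_{i=2}^k 1/i>\ln k-1\ge 2h$ and $R_\Delta=4+2\Delta\le h/2$ once $\log\log\log N\ge 256$, any successful donation produces a cup of fill $\ge\mu(B)-R_\Delta+d>(\mu_0-h/2)-h/2+2h=\mu_0+h$ --- contradicting the assumption. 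Hence, on the $1-2^{-\Omega(\log^4 N)}$-probability event above, some anchor cup attains fill $\ge h+\mu_0$ and the same mass inequality gives $\mu(B)\ge\mu_0-h/2$; the running time is $n_A\cdot M\cdot(\Theta(N^2)+k)=2^{O(\log^4 N)}$. (One must additionally verify that $\mu(A)$ and $\mu(B)$ cannot drift apart in the remaining cases --- e.g.\ that heavy neglect of $A$ cannot push $\mu(B)$ below $\mu_0-h/2$ without first triggering the mass-$N^2$ outcome or forcing an anchor cup above $h+\mu_0$ --- which the same $m(AB)=m(A)+m(B)$ bookkeeping handles.)
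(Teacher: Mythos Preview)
Your approach has a genuine gap in establishing the bound $\mu(B)\ge\mu_0-h/2$. Your argument for this inequality runs through the mass identity together with the hypothesis that every anchor cup has fill below $h+\mu_0$; you then derive a contradiction to that hypothesis on the high-probability event. But once you have concluded that some anchor cup \emph{does} exceed $h+\mu_0$, the hypothesis is gone and the mass identity no longer yields any lower bound on $\mu(B)$: nothing you have said prevents a single anchor cup from sitting at fill $\mu_0+N^2-1$ (just shy of the mass-$N^2$ threshold), which would drive $\mu(B)$ down by roughly $N^2/n$, far below $\mu_0-h/2$. Your parenthetical that ``the same $m(AB)=m(A)+m(B)$ bookkeeping handles'' the remaining cases is therefore incorrect---mass bookkeeping alone cannot cap $\mu(A)$.

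The paper's proof fills exactly this gap by exploiting $\Delta$-greedy-likeness, and this is the step you are missing. It first shows $\mu(B)\le\mu(AB)+2$ always (using greedy-likeness to argue that whenever the emptier ignores all of $B$ in favor of $A$, it must be that $\mu(B)\le\mu(A)+\Delta$), then combines this with the maintained $(R_\Delta+d)$-flatness of $B$ to upper-bound the backlog in $B$ by $u_B=\mu(AB)+2+R_\Delta+d$. Greedy-likeness then forces the emptier to empty any anchor cup whose fill exceeds $u_B+\Delta$, so the backlog in $A$---and hence $\mu(A)$---is bounded by $u_B+\Delta+1$. Only now does the mass identity give $\mu(B)\ge\mu_0-h/2$, deterministically and at every round. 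Your Cauchy--Schwarz treatment of the ``final call is non-wasted'' probability is more elaborate than what the paper does (which leans on the simpler observation that within any single donation-process the emptier's remaining neglect budget is at most $N^2$, so at most $N^2$ of the $M$ candidate final calls can be wasted), but that part of your plan is not where the proof breaks.
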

\begin{proof}
  We use the definitions given in \cref{def:rep}.

  Note that if the emptier neglects the anchor set $N^2$ times,
  or skips $N^2$ emptyings, then the mass of the cups will be at
  least $N^2$, so the filler is done. For the rest of the proof
  we consider the case where the emptier chooses to neglect
  the anchor set fewer than $N^2$ times, and chooses to skip fewer
  than $N^2$ emptyings.

As in \cref{prop:randalg}, we define $d =
\sum_{i=2}^{k} 1/i$; we remark that, because harmonic numbers
grow like $x\mapsto \ln x$, it is clear that $d=\Theta(h)$. We say that an
application of $\randalg(k)$ to $B$ is \defn{lucky} if it
achieves backlog at least $\mu_S(B) - R_\Delta + d$ where $S$
denotes the state of the cups at the start of the application of
$\randalg(k)$; note that by
\cref{prop:randalg} if we condition on an
application of $\randalg(k)$ where $B$ started $R_\Delta$-flat
being non-emptier-wasted then the application has at least a
$1/k!$ chance of being lucky.

Now we prove several important bounds satisfied by $A$ and $B$.
\begin{clm}
  \label{clm:allflatteningsworkbyM}
  All applications of $\flatalg$ make $B$ be $R_\Delta$-flat and
  $B$ is always $(R_\Delta + d)$-flat.
\end{clm}
\begin{proof}
  Given that the application of $\flatalg$ immediately prior to an application
  of $\randalg(k)$ made $B$ be $R_\Delta$-flat, by
  \cref{prop:randalg} we have that $B$ will
  stay $(R_\Delta + d)$-flat during the application of $\randalg(k)$. 
  Given that the application of $\randalg(k)$ immediately prior to an
  application of $\flatalg$ resulted in $B$ being $(R_\Delta
  + d)$-flat, we have that $B$ remains $(R_\Delta + d)$-flat
  throughout the duration of the application of $\flatalg$ by
  \cref{lem:flatalg}. Given that $B$ is $(R_\Delta +
  d)$-flat before a donation occurs $B$ is clearly still $(R_\Delta +
  d)$-flat after the donation, because the only change to $B$ during
  a donation is that a cup is removed from $B$ which cannot increase
  the fill-range of $B$.
  Note that $B$ started $R_\Delta$-flat at the beginning of the
  first donation-process.
  Note that if an application of $\flatalg$ begins with $B$ being
  $(R_\Delta + d)$-flat, then by considering the flattening to
  happen in the $(|B|/2)$-processor $N^2$-extra-emptyings
  $N^2$-skip-emptyings cup game we ensure that it makes $B$ be
  $R_\Delta$-flat.
  Hence we have by induction that $B$ has always been $(R_\Delta
  + d)$-flat and that all flattening processes have made $B$ be
  $R_\Delta$-flat. 
\end{proof}

Now we aim to show that $\mu(B)$ is never very low, which we need
in order to establish that every non-emptier-wasted lucky
application of $\randalg(k)$ gets a cup with high fill.
Interestingly, in order to lower bound $\mu(B)$ we find it
convenient to first upper bound $\mu(B)$, which by greediness and
flatness of $B$ gives an upper bound on $\mu(A)$ which we then
use to get a lower bound on $\mu(B)$.

\begin{clm}
  \label{clm:muBdoesntgettoobig}
  We have always had
  $$\mu(B) \le \mu(AB) + 2.$$
\end{clm}
\begin{proof}
  There are two ways that $\mu(B)-\mu(A B)$ can increase: \\
  \textbf{Case 1:}
  The emptier could empty from $0$ cups in $B$ while emptying
  from every cup in $A$. \\
  \textbf{Case 2:}
  The filler could evict a cup with fill lower than $\mu(B)$ from
  $B$ at the end of a donation-process. \\

  Note that cases are exhaustive, in particular note that if the
  emptier skips more than $1$ emptying then $\mu(B) - \mu(AB)$
  must decrease because $|B| > |AB|/2$, as opposed to in Case 1
  where $\mu(B) - \mu(AB)$ increases.

  In Case 1, because the emptier is $\Delta$-greedy-like,
  $$\min_{a\in A} \fil(a) > \max_{b\in B} \fil(b) - \Delta.$$
  Thus $\mu(B) \le \mu(A) + \Delta$. We can use this to get an
  upper bound on $\mu(B) - \mu(AB)$. We have, 
  \begin{align*}
    \mu(B) &= \frac{\mu(AB) |AB| - \mu(A) |A|}{|B|}\\
           &\le \frac{\mu(AB) |AB| - (\mu(B) - \Delta) |A|}{|B|}.
  \end{align*}
  Rearranging terms:
  $$\mu(B) \paren{1+\frac{|A|}{|B|}} \le \frac{\mu(AB) |AB| + \Delta |A|}{|B|}.$$
  Now, because $|A| \cdot \Delta \le n_A
  \cdot \Delta < n$ (by choosing $\delta$ very small), we have 
  $$\mu(B) \frac{|AB|}{|B|}\le \frac{\mu(AB) |AB| + n}{|B|}.$$
  Isolating $\mu(B)$ we have 
  $$\mu(B) \le \mu(AB) + 1.$$

  Consider the final round on which $B$ is skipped while $A$ is
  not skipped (or consider the first round if there is no such
  round).

  From this round onward the only increase to $\mu(B) - \mu(A
  B)$ is due to $B$ evicting cups with fill well below $\mu(B)$.
  We can upper bound the increase of $\mu(B) - \mu(AB)$ by the
  increase of $\mu(B)$ as $\mu(AB)$ is strictly increasing.

  The cup that $B$ evicts at the end of a
  donation-process has fill at least $\mu(B) - R_\Delta -
  (k-1)$, as the running time of $\randalg(k)$ is $k-1$, and
  because $B$ starts $R_\Delta$-flat by
  \cref{clm:allflatteningsworkbyM}. Evicting a cup
  with fill $\mu(B) - R_\Delta - (k -1)$ from $B$ changes
  $\mu(B)$ by $(R_\Delta + k - 1) / (|B|-1)$ where $|B|$ is the
  size of $B$ before the cup is evicted from $B$. Even if this
  happens on each of the $n_A$ donation-processes $\mu(B)$ cannot
  rise higher than $n_A (R_\Delta + k-1) / (n-n_A)$ which by
  design in choosing $\delta = 1/(2k)$ is at most $1$.

  Thus $\mu(B) \le \mu(AB) + 2$ is always true.

\end{proof}

Now, the upper bound on $\mu(B)-\mu(AB)$ along with the guarantee
that $B$ is flat allows us to bound the highest that a cup in $A$
could rise by greediness, which in turn upper bounds $\mu(A)$
which in turn lower bounds $\mu(B)$. 
\begin{clm}
  \label{clm:muBgreaterthanminushover2}
  We always have
  $$\mu(B) \ge -h/2 + \mu_0.$$
\end{clm}
\begin{proof}
  By \cref{clm:muBdoesntgettoobig} and \cref{clm:allflatteningsworkbyM} 
  we have that no cup in $B$ ever has fill greater than
  $u_B = \mu(A B) + 2 + R_\Delta + d$. 
  Let $u_A = u_B + \Delta + 1$. We claim that the backlog in $A$
  never exceeds $u_A$. Note that $\mu(AB), u_A, u_B$ are
  implicitly functions of the round; $\mu(AB)$ can increase from
  $\mu_0$ if the emptier skips emptyings.

  Consider how high the fill of a cup $c \in A$ could be.
  If $c$ came from $B$ then when it is donated
  to $A$ its fill is at most $u_B$; otherwise, $c$
  started with fill at most $R_\Delta$. Both of these expressions
  are less than $u_A - 1$. Now consider how
  much the fill of $c$ could increase while being in $A$. Because
  the emptier is $\Delta$-greedy-like, if a cup $c\in A$ has fill
  more than $\Delta$ higher than the backlog in $B$ then $c$ must
  be emptied from, so any cup with fill at least $u_B + \Delta =
  u_A - 1$ must be emptied from, and hence $u_A$ upper bounds the
  backlog in $A$. 

  Of course an upper bound on backlog in $A$ also serves as
  an upper bound on the average fill of $A$ as well, i.e.
  $\mu(A) \le u_A$.  Now we have
  \begin{align*}
    \mu(B) &= -\frac{|A|}{|B|} \mu(A) + \frac{|A B|}{|B|}\mu(A B) \\
           &\ge -(\mu(AB) + 3+R_\Delta+d+\Delta) \frac{|A|}{|B|} + \frac{|AB|}{|B|}\mu(AB)\\
           &= -(3+R_\Delta+d + \Delta) \frac{|A|}{|B|} + \mu(AB)\\
           &\ge -h/2 + \mu(AB)
  \end{align*}
  where the final inequality follows because $\mu(AB) \ge 0$, and
  because $|A|/|B|$ is sufficiently small by our choice of $\delta = 1/(2k)$.
  Of course $\mu(AB) \ge \mu_0$ so we have
  $$\mu(B) \ge -h/2 + \mu_0.$$

\end{proof}

Now we show that at least a constant fraction of the
donation-processes succeed with exponentially good probability.
\begin{clm}
  \label{clm:baseChernoffBound}
  By choosing $M = 2^{\Theta(\log^4 N)}$ the filler can guarantee that with
  probability at least $1-2^{-\Omega(\log^4 N)}$, the filler achieves
  fill $h+\mu_0$ in some cup in $A$. 
\end{clm}
\begin{proof}
Now we bound the probability that at least one cup in $A$ has
fill at least $\mu_0 + h$. If the emptier does not
\defn{interfere} with an application of $\randalg(k)$, i.e. the
emptier does not ever neglect the anchor set and put more resources
into $B$ than the filler does during the application, then the
application of $\randalg(k)$ achieves a cup with fill at least
$\mu(B) - R_\Delta + d$. By our lower bound
on $\mu(B)$ from \cref{clm:muBgreaterthanminushover2}, namely
$\mu(B) \ge -h/2 + \mu_0$, and because $d\ge 2h$, and $R_\Delta
\le h/2$ due to $\Delta \le h/8$, we have that a successful
donation process donates a cup with fill at least $\mu_0 + h$ to
$A$. If on each of the $n_A$ donation processes the emptier does
not interfere with the final application of $\randalg(k)$, then 
the probability that at least one donation process successfully
donates a cup with fill at least $\mu_0 + h$ to $A$ is at least 
$$1-(1-1/k!)^{n_A} \ge 1-e^{- n\delta/k!}.$$
Now we aim to show that $e^{-n \delta/k!} \le 2^{-\Omega(\log^4 N)}$.
This is equivalent to showing $n\delta/k! \ge \Omega(\log^4 N)$.
Using $k^k$ as an upper bound for $k!$, and recalling that
$n=\Theta(log^5 N)$, we have that it suffices to show that 
$$k^k \le n\delta / \Omega(\log^4 N) \le \delta O(\log N).$$
Recalling that $\delta = 1/(2k)$ this is equivalent to showing
that
$$k^{k+1} \le O(\log N).$$
Now recalling that 
$$k = \Theta(e^{2h}) = \Theta(e^{\frac{1}{8}\log\log\log N}) =
\Theta((\log\log N)^{1/4}),$$ 
we get 
$$k^{k+1} \le 2^{O((\log\log\log N)(\log\log N)^{1/4})} \le
O(2^{\log\log N}) = O(\log N).$$
Hence we have our desired bound on the probability of getting the
desired backlog in a cup, conditional on the emptier not
interfering with the final application $\randalg(k)$ on each
donation process.

  However, the emptier is allowed to interfere, and if the
  emptier interferes with an application of $\randalg(k)$ then
  the application might not get large fill. In fact, the emptier
  can even interfere conditional on the filler's progress during
  $\randalg(k)$; in particular it could choose to only interfere
  with applications of $\randalg(k)$ that look like they might
  succeed! However, by applying $\randalg(k)$ a random
  number of times in each donation process, chosen from $[M]$,
  where $M = 2N^2 k! 2^{\log^4 N}$, by a Chernoff Bound there are
  at least $N^2 2^{\log^4 N}$ successful applications of
  $\randalg(k)$ in the donation process with incredibly good
  probability: probability at least $1-2^{-2^{\lg^4 N}}.$
  But since the emptier cannot neglect the anchor set more than
  $N^2$ times, the emptier has at best a $2^{-\log^4 N}$ chance
  of interfering with the final application of $\randalg(k)$. 
  Taking a union bound over all donation processes, we have that
  with probability at least $1-2^{-\Omega(\log^4 N)}$ the emptier
  does not interfere with the final application of $\randalg(k)$
  on any donation processes. We previously found that the
  probability of achieving the desired backlog in $A$ conditional
  on the final application of $\randalg(k)$ in each donation
  process not being interfered with; multiplying these
  probabilities gives that with probability at least
  $1-2^{-\Omega(\log^4 N)}$ we achieve a cup in $A$ with fill
  at least $\mu_0 + h$.

\end{proof}

We now analyze the running time of the filling strategy. There
are $n_A$ donation-processes. Each donation-process consists of
$O(M)$ applications of $\randalg(k)$, which each take time
$O(1)$, and $O(M)$ applications of $\flatalg$, which each take
$\Theta(N^2)$ time. Thus overall the algorithm takes time $$n_A
\cdot O(M) (O(1) + O(N^2)) = 2^{O(\log^4 N)},$$ as desired.
  
\end{proof}

Now, using \cref{lem:obliviousBase} we show in
\cref{prop:obliviousBase} that an oblivious filler can achieve
fill $\omega(1)$ in a known cup. 
\begin{proposition}
  \label{prop:obliviousBase}
  Let $H = \frac{1}{128}\log\log\log N$, let $\Delta \le
  \frac{1}{128}\log\log\log N$, let $n = \Theta(\log^5 N)$. 
  Consider an $R_\Delta$-flat cup configuration in the
  variable-processor cup game on $n$ cups with average fill $\mu_0$.
  There is an oblivious filling strategy that either
  achieves mass $N^2$ among the cups, or achieves fill at least $\mu_0 + H$
  in a chosen cup in running time $2^{O(\log^4 N)}$ against a
  $\Delta$-greedy-like emptier with probability at least
  $1-2^{-\Omega(\log^4 N)}.$
\end{proposition}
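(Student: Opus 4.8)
The plan is to bootstrap off \cref{lem:obliviousBase}. Running $\rep_{\delta}(\randalg(k))$ with the parameters of that lemma gives us, in time $2^{O(\log^4 N)}$ and with probability $1-2^{-\Omega(\log^4 N)}$ (or else already mass $N^2$, in which case we are done), an \emph{unknown} anchor cup $c_*\in A$ with $\fil(c_*)\ge \mu_0+h$, where $h=\tfrac{1}{16}\log\log\log N = 8H$; moreover the proof of that lemma keeps $B$ at fill-range $\le R_\Delta+d$, we have $\mu(B)\ge\mu_0-h/2$, and the identity $n_A\mu(A)+n_B\mu(B)=n\mu(AB)\ge n\mu_0$ forces $\mu(A)$ and $\mu(B)$ to be anti-correlated. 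The whole task is to turn ``large fill in an unknown cup'' into ``fill $\ge\mu_0+H$ in a cup the oblivious filler can name.''

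First I would append a short \defn{re-flattening} block: keep placing $1$ unit in each anchor cup (so $\fil(c_*)$ only increases, hence stays $\ge\mu_0+h$), and apply $\flatalg$ to $B$ for the $O(R_\Delta+d+n_A)$ rounds that \cref{lem:flatalg} needs; since the emptier must still empty at least $n_B/2$ cups of $B$ each round (it can ``save'' at most $n_A<n_B/2$ of its slots for $A$), this is a legal $\flatalg$ instance with $n_A$ extra-emptyings and no skips, so afterwards $B$ is $R_\Delta$-flat. Now the filler names an arbitrary cup $c\in B$, drops the processor count to $p=1$, and simply pours $1$ unit into $c$ for a long time (up to $N^2$ rounds). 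Once this pouring block begins, $\fil(c)$ is non-decreasing, so it suffices to show that $\fil(c)$ reaches $\mu_0+H$ at some point; this is precisely the ``isolation'' idea sketched in \cref{sec:technical_overview}.

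Suppose not, so $\fil(c)<\mu_0+H$ throughout. Because $p=1$ and the emptier is $\Delta$-greedy-like, on any round with some cup of fill $>\fil_{I_t}(c)+\Delta$ the emptier cannot empty $c$, so $\fil(c)$ grows by a full unit that round. I would split on $\mu(B)$. \textbf{Case 1 ($\mu(B)\le\mu_0-1$):} then $\mu(A)\ge\mu_0+n_B/n_A=\mu_0+\Theta(k)$, so $A$ holds $\Omega(n_Ak)$ units of ``excess mass'' above the level $\theta=\mu_0+H+1+\Delta$; the potential $\mathcal E_t=\sum_{c'\ne c}(\fil_t(c')-\theta)^+$ starts at $\Omega(n_Ak)$ and, as long as it is positive, forces $c$ to grow while $\mathcal E$ drops by at most $1$, so $\fil(c)$ climbs by $\Omega(n_Ak)\gg H$ above its (not-too-low, by $R_\Delta$-flatness of $B$) starting fill. \textbf{Case 2 ($\mu(B)>\mu_0-1$):} then $R_\Delta$-flatness gives $\fil_0(c)\ge\mu(B)-R_\Delta\ge\mu_0-O(H)$, while $c_*$ — still at fill $\ge\mu_0+8H$ and falling by at most $1$ per round — blocks the emptier off $c$ until $c$ has risen to within $\approx\Delta$ of $c_*$'s current fill; since $c$ rises by $1$ on each blocked round, the usual ``meet in the middle'' (at the first round $\tau$ the emptier empties $c$, $\mu_0+8H-(\tau-1)\le\fil_\tau(c_*)\le\fil_\tau(c)+1+\Delta$) yields $\fil_\tau(c)=\fil_0(c)+(\tau-1)\ge\tfrac12(\fil_0(c)+\mu_0+8H)-O(H)\ge\mu_0+H$, the $8H$ lift comfortably dominating the $O(R_\Delta)+O(\Delta)=O(H)$ losses because $H=h/8$. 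Both cases contradict $\fil(c)<\mu_0+H$, so the named cup ends with fill $\ge\mu_0+H$.

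The running time is $2^{O(\log^4 N)}$ (dominated by $\rep_{\delta}(\randalg(k))$; the re-flattening and pouring blocks add only $\poly(N)$ rounds), and the only failure comes from \cref{lem:obliviousBase}. The step I expect to be the real obstacle is exactly the fill-accounting in the pouring block: an oblivious filler cannot avoid naming a cup $c$ that starts well below $\mu_0$, so one must use the re-flattening of $B$ together with the forced anti-correlation of $\mu(A)$ and $\mu(B)$ to guarantee that a blocker of the right height — either $c_*$ itself, or the excess mass trapped in $A$ — is always present; getting the constants in the two cases to close simultaneously, in particular with $\Delta$ as large as $\tfrac{1}{128}\log\log\log N$ rather than $O(1)$, is the delicate part.
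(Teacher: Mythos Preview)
Your proposal is correct and uses the same core idea as the paper: run $\rep_\delta(\randalg(k))$ to plant an unknown cup $c_*\in A$ at height $\ge\mu_0+8H$, then switch to $p=1$ and pour into a named cup, letting $\Delta$-greediness divert the emptier away from it.

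The paper's execution is much shorter than yours. It skips both your re-flattening of $B$ and your case split: it simply names $c_0\in B$, pours for exactly $5H$ rounds, notes that $\fil(c_*)$ stays above $\mu_0+3H>\mu_0+H+\Delta$ throughout (so only cups above $\mu_0+H$ can ever be emptied), asserts $\fil(c_0)\ge\mu_0-4H$ directly from $\mu(B)\ge\mu_0-h/2$, and concludes that $5H$ undisturbed rounds suffice. Your extra machinery---the $\flatalg$ pass on $B$, the excess-mass potential $\mathcal E$ in Case~1, the meet-in-the-middle estimate in Case~2---does give a more careful account of the named cup's starting height, which the paper handles rather casually, but none of it appears in the paper's own argument. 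One small fix to your writeup: the bound of $O(R_\Delta+d+n_A)$ rounds for the re-flattening undercounts; the $n_A$ there is a \emph{per-round} cap on extra emptyings into $B$, whereas $\flatalg$'s running time in \cref{lem:flatalg} depends on the \emph{total} $E$ and $S$, each of which can be as large as $N^2$ before the mass-$N^2$ escape clause triggers. Running the re-flattening for $\Theta(N^2)$ rounds (exactly as the internal applications of $\flatalg$ in \cref{lem:obliviousBase} do) fixes this and leaves the overall $2^{O(\log^4 N)}$ bound intact.
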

\begin{proof}
  The filler starts by using $\rep_\delta(\randalg(k))$ with
  parameter settings as in \cref{lem:obliviousBase};
  note that $h$ from \cref{lem:obliviousBase} is $8H$.

  If this results in mass $N^2$ among the cups then the filler is
  already done. Otherwise, with probability at least
  $1-2^{-\Omega(\log^4 N)}$, there is some cup in $A$ with fill
  $\mu_0 + 8H$. We assume for the rest of the proof that there is
  some cup $c_* \in A$ with $\fil(c_*) \ge \mu_0 + 8H$.

  The filler sets $p=1$, i.e. uses a single processor. Now the
  filler exploits the emptier's greedy-like nature to to get fill
  $H$ in a chosen cup $c_0 \in B$. For $5H$ rounds
  the filler places $1$ unit of fill into $c_0$. Because the
  emptier is $\Delta$-greedy-like it must empty from $c_*$ 
  while $\fil(c_*) > \fil(c_0) + \Delta$. Within $5H$ rounds
  the cups $\fil(c_*)$ cannot decrease below $3H+\mu_0 > H + \Delta + \mu_0$.
  Hence, during these $5H$ rounds, only cups with fills larger
  than $H + \mu_0$ can be emptied from by greediness. 
  The fill of $c_0$ started as at least
  $-4H+\mu_0$ as $\mu(B) \ge -h/2+\mu_0$ from
  \cref{lem:obliviousBase}. After $5H$ rounds
  $c_0$ has fill at least $H+\mu_0$, because the emptier cannot
  have emptied $c_0$ until it attained fill $H+\mu_0$, and if
  $c_0$ is never emptied from then it achieves fill $H+\mu_0$.
  Thus the filler achieves fill $H+\mu_0$
  in $c_0$, a \emph{known} cup, as desired.

  The running time is of course still $2^{O(\log^4 N)}$ by
  \cref{lem:obliviousBase}.
\end{proof}

Next we prove the \defn{Oblivious Amplification Lemma}. We remark
that, like \cref{lem:obliviousBase},
\cref{lem:obliviousAmplification} refers to filling strategies in
the \emph{regular} cup game (no extra empties). In particular, as
in \cref{lem:obliviousBase}, we give no guarantees on the
behavior of the algorithms for the $E$-extra-emptyings cup game
with $E > 0$. It is surprising that we do not need to provide
any guarantees; for $\flatalg$ and $\randalg$ we had to prove
that the cups have bounded fill-range regardless of
extra-emptying. We do not need to establish guarantees on the
algorithms when there is extra emptying because
there cannot be very much extra emptying, and by randomly
choosing which applications of our algorithms are
\enquote{important} we can guarantee that all important
applications of our algorithms succeed (with very high
probability).

\begin{lemma}[Oblivious Amplification Lemma]
  \label{lem:obliviousAmplification} 
  Let $\delta \in (0, 1/2)$ be a constant parameter. Let $\Delta
  \le O(1)$. Consider a cup configuration
  in the variable-processor cup game on $n \le N, n >
  \Omega(1/\delta^2)$ cups with average fill $\mu_0$ that is
  $R_\Delta$-flat. Let $\alg(f)$ be an oblivious filling strategy
  that either achieves mass $N^2$ or, with failure probability at
  most $p$ satisfying $p\ge 2^{-O(\log^4 N)}$, achieves backlog $\mu_0 + f(n)$ on such cups
  in running time $T(n)$ against a $\Delta$-greedy-like emptier.
  Let $M = 2^{\Theta(\log^9 (N))}$.

  Consider a cup configuration in the variable-processor cup game
  on $n \le N, n > \Omega(1/\delta^2)$ cups with average fill
  $\mu_0$ that is $R_\Delta$-flat. There exists an oblivious
  filling strategy $\alg(f')$ that either achieves mass $N^2$ or
  with failure probability at most 
  $$p' \le np + 2^{-\log^8 N}$$
  achieves backlog $f'(n)$ satisfying 
  $$f'(n) \ge (1-\delta)^2 f(\floor{(1-\delta)n}) + f(\ceil{\delta n}) + \mu_0$$ 
  and $f'(n) \ge f(n)$, in running time 
  $$T'(n) \le Mn\cdot T(\floor{(1-\delta)n}) + T(\ceil{\delta n})$$
  against a $\Delta$-greedy-like emptier.
\end{lemma}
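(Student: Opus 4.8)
The plan is to imitate the Adaptive Amplification Lemma (\cref{lem:adaptiveAmplification}), substituting $\rep_\delta(\alg(f))$ (\cref{def:rep}) for the phase-and-swap loop and then recursing with $\alg(f)$ on the anchor set. Concretely, $\alg(f')$ first uses the fallback of simply running $\alg(f)$ on all $n$ cups whenever $\mu_0+f(n)\ge(1-\delta)^2 f(n_B)+f(n_A)+\mu_0$ (which gives $f'(n)\ge f(n)$ since $\mu_0\ge0$); otherwise it runs $\rep_\delta(\alg(f))$ with $\alg_0=\alg(f)$ and $M=2^{\Theta(\log^9 N)}$ to produce anchor set $A$ ($|A|=n_A=\ceil{\delta n}$) and non-anchor set $B$ ($|B|=n_B=\floor{(1-\delta)n}$), then flattens $A$ by running $\flatalg$ on it for $\Theta(N^2)$ rounds, and finally applies $\alg(f)$ once to $A$. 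From here on we analyze this second branch.

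Next come the reductions and the probabilistic bookkeeping. As in \cref{lem:obliviousBase} we may assume the emptier neglects the anchor set fewer than $N^2$ times and skips fewer than $N^2$ emptyings in total, since otherwise the system mass (which is non-decreasing) reaches $N^2$ and backlog $\ge N\ge f'(n)$ is achieved. Under this assumption every subgame played on $B$ (and later $A$) sees at most $N^2$ extra- and at most $N^2$ skip-emptyings, so by \cref{lem:flatalg} every invocation of $\flatalg$ makes its target $R_\Delta$-flat without ever increasing its fill-range, and every \emph{non-emptier-wasted} application of $\alg(f)$ to a just-flattened (hence $R_\Delta$-flat) copy of $B$ is a legitimate regular-game run, so it achieves mass $N^2$ or, with probability at least $1-p$, yields a cup of fill at least $\mu(B)+f(n_B)$. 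We rely on at most $n_A+1\le n$ such applications: the final one of each of the $n_A$ donation processes, and the single Phase-B application. By the Chernoff-plus-obliviousness argument of \cref{clm:baseChernoffBound} (using that the number of applications $m_0$ in each process is uniform in $[M]$ with $M=2^{\Theta(\log^9 N)}$, and that the emptier's neglect budget is only $N^2$), the final application of each donation process is non-emptier-wasted except with probability $2^{-\Omega(\log^8 N)}$; a union bound over the $\le n\le N$ processes bounds this failure event by $2^{-\log^8 N}$. Conditioned on it, each relied-upon application achieves mass $N^2$ or fails with probability $\le p$, so a union bound over the $\le n$ of them gives total failure probability $p'\le np+2^{-\log^8 N}$, as claimed.

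The heart of the argument is the following: conditioned on all relied-upon applications succeeding, we must show $\mu(A)\ge(1-\delta)^2 f(n_B)+\mu_0$ just before Phase B (or that mass $N^2$ has been reached). Anchor cups never lose fill (the filler adds $1$ to each every round and the emptier removes at most $1$), and $\mu(A)$ is unchanged on every non-neglect round; so it suffices to show each cup is donated into $A$ with fill at least $(1-\delta)^2 f(n_B)+\mu_0$, which then persists for every cup of $A$. A donated cup has fill at least $\mu(B)'+f(n_B)$, where $\mu(B)'$ is the average fill of $B$ at the start of the final $\alg(f)$-application of its process (right after the preceding flattening), so it suffices to lower-bound $\mu(B)$ at those moments. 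Since the whole-set average is always $\ge\mu_0$, we have $\mu(B)=\mu(AB)+\tfrac{|A|}{|B|}(\mu(AB)-\mu(A))\ge\mu_0-\tfrac{\delta}{1-\delta}(\mu(A)-\mu(AB))$, and we bound $\mu(A)-\mu(AB)$ from above in the spirit of \cref{clm:muBdoesntgettoobig,clm:muBgreaterthanminushover2}: $\mu(A)$ changes only on neglect rounds and at donations, and at the flattened checkpoints $B$ is $R_\Delta$-flat, so $\Delta$-greediness prevents any anchor cup from rising above $\mu(B)+R_\Delta+\Delta+1$, whence $\mu(A)-\mu(AB)$ is $O(R_\Delta+\Delta)$ plus the cumulative cost of repeatedly donating high cups out of $B$. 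This last cost is exactly what turns the adaptive factor $(1-\delta)$ into $(1-\delta)^2$: removing a cup of fill $\approx\mu(B)+f(n_B)$ from $B$ when $|B|=s$ drops $\mu(B)$ by $\approx f(n_B)/(s-1)$, and summing over the $n_A$ donations, as $s$ runs from $n$ down to $n_B+1$, costs a factor $\ln\tfrac1{1-\delta}\le\delta+\delta^2$ of $f(n_B)$; feeding this through the displayed bound, and taking the constant in ``$n>\Omega(1/\delta^2)$'' large enough that the $O(R_\Delta+\Delta)=O(1)$ terms are dominated by $\delta f(n_B)=\omega(1)$, yields $\mu(B)\ge\mu_0-(1-(1-\delta)^2)f(n_B)$ at every donation, hence each donated cup has fill $\ge\mu_0+(1-\delta)^2 f(n_B)$, hence $\mu(A)\ge\mu_0+(1-\delta)^2 f(n_B)$ before Phase B.

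Finally, flattening $A$ in Phase B does not decrease $m(A)$ (nice rounds are mass-neutral, skips only help), so afterwards $A$ is $R_\Delta$-flat with $\mu(A)\ge\mu_0+(1-\delta)^2 f(n_B)$, and the final $\alg(f)$-application to $A$ achieves mass $N^2$ or backlog $\ge\mu(A)+f(n_A)\ge(1-\delta)^2 f(n_B)+f(n_A)+\mu_0=f'(n)$. The running time is $n_A$ donation processes, each of at most $M$ rounds consisting of a $\Theta(N^2)$-round flatten followed by an application of $\alg(f)$ to $B$, plus one $\Theta(N^2)$-round flatten and one application of $\alg(f)$ to $A$; absorbing the flattening overhead and lower-order terms into the slack of $M=2^{\Theta(\log^9 N)}$ gives $T'(n)\le Mn\,T(\floor{(1-\delta)n})+T(\ceil{\delta n})$. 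The main obstacle is the third paragraph: controlling the joint drift of $\mu(A)$ and $\mu(B)$ --- in particular obtaining a small (non-$N^2$-scale) upper bound on $\mu(A)-\mu(AB)$ even though $B$ is highly non-flat during each $\alg(f)$-application, and squeezing the second $(1-\delta)$ factor out of the cumulative cost of donating high-fill cups out of $B$ --- together with making the probabilistic bookkeeping of the second paragraph fully precise, especially ruling out adaptive sabotage of the relied-upon applications by the emptier.
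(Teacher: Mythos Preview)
Your overall architecture---fallback to $\alg(f)$, otherwise run $\rep_\delta(\alg(f))$, flatten $A$, recurse on $A$---matches the paper exactly, and your probability and running-time bookkeeping are in the right spirit (the paper's Chernoff argument is organized slightly differently, first bounding the number of successful applications among the $M$ and then arguing the randomly chosen final one is among them, but the outcome is the same).

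The real gap is in your third paragraph. You reduce to showing that \emph{each} donated cup has fill at least $\mu_0+(1-\delta)^2f(n_B)$, and then try to secure this by lower-bounding $\mu(B)$ via an upper bound on $\mu(A)-\mu(AB)$ ``in the spirit of \cref{clm:muBgreaterthanminushover2}.'' But that claim's mechanism does not transfer: there, no $B$-cup ever exceeds $\mu(AB)+O(1)$, which caps every $A$-cup by greediness; here, each application of $\alg(f)$ drives some $B$-cup up to $\mu(B)+f(|B|)$, so greediness permits $A$-cups to rise to that height under neglect, and your $O(R_\Delta+\Delta)$ bound on $\mu(A)-\mu(AB)$ is simply false. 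Concretely, if the emptier front-loads its neglect budget into an early donation process, $\mu(B)$ can be driven far below $\mu_0-\delta f(n_B)$ and later donated cups will \emph{not} have the fill you claim---yet $\mu(A)$ at the end is still fine, because whatever mass left $B$ via neglect landed in $A$. So the per-cup statement you aim for is both unprovable and unnecessary.

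The paper sidesteps this entirely by never upper-bounding $\mu(A)$. It tracks the potential $\Phi_t=\mu_{S_t}(B)-\skips_t/n_B$ and shows $\Phi$ is non-increasing: a donation drops $\mu(B)$ by at least $f(n_B)/(n-i)$; neglect of $A$ means extra emptying in $B$, which also drops $\mu(B)$; a skipped emptying raises $\mu(B)$ by at most $1/|B|\le 1/n_B$ while raising $\skips/n_B$ by exactly $1/n_B$. Hence at the end $\Phi\le\mu_0-f(n_B)\sum_{i=1}^{n_A}1/(n-i)$. Plugging into exact mass conservation $n_A\mu(A)+n_B\mu(B)=n\mu_0+\skips$ gives
\[
\mu(A)\;\ge\;\mu_0+\frac{n_B}{n_A}\,f(n_B)\,(H_{n-1}-H_{n_B-1})\;\ge\;\mu_0+\bigl((1-\delta)-\Theta(1/(\delta n))\bigr)f(n_B),
\]
and taking $n>\Omega(1/\delta^2)$ absorbs the error into the second $(1-\delta)$ factor. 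The point is that neglect and donation both push mass from $B$ to $A$; the potential-plus-conservation argument captures this directly, whereas your per-donated-cup route forces you to separately control the two effects, which cannot be done.
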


\begin{proof}
  We use the definitions and notation given in \cref{def:rep}. 

  Note that the emptier cannot neglect the anchor set more than
  $N^2$ times per donation-process, and the emptier cannot skip
  more than $N^2$ emptyings, without causing the mass of the cups
  to be at least $N^2$; we assume for the rest of the proof that
  the emptier chooses not to do this.

  The filler simply uses $\alg(f)$ on all the cups if 
  $$f(n) \ge (1-\delta)^2 f(n_B) + f(n_A).$$
  In this case our strategy trivially has the desired guarantees. 
  In the rest of the proof we consider the case where we cannot
  simply fall back on $\alg(f)$ to achieve the desired backlog.

  The filler's strategy can be summarized as follows:\\
  \textbf{Step 1:} Make $\mu(A) \ge (1-\delta)^2 f(n_B)$ by
  using $\rep_\delta(\alg(f))$ on all the cups,
  i.e. applying $\alg(f)$ repeatedly to $B$, flattening $B$ before
  each application, and then donating a cup from $B$ to $A$ on
  randomly chosen applications of $\alg(f)$.\\
  \textbf{Step 2:} Flatten $A$ using $\flatalg$, and then use
  $\alg(f)$ on $A$.

  Now we analyze Step 1, and show that by appropriately choosing
  parameters it can be made to succeed. For this proof we need
  all donation-processes to succeed. This necessitates choosing
  $M$ very large. In particular we choose $M = 2^{\Theta(\log^9
  N)}$---recall that $[M]$ is the set from which we randomly choose
  how many times to apply $\alg(f)$ in a donation-process. By
  choosing $M$ this large we cannot hope to guarantee that every
  application of $\alg(f)$ succeeds: there are far too many
  applications. On the other hand, having $M$ so large allows us
  to have a very tight concentration bound on how many
  applications of $\alg(f)$ succeed. Ignoring for a moment the
  fact that the emptier can choose to neglect the anchor set,
  i.e. assuming that no applications of $\alg(f)$ are
  emptier-wasted, the probability that fewer than $M\cdot (1-2p)$
  of $M$ applications of $\alg(f)$ succeed is at most
  $e^{-2Mp^2}$ by a Chernoff bound. However, the emptier is
  allowed to \defn{interfere}, i.e. neglect the anchor set and do
  extra emptying in the non-anchor set. Fortunately the emptier
  can interfere with at most $N^2$ of the applications of
  $\alg(f)$, which is very small compared to $M\cdot (1-2p)$.
  Thus, if we condition on there being at least $M(1-2p)$
  applications that would succeed if the emptier did not
  interfere, there are at least $M(1-2p)-N^2$ applications of
  $\alg(f)$ that succeed. Let $\mathcal{W}$ be the event that the
  donation-process succeeds, i.e. the final application of
  $\alg(f)$ is not emptier-wasted and succeeds, and let
  $\mathcal{D}$ be the event that at least $M(1-2p)$ of the $M$
  applications of $\alg(f)$ would succeed without  interference
  by the emptier. Let $1-q = \Pr[\mathcal{W}]$. Obviously
  $$\Pr[\mathcal{W}] \ge \Pr[\mathcal{W} \land \mathcal{D}] =
  \Pr[\mathcal{D}] \cdot \Pr[\mathcal{W} | \mathcal{D}].$$
  Because the filler chooses which application of $\alg(f)$ is
  the final application uniformly at random from $[M]$ we thus
  have $$1-q \ge (1-e^{-2Mp^2})\paren{\frac{M\cdot
  (1-2p)-N^2}{M}}.$$ Rearranging, and over-estimating (i.e.
  dropping unnecessary terms to simplify the expression, while
  maintaining the truth of the expression), we have $$q \le
  e^{-2Mp^2} + 2p + N^2/M.$$ By assumption $p \ge
  2^{-O(\log^4 N)}$, so as $M \ge 2^{\Omega(\log^9 N)}$, we have
  $Mp^2 \ge 2^{\Omega(\log^9 N)}$. Thus,
  $$q \le 2p + 2^{-2^{\Omega(\log^9 N)}} +
  N^2 2^{-\Omega(\log^{9} N)}.$$ 
  For convenience we loosen this to $$q \le 2p + \frac{2^{-\log^8 N}}{N+1}.$$

  Taking a union bound, we have that with failure probability at
  most $q \cdot n_A$ all donation-process successfully achieve a cup
  with fill at least $\mu_{S_0}(B) + f(n_B)$ where $\mu_{S_0}(B)$
  refers to the average fill of $B$ measured at the start of the
  application of $\alg(f)$; now we assume all donation-processes
  are successful, and demonstrate that this translates into the
  desired average fill in $A$.

  Let \defn{$\skips_t$} denote the number of times that the
  emptier has skipped the anchor set by round $t$. Consider how
  $\mu(B) - \skips/n_B$ changes over the course of the donation
  processes. As noted above, at the end of each donation-process
  $\mu(B)$ decreases due to $B$ donating a cup with fill at least
  $\mu(B) + f(n_B)$. In particular, if $S$ denotes the cup state
  immediately before a cup is donated on the $i$-th
  donation-process, $B_0$ denotes the set $B$ before
  the donation and $B_1$ denotes the set $B$ after the donation,
  then $\mu_{S}(B_1) = \mu_{S}(B_0) - f(n_B) / (n-i)$. Now we claim that
  $t\mapsto \mu_{S_t}(B) - \skips_t/n_B$ is monotonically decreasing. 
  Clearly donation decreases $\mu(B) - \skips/n_B$. 
  If the anchor set is neglected then $\mu(B)$ decreases, causing
  $\mu(B) - \skips/n_B$ to decrease. 
  If a skip occurs, then $\skips/n_B$ increases by more than
  $\mu(B)$ increases, causing $\mu(B)-\skips/n_B$ to decrease. 
  Let $t_*$ be the cup state at the end of all the
  donation-processes. We have that 
  \begin{equation}
    \label{eq:harmonic1}
    \mu_{S_{t_*}}(B) - \frac{\skips_{t_*}}{n_B} \le \mu_0 - \sum_{i=1}^{n_A}\frac{f(n_B)}{n-i}.
  \end{equation}
  By conservation of mass we have 
  $$n_A\cdot \mu_{S_{t_*}}(A) + n_B\cdot \mu_{S_{t_*}}(B) = n\mu_0 + \skips_{t_*}.$$
  Rearranging, 
  \begin{equation}
    \label{eq:lowerboundingAharmonic}
    \mu_{S_{t_*}}(A) = \mu_0 + \frac{n_B}{n_A}\paren{\mu_0 +
    \frac{\skips_{t_*}}{n_B} - \mu_{S_{t_*}}(B)}.
  \end{equation}
  Now we obtain a simpler form of
  Inequality~\eqref{eq:harmonic1}. Let $H_n$ denote the $n$-th
  harmonic
  number. We desire a simpler lower bound for 
  $$\sum_{i=1}^{n_A} \frac{1}{n-i} = H_{n-1}-H_{n_B-1}.$$

  We use the well known fact that 
  \begin{equation}
    \label{eq:wellKnownLogIneq}
    \frac{1}{2(n+1)} < H_n - \ln n - \gamma < \frac{1}{2n}
  \end{equation}
  where $\gamma = \Theta(1)$ denotes the Euler-Mascheroni constant.
  Of course $H_{n-1}-H_{n_B-1} \ge H_n - H_{n_B}.$ Now using
  Inequality~\eqref{eq:wellKnownLogIneq} we have
  \begin{align*}
    H_n - H_{n_B} &> \paren{\ln n + \gamma + \frac{1}{2(n+1)}} - \paren{\ln n_B + \gamma + \frac{1}{2n_B}}\\
                  &> \ln \frac{1}{1-\delta} + \frac{1}{2}\paren{\frac{n_B-n-1}{(n+1)n_B}}\\
                  &> \delta - \Theta\paren{\frac{\delta}{(1-\delta)n}}.
  \end{align*}
  Now using this lower bound on $H_n - H_{n_B}$ in
  Inequality~\eqref{eq:lowerboundingAharmonic} we have:
  \begin{align*}
    \mu_{t_*}(A) &> \mu_0 + \frac{n_B}{n_A}\paren{\delta - \Theta\paren{\frac{\delta}{(1-\delta)n}}}f(n_B)\\
                 &= \mu_0 + \frac{\floor{(1-\delta)n}}{\ceil{\delta n}}\paren{\delta - \Theta\paren{\frac{\delta}{(1-\delta)n}}}f(n_B)\\
                 &> \mu_0 + \paren{\frac{1-\delta}{\delta} - \frac{1}{\delta^2 n}}\paren{\delta - \Theta\paren{\frac{\delta}{(1-\delta)n}}}f(n_B)\\
                 &> \mu_0 + \paren{(1-\delta) - \Theta(1/(\delta n))}f(n_B).
  \end{align*}
  Thus, by choosing $n > \Omega(1/\delta^2)$ we have 
  $$\mu_{t_*}(A) > \mu_0 + (1-\delta)^2 f(n_B).$$

We have shown that in Step 1 the filler achieves average fill
$\mu_0 + (1-\delta)f(n_B)$ in $A$ with failure probability at
most $q\cdot n_A$.
Now the filler flattens $A$ and uses $\alg(f)$ on $A$.
It is clear that this is possible, and succeeds with probability
at least $p$.
This gets a cup with fill 
$$\mu_0 + (1-\delta)^2 f(n_B) + f(n_A)$$
in $A$, as desired.

Taking a union bound over the probabilities of Step 1 and Step 2
succeeding gives that the entire procedure fails with probability
at most 
$$p' \le p + q \cdot n_A \le n p + 2^{-\log^8 N}.$$

The running time of Step 1 is clearly $M\cdot n\cdot
T(\floor{(1-\delta)n})$ and the running time of Step 2 is clearly
$T(\ceil{\delta n})$; summing these yields the desired upper
bound on running time.

\end{proof}

Finally we prove that an oblivious filler can achieve backlog
$N^{1-\varepsilon}$, just like an adaptive filler despite the
oblivious filler's disadvantage. The proof is very similar to the
proof of \cref{thm:adaptivePoly}, although significant care must
be taken in the randomized case to get the desired probabilistic
guarantees. 
\begin{theorem}
  \label{thm:obliviousPoly}
  There is an oblivious filling strategy for the
  variable-processor cup game on $N$ cups that achieves backlog
  at least $\Omega(N^{1-\varepsilon})$ for any constant $\varepsilon
  >0$ in running time $2^{\polylog(N)}$ with probability at least
  $1-2^{-\polylog(N)}$ against a $\Delta$-greedy-like emptier
  with $\Delta \le \frac{1}{128} \log\log\log N$.
\end{theorem}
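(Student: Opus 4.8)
The plan is to follow the same blueprint as the proof of \cref{thm:adaptivePoly}, replacing the adaptive ingredients with their oblivious counterparts: \cref{prop:obliviousBase} in place of $\trivalg$ as the base case, and the Oblivious Amplification Lemma (\cref{lem:obliviousAmplification}) in place of the Adaptive Amplification Lemma. The one genuinely new ingredient is that we must now carry a failure probability through the recursion and verify that it stays $2^{-\polylog(N)}$; a second, milder, complication is that the oblivious base case is much weaker than $\trivalg$ (it works only on $n_0 := \Theta(\log^5 N)$ cups and only guarantees backlog $H := \tfrac{1}{128}\log\log\log N$ above the initial average fill), which forces a small loss in the exponent.

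First I would fix the target constant $\varepsilon$ and set $\varepsilon'' := \varepsilon/2$. Choose a constant $\delta = \Theta(1)$, depending only on $\varepsilon''$, small enough that $1-(3-\varepsilon'')\delta + \tfrac12\delta^{1-\varepsilon''} \ge 1$; this is the analogue of the Bernoulli-identity step of \cref{thm:adaptivePoly}, the only change being that the $(1-\delta)^2$ factor in front of $f(n_B)$ produces $(1-\delta)^{3-\varepsilon''}$ instead of $(1-\delta)^{2-\varepsilon}$, and the extra $(1-\delta)$ is harmlessly absorbed because $\delta^{1-\varepsilon''}$ dominates $\delta$. Set $g_0 := \ceil{n_0/\delta} = \Theta(\log^5 N)$, and then pick $c = \Theta(1/\polylog(N))$ small enough that $c\,g_0^{1-\varepsilon''} - 1 \le H$. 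The point of passing to the smaller exponent $\varepsilon''$ is exactly this: the base case delivers only $\Theta(\log\log\log N)$ backlog on $\Theta(\log^5 N)$ cups, so $c$ is necessarily sub-constant, and we still recover the stated bound because $c\,N^{1-\varepsilon''} = \Omega(N^{1-\varepsilon})$ (since $N^{\varepsilon/2}$ dominates any polylogarithm).

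Next I would set $\alg(f_0)$ to be the strategy of \cref{prop:obliviousBase} --- extended to $k \ge n_0$ cups by ignoring all but $n_0$ of them, and to $k < n_0$ cups by doing nothing (this extension is only needed so that the recursion inside an amplification always terminates) --- so that $f_0(k) \ge H$ for all $k \ge n_0$, with failure probability $2^{-\Omega(\log^4 N)}$ and running time $2^{O(\log^4 N)}$; and for $i \ge 0$ let $\alg(f_{i+1})$ be the amplification of $\alg(f_i)$ via \cref{lem:obliviousAmplification} with parameter $\delta$, and let $g_{i} := \floor{g_{i-1}/(1-\delta)}$. The heart of the argument is the induction on $i$, mirroring \cref{clm:fikinduction}: for every $i \ge 0$ and every $k$ with $n_0 \le k \le g_i$, the strategy $\alg(f_i)$ either drives the mass of some set of cups to $N^2$ (whence backlog is permanently $\ge N$) or achieves backlog at least $c\,k^{1-\varepsilon''} - 1$ above the initial average fill. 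For $k \le g_0$ one uses $f_{i}(k) \ge f_{i-1}(k) \ge \dots \ge f_0(k) = H \ge c\,k^{1-\varepsilon''}-1$, exactly as in the adaptive proof. For $g_0 < k \le g_{i+1}$, the same arithmetic of the $g_j$'s that appears in \cref{thm:adaptivePoly} shows that both recursive subproblem sizes $\ceil{\delta k}$ and $\floor{(1-\delta)k}$ lie in $[n_0, g_i]$ --- the lower bound $\ge n_0$ is precisely where $g_0 \ge n_0/\delta$ is used --- so the inductive hypothesis applies to both; substituting $f_i(\floor{(1-\delta)k}) \ge c\floor{(1-\delta)k}^{1-\varepsilon''}-1$ and $f_i(\ceil{\delta k}) \ge c\ceil{\delta k}^{1-\varepsilon''}-1$ into $f_{i+1}(k) \ge (1-\delta)^2 f_i(\floor{(1-\delta)k}) + f_i(\ceil{\delta k})$ and running the same sub-additivity, floor-dropping, and Bernoulli manipulations as in \cref{thm:adaptivePoly} (the choice of $\delta$ makes the resulting leading coefficient $\ge 1$) yields $f_{i+1}(k) \ge c\,k^{1-\varepsilon''}-1$. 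Since $g_i$ grows geometrically, choosing $i_* = \ceil{\log_{1/(1-\delta)}(N/g_0)} = O(\log N)$ gives $g_{i_*} \ge N$, and running $\alg(f_{i_*})$ on the empty (hence $R_\Delta$-flat, average-fill-$0$) configuration of $N$ cups produces backlog $c\,N^{1-\varepsilon''}-1 = \Omega(N^{1-\varepsilon})$.

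Finally I would do the two bookkeeping computations. For the failure probability, \cref{lem:obliviousAmplification} gives the recursion $p_{i+1} \le n\,p_i + 2^{-\log^8 N}$ with $p_0 = 2^{-\Omega(\log^4 N)}$; unrolling over $i_* = O(\log N)$ levels with $n \le N$ gives $p_{i_*} \le N^{O(\log N)}\bigl(2^{-\Omega(\log^4 N)} + 2^{-\log^8 N}\bigr) = 2^{-\Omega(\log^4 N)} = 2^{-\polylog(N)}$, and along the way $p_i$ never drops below $2^{-O(\log^4 N)}$, as the Lemma's hypothesis demands. For the running time, with $M = 2^{\Theta(\log^9 N)}$ the recursion $T_{i+1}(n) \le Mn\,T_i(\floor{(1-\delta)n}) + T_i(\ceil{\delta n}) \le 2Mn\,T_i(n)$ unrolls to $T_{i_*}(N) \le (2MN)^{O(\log N)}\cdot 2^{O(\log^4 N)} = 2^{\polylog(N)}$. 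I expect the main obstacle to be exactly this tracking of constants and probabilities: verifying that the polylog-sized deficit from the weak base case can be fully charged to halving the exponent, that the recursion never falls below $n_0$ cups (handled by inflating $g_0$ to $\Theta(n_0/\delta)$), and that the compounding factor $n^{i_*}$ in the failure probability is dominated by $2^{\Omega(\log^4 N)}$. Everything else is a faithful transcription of the proof of \cref{thm:adaptivePoly}.
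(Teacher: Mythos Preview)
Your proposal is correct and follows essentially the same route as the paper. The only cosmetic difference is how the weak base case is absorbed into the inductive invariant: the paper proves $f_i(k)\ge (k/n_b)^{1-\varepsilon}-1$ (dividing by $n_b=\Theta(\log^5 N)$ inside the power), whereas you prove $f_i(k)\ge c\,k^{1-\varepsilon''}-1$ with a sub-constant $c=\Theta(1/\polylog N)$; these are equivalent, since the paper's implicit leading constant is $n_b^{-(1-\varepsilon)}$, and both recover the target exponent by halving $\varepsilon$ at the end. One small point to watch when you carry out the Bernoulli-and-floor manipulations with a sub-constant $c$: the error term you must absorb is $\frac{2(1-\delta)^2}{c\,k^{1-\varepsilon''}}$ rather than $\frac{2(1-\delta)^2}{k^{1-\varepsilon''}}$, so you need $c\,g_0^{1-\varepsilon''}$ (which is $\Theta(H)\to\infty$) to dominate the constant $4/\delta^{1-\varepsilon''}$ --- this holds for large $N$, but is not literally ``the same manipulation as in \cref{thm:adaptivePoly}'' where $c$ was constant.
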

\begin{proof}
  We aim to achieve backlog $(N/n_b)^{1-\varepsilon}-1$ for 
  $n_b = \Theta(\log^5 N)$ on $N$ cups.
  Let $\delta$ be a constant, chosen as a function of $\varepsilon$.

  We call the algorithm from \cref{prop:obliviousBase}
  $\alg(f_0)$. We remark that in the proof of
  \cref{thm:adaptivePoly} the size of the base case is constant,
  whereas now the size of the base case is $\polylog(N)$; the
  larger base case is necessary, as a constant-size base case
  would destroy our probability of success. The probability of
  success achieved by $\alg(f_0)$ is, by construction, sufficient
  to take a union bound over $2^{\polylog N}$ applications of
  $\alg(f_0)$. Now we construct $\alg(f_{i+1})$ as the amplification
  of $\alg(f_i)$ using \cref{lem:obliviousAmplification} and
  parameter $\delta$. Define a sequence $g_i$ as 
  $$g_i =
  \begin{cases}
    n_b\ceil{16/\delta}, & i=0\\
    \floor{ g_{i-1}/(1-\delta) }, & i\ge 1 
  \end{cases}.$$
  We claim the following regarding our construction:
  \begin{clm}
    \label{clm:fikinductionagain}
    \begin{equation}
      f_i(k) \ge (k/n_b)^{1-\varepsilon} - 1 \text{ for all } k \le g_i. \label{eqn:fikinductionAGAIN}
    \end{equation}
  \end{clm}
  \begin{proof}
  We prove \cref{clm:fikinductionagain} by induction on $i$. \\
  For $i=0$, the base case of our induction,
  \eqref{eqn:fikinductionAGAIN} is true as
  $(k/n_b)^{1-\epsilon} - 1 \le O(1)$ for $k\le g_0$, whereas
  $\alg(f_0)$ achieves backlog $\Omega(\log\log\log N) >
  \omega(1)$.\\
  Now we assume \eqref{eqn:fikinductionAGAIN} for $f_i$, and aim to
  establish \eqref{eqn:fikinductionAGAIN} for $f_{i+1}$. 
  Note that, by design of $g_i$, if $k \le g_{i+1}$ then
  $\floor{k\cdot (1-\delta)} \le g_i$. Consider any $k\in
  [g_{i+1}]$. 
  First we deal with the trivial case where $k \le g_0$. Here, 
  $$f_{i+1}(k) \ge f_i(k) \ge \cdots \ge f_0(k) \ge (k/n_b)^{1-\varepsilon} -1.$$
  Now we consider $k \ge g_0$. 
  Since $f_{i+1}$ is the amplification of $f_i$ we have by \cref{lem:obliviousAmplification} that
  $$f_{i+1}(k) \ge (1-\delta)^2 f_i(\floor{(1-\delta)k}) + f_i(\ceil{\delta k}).$$
  By our inductive hypothesis, which applies as $\ceil{\delta k}\le g_i, \floor{k\cdot (1-\delta)} \le g_i$, we have
  $$f_{i+1}(k) \ge (1-\delta)^2 (\floor{(1-\delta)k/n_b}^{1-\varepsilon}-1) + \ceil{\delta k/n_b}^{1-\varepsilon} - 1. $$
  Dropping the floor and ceiling, incurring a $-1$ for dropping the floor, we have
  $$f_{i+1}(k) \ge (1-\delta)^2 (((1-\delta)k/n_b-1)^{1-\varepsilon}-1) + (\delta k/n_b)^{1-\varepsilon} - 1.$$
  Because $(x-1)^{1-\varepsilon} \ge x^{1-\varepsilon} -1$, as $x\mapsto x^{1-\varepsilon}$ is a sub-linear
  sub-additive function, we have 
  $$f_{i+1}(k) \ge (1-\delta)^2 (((1-\delta)k/n_b)^{1-\varepsilon}-2) + (\delta k/n_b)^{1-\varepsilon}-1.$$
  Moving the $(k/n_b)^{1-\varepsilon}$ to the front we have
  $$ f_{i+1}(k) \ge (k/n_b)^{1-\varepsilon} \cdot\left((1-\delta)^{3-\varepsilon} + \delta^{1-\varepsilon} - \frac{2(1-\delta)^2}{(k/n_b)^{1-\varepsilon}} \right) -1.$$
  Because $(1-\delta)^{3-\varepsilon} \ge 1-(3-\varepsilon)\delta$, a fact called Bernoulli's Identity, we have
  $$f_{i+1}(k) \ge (k/n_b)^{1-\varepsilon} \cdot\left(1-(3-\varepsilon)\delta + \delta^{1-\varepsilon} - \frac{2(1-\delta)^2}{(k/n_b)^{1-\varepsilon}} \right)-1.$$
  Of course $-2(1-\delta)^2 > -2$, so 
  $$f_{i+1}(k) \ge (k/n_b)^{1-\varepsilon} \cdot\left(1-(3-\varepsilon)\delta + \delta^{1-\varepsilon} - 2/(k/n_b)^{1-\varepsilon} \right) -1.$$
  Because $$\frac{-2}{(k/n_b)^{1-\varepsilon}} \ge \frac{-2}{(g_0/n_b)^{1-\varepsilon}} \ge -2(\delta/16)^{1-\varepsilon} \ge -\delta^{1-\varepsilon}/2,$$
  which follows from our choice of $g_0 = \ceil{8/\delta} n_b$ and the restriction
  $\varepsilon<1/2$, we have 
  $$f_{i+1}(k) \ge (k/n_b)^{1-\varepsilon} \cdot\left(1-(3-\varepsilon)\delta + \delta^{1-\varepsilon} - \delta^{1-\varepsilon}/2 \right)-1.$$
  Finally, combining terms we have
  $$f_{i+1}(k) \ge  (k/n_b)^{1-\varepsilon} \cdot\left(1-(3-\varepsilon)\delta + \delta^{1-\varepsilon}/2\right)-1. $$
  Because $\delta^{1-\varepsilon}$ dominates $\delta$ for
  sufficiently small $\delta$, there is a choice of
  $\delta=\Theta(1)$ such that 
  $$1-(3-\varepsilon)\delta + \delta^{1-\varepsilon}/2 \ge 1.$$ 
  Taking $\delta$ to be this small we have,
  $$f_{i+1}(k) \ge (k/n_b)^{1-\varepsilon}-1,$$
  completing the proof. 
  \end{proof}

  The sequence $g_i$ larger the sequence $g_i$ from the proof of
  \cref{thm:adaptivePoly} for the same value of $\delta$: it is
  defined by the same recurrence, except that the first term is
  $n_b$ times larger. Thus we have that $g_{i_*} \ge N$ for some
  $i_* \le O(\log N)$. Hence $\alg(f_{i_*})$ achieves backlog
  $$f_{i_*}(N) \ge (N/n_b)^{1-\varepsilon}-1.$$ Let $\varepsilon'
  = 2\varepsilon$. Of course $n_b \le O(N^\varepsilon)$, so 
  $$(N/n_b)^{1-\varepsilon}-1 \ge \Omega(N^{1-\varepsilon'}).$$

  Let the running time of $f_i(N)$ be $T_i(N)$. From the
  Amplification Lemma we have following recurrence bounding $T_i(N)$:
  \begin{align*}
    T_i(n) &\le 2^{\polylog(N)} \cdot T_{i-1}(\floor{(1-\delta)n}) + T_{i-1}(\ceil{\delta n}) \\
            &\le 2^{\polylog(N)}T_{i-1}(\floor{(1-\delta)n}).
  \end{align*}
  It follows that $\alg(f_{i_*})$, recalling that $i_* \le O(\log N)$, has running time
  $$T_{i_*}(n) \le (2^{\polylog(N)})^{O(\log N)} \le 2^{\polylog(N)}$$
  as desired.

  Now we analyze the probability that the construction fails. 
  Consider the recurrence $a_{i+1} = \alpha a_i + \beta, a_0 =
  \gamma$; the recurrence bounding failure probability is a
  special case of this. Expanding, we see that the recurrence
  solves to $a_k = \Theta(\alpha^{k-1})\beta + \alpha^k \gamma$.
  In our case we have 
  $$\alpha \le N, \beta = 2^{-\log^8 N}, \gamma \le 2^{-\Omega(\log^4 N)}.$$
  Hence the recurrence solves to 
  $$p_{i_*} \le 2^{-\polylog(N)},$$
  as desired.

\end{proof}





\bibliographystyle{plain}
\bibliography{paper}
\end{document}